\newtheorem{lemma}{\bf{Lemma}}
\newtheorem{theorem}{\bf{Theorem}}
\begin{document}

\title{\LARGE Secure Massive RIS aided Multicast with Uncertain CSI: Energy-Efficiency Maximization via Accelerated First-Order Algorithms}

\author{Zongze Li, Shuai Wang, Miaowen Wen, and Yik-Chung Wu
	
\thanks{Zongze Li and Yik-Chung Wu are with the Department of Electrical and
		Electronic Engineering, The University of Hong Kong, Hong Kong (e-mail:
		zzli@eee.hku.hk; ycwu@eee.hku.hk). Shuai Wang is with the Department of Electrical and Electronic Engineering, Southern University of Science and Technology, Shenzhen 518055, China (e-mail: wangs3@sustech.edu.cn). Miaowen Wen is with the School of Electronics and Information Engineering, South China University of Technology, Guangzhou 510640, China (e-mail: eemwwen@scut.edu.cn).}
}

\maketitle

\vspace{-0.8 cm} 
\begin{abstract}
Reconfigurable intelligent surface (RIS) has the potential to significantly enhance the network secure transmission performance by reconfiguring the wireless propagation environment.
However, due to the passive nature of eavesdroppers and the cascaded channel brought by the RIS, the eavesdroppers' channel state information is imperfectly obtained at the base station. Under the channel uncertainty, the optimal phase-shift, power allocation, and transmission rate design for secure transmission is currently unknown due to the difficulty of handling the probabilistic constraint with coupled variables. To fill this gap, this paper formulates a problem of energy-efficient secure transmission design while incorporating the probabilistic constraint. By transforming the probabilistic constraint and decoupling the variables, the secure energy efficiency maximization problem can be solved via alternatively executing concave-convex procedure and semidefinite relaxation technique. 
To scale the solution to massive antennas and reflecting elements scenario, an accelerated first-order algorithm with low complexity is further proposed.
Simulation results show that the proposed accelerated first-order algorithm achieves identical performance to the conventional method but saves at least two orders of magnitude in computation time. Moreover, the resultant RIS aided secure transmission significantly improves the energy efficiency compared to baseline schemes of random phase-shift, fixed phase-shift, and RIS ignoring CSI uncertainty. 
\end{abstract}


\begin{IEEEkeywords}
Energy efficiency, first-order algorithm, large-scale optimization, reconfigurable intelligent surface, outage probability, physical layer security. 
\end{IEEEkeywords}

\IEEEpeerreviewmaketitle

\section{Introduction}
With huge demand on transmission rate in the era of big data, energy consumption becomes a serious concern for future wireless networks, and energy efficiency (EE) is a key consideration in wireless system design. The recently introduced reconfigurable intelligent surface (RIS) emerges as a promising technology for improving the EE of wireless systems via reconfiguring the signal propagation environment~\cite{J_IRS_EE19Huang,J_alendos20reconfigurable,J_Alexa21DMA}. Due to the passive nature and programmability of the RIS, the power consumption and added thermal noise during reflection are extremely low. Accordingly, RIS only costs a small amount of energy but could significantly improve the quality-of-service of users who suffer from unfavourable propagation conditions. 
As a consequence, it has the potential for significantly improving EE and enabling green communications~\cite{J_Wu17Review5G}.

On the other hand, RIS can also provide a new level of physical layer security. Specifically, when the legitimate receivers and the eavesdroppers are in the same directions to the base station (BS), the channel responses of the legitimate receivers will be highly correlated with those of the eavesdroppers.
This makes traditional beamforming, which directs energy toward legitimate receivers, also benefits eavesdroppers.
Hence, it is difficult to guarantee the security with the use of beamforming only at the transceivers. 
Fortunately, the employment of the RIS provides more degrees of freedom for additional transmission links to the legitimate receivers while nulling the directions towards the eavesdroppers, thus reducing the information leakage~\cite{J_alexandropoulos20safeguarding}.

Pioneering works on RIS aided transmission with security consideration assume perfect knowledge of the eavesdroppers' channels and the eavesdroppers are treated as unscheduled active users in the network~\cite{J_Chen19IRS_Secure}.
Under this assumption, it is shown that secure transmission with RIS achieves a higher secrecy rate than the transmission with random phase-shift or fixed phase-shift matrix~\cite{J_Chen19IRS_Secure,J_Chu21IRSecure,J_Chu21IRS_perfectCSI}.
Although these results are encouraging, the assumption on perfect knowledge of the eavesdroppers' channels is too strong in practice, especially when there are multiple non-colluding eavesdroppers in the system. Even the eavesdroppers are unscheduled active users in the network, due to the cascaded channel brought by RIS, the channel state information (CSI) estimation and acquisition is more challenging than that in a conventional communication system~\cite{J_He20CascadeIFR}. Although channel estimation schemes tailored to the RIS system such as matrix quantization or Hadamard-matrix truncation have been recently proposed~\cite{J_YouZRui20_IRS}, assuming perfect CSI of eavesdroppers is still far from realistic.  
Due to the uncertainty in the eavesdroppers' CSI, the outage of secure transmission should be considered.

To ensure the outage probability is within tolerable level, the secure transmission design in this paper incorporates a probabilistic constraint, which unfortunately is challenging for further analysis. Moreover, since secure EE (defined as the ratio of the secrecy rate to the total power consumption)~\cite{J_Zheng18SecureEE} is used as the objective function, the transmission design belongs to the more challenging problem of fractional programs.
To handle the above challenges, this paper first transforms the intractable probabilistic constraint into a deterministic one by leveraging the exponential distribution property of the received signal power~\cite{B_S_Prin02}.
Then, the resultant problem can be decomposed into two subproblems and iteratively solved via block coordinate descent approach, where the interior-point method with concave-convex procedure (CCP)~\cite{J_Lipp16CCP} and semidefinite relaxation (SDR) with Gaussian randomization procedure~\cite{J_LuoSDR} are respectively used to handle each subproblem.

Although the above algorithm provides a workable solution, it does not scale well with the network size. 
Considering the massive antennas at the BS or large-scale reflecting elements in the RIS, both the interior-point method and SDR technique would be too computationally complex~\cite{Ben-TalA01}.  
To make the large-scale RIS aided secure transmission possible, accelerated first-order algorithms are further proposed.
In particular, to replace the interior-point method, an accelerated projected-gradient method and the path-following procedure (PFP)~\cite{J_Anstre01Anbf} are employed to obtain an iterative algorithm, which converges to at least a local optimal solution. On the other hand, to get around SDR technique, an accelerated Riemannian manifold algorithm is employed~\cite{B_AbsilP09} with convergence to a stationary point guaranteed.
It is proved that the overall first-order algorithm is guaranteed to converge, and the complexity order only scales linearly with the number of antennas at the BS/elements in the RIS. Furthermore, simulation results demonstrate that the proposed accelerated first-order algorithm reduces computation time by at least two orders of magnitude while achieving the same performance as the conventional method that alternatively executes the interior-point method and SDR technique. Finally, simulation results also show that the resultant transmission scheme achieves a significantly higher secure EE than the random phase-shift, fixed phase-shift, and RIS ignoring CSI uncertainty.

The rest of this paper is organized as follows. System model and the secure EE maximization problem are formulated in Section II. In Sections III and IV, a conventional method and an accelerated first-order method are respectively proposed for solving the optimization problem. Simulation results are presented in Section V. Finally, conclusion is drawn in Section VI.

$\mathit{Notation:}$ Column vectors and matrices are denoted by lowercase and uppercase boldface letters, respectively. Conjugate transpose, transpose, trace, the modulus of a scalar, and the $(i,j)^{th}$ element of matrix $\mathbf{X}$ are denoted by $(\cdot)^H$, $(\cdot)^T$, $\mathrm{Tr}(\cdot)$, $|\cdot|$ and $\mathbf{X}_{i,j}$, respectively. 
The mathematical expectation is denoted by $\mathbb{E}\{\cdot\}$.  $\mathrm{diag}\{x_1,\ldots,x_N\}$ denotes a diagonal matrix whose diagonal components are $x_1,\ldots, x_N$. The notations $[x]^+$ and $\mathrm{Pr}(\cdot)$ stand for $\max\{x,0\}$ and probability, respectively. The real part of a complex variable and the Hadamard product between two matrices are denoted by $\Re[\cdot]$ and $\circ$, respectively.
$\mathcal{CN}\left(0,a\right)$ denotes the circularly symmetric complex normal distribution with zero mean and variance $a$, and $\mathrm{Exp}(b)$ denotes the exponential distribution with mean $b$.

\section{System Model and Problem Formulation}
\begin{figure}[tb]
	\centering
	\includegraphics[scale=0.3]{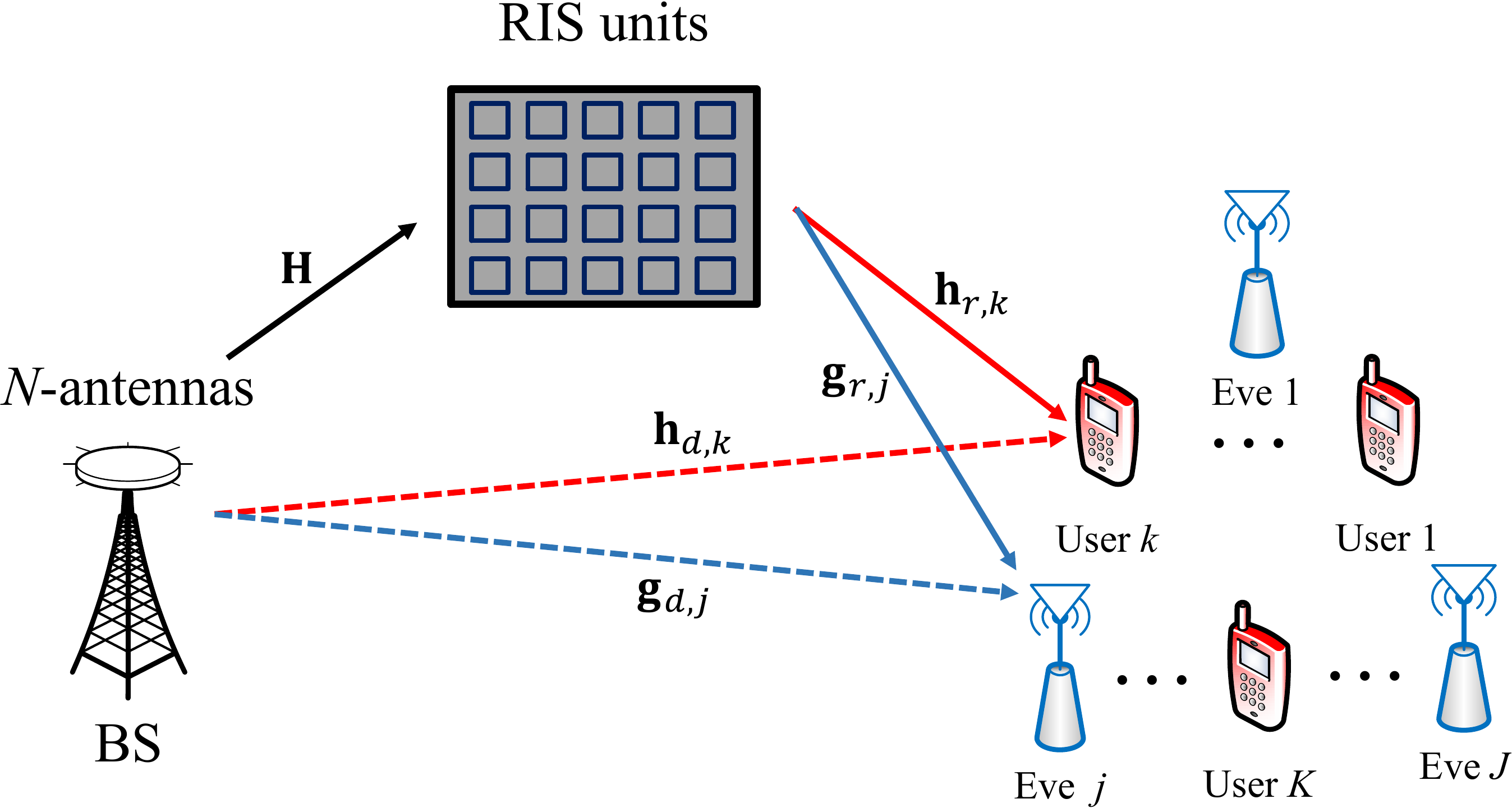}
	\caption{{RIS-aided secure multicast network with multiple users and  eavesdroppers.}}\label{fig:SystemModel}
\end{figure}
We consider a downlink secure multicast system with an $N$-antennas BS, one RIS with $M$ reflecting elements (controlled by communication-oriented software), $K$ single-antenna legitimate users, and $J$ passive single-antenna eavesdroppers (Eves). An example is shown in Fig.~\ref{fig:SystemModel}. The BS intends to multicast common data symbols to $K$ users.
All wireless channels experience quasi-static flat-fading and perturbed by additive white Gaussian noise~\cite{J_Archiyou20IRS_EE}. 
Let the channels from the BS to the RIS, from the RIS to user $k$, from the RIS to Eve $j$, from the BS to user $k$ and from the BS to Eve $j$ be respectively denoted by $\mathbf{H}\in \mathbb{C}^{M\times N}$, $\mathbf{h}_{r,k}\in \mathbb{C}^{M\times 1}$, $\mathbf{g}_{r,j}\in \mathbb{C}^{M\times 1}$, $\mathbf{h}_{d,k}\in\mathbb{C}^{N\times 1}$, and $\mathbf{g}_{d,j}\in\mathbb{C}^{N\times 1}$ with $k \in\{1,\ldots, K\}$ and $j \in\{1,\ldots, J\}$. The phase-shift coefficients of RIS are modeled as $ \mathbf{\Theta}=\mathrm{diag}\{\mathbf{e}\}\in\mathbb{C}^{M\times M}$ with $\mathbf{e}=\{e_m|e_m\in e^{i\theta_m}, \theta_m\in [0,2\pi), \forall m\}$, where $i^2 = -1$, if the phase is modeled as continuous variable. If an RIS with finite resolution is considered, each RIS element can only take one of the $L$ possible phase shifts. As a result, $\mathbf{e}$ is modeled as $\mathbf{e}=\{\left.e_m\right|e_m\in \{0, e^{i\frac{2\pi}{L}}, \ldots,e^{i\frac{2\pi(L-1)}{L}}\},\forall m\}$.

Let the signal transmitted from the BS to all users be $\mathbf{x}=\mathbf{w}s$, where $\mathbf{w}\in \mathbb{C}^{N\times 1}$ is the beamforming vector, and $s$ is the complex information symbol with $\mathbb{E}\{|s|^2\}=1$. 
Then, the received signals at user $k$ and Eve $j$ are respectively given by
\begin{equation}\label{eq:y_B}
y_k=(\sqrt{\alpha_{1}\alpha_{r,k}}\mathbf{h}_{r,k}^H\mathbf{\Theta}\mathbf{H}+\sqrt{\alpha_{d,k}}\mathbf{h}^H_{d,k})\mathbf{w}s+n_k, ~\forall k,
\end{equation}
\begin{equation}\label{eq:y_E}
y_j=(\sqrt{\alpha_{1}\alpha_{r,j}}\mathbf{{g}}_{r,j}^{H}\mathbf{\Theta}\mathbf{H}+\sqrt{\alpha_{d,j}}\mathbf{{g}}_{d,j}^{H})\mathbf{w}s+n_j, ~\forall j,
\end{equation}
where $\alpha_{1}$, $ \alpha_{r,k}$, $\alpha_{r,j}$, $\alpha_{d,k}$, and $\alpha_{d,j}$ are the path-loss coefficients for the links BS-RIS, RIS-user $k$, RIS-Eve $j$, BS-user $k$, and BS-Eve $j$, respectively. The receiver noises at user $k$ and Eve $j$ in~\eqref{eq:y_B} and~\eqref{eq:y_E} are given by 
$n_k\sim \mathcal{CN}(0,\sigma^2_k)$ and $n_j\sim \mathcal{CN}(0,\sigma^2_j)$, respectively.

Based on~\eqref{eq:y_B} and~\eqref{eq:y_E}, the achievable rates of the user $k$ and Eve $j$ are respectively given by $C_k=\log_2(1+|(\sqrt{\alpha_{1}\alpha_{r,k}}\mathbf{h}_{r,k}^H\mathbf{\Theta}\mathbf{H}+\sqrt{\alpha_{d,k}}\mathbf{h}^H_{d,k})\mathbf{w}|^2/\sigma^2_k)$ and $C_j=\log_2(1+|(\sqrt{\alpha_{1}\alpha_{r,j}}\mathbf{{g}}_{r,j}^{H}\mathbf{\Theta}\mathbf{H}+\sqrt{\alpha_{d,j}}\mathbf{{g}}_{d,j}^{H})\mathbf{w}|^2/\sigma^2_j)$.
For the $j^{th}$ Eve's channel, since the BS does not acquire its perfect value, the knowledge of $C_j$ is uncertain~\cite{J_ZZ20SecureProb}. Consequently, a secrecy outage event occurs at the BS when $C_j$ exceeds the redundancy rate of the user $k$, denoted by $D_{k,j}$, and the secrecy outage probability (SOP) of the user $k$ due to Eve $j$ is given by
\begin{equation}\label{eq:SOP_Cons}
\begin{split}
\mathrm{SOP}: \quad p^{k,j}_{so}
&=\mathrm{Pr}\left\{D_{k,j}< \log_2\left(1+|(\sqrt{\alpha_{1}\alpha_{r,j}}\mathbf{{g}}_{r,j}^{H}\mathbf{\Theta}\mathbf{H}+\sqrt{\alpha_{d,j}}\mathbf{{g}}_{d,j}^{H})\mathbf{w}|^2/\sigma^2_j\right)\right\}.
\end{split}
\end{equation}
Furthermore, assuming the non-collaborative eavesdropping model, in which Eves do not exchange their observations or outputs, the instantaneous secrecy rate at user $k$ 
is expressed as $\min\limits_{1\leq j\leq J}\left[\log_2\left(1+|(\sqrt{\alpha_{1}\alpha_{r,k}}\mathbf{h}_{r,k}^H\mathbf{\Theta}\mathbf{H}+\sqrt{\alpha_{d,k}}\mathbf{h}^H_{d,k})\mathbf{w}|^2/\sigma^2_k\right)-D_{k,j}\right]^+$,
which is the minimum over the secrecy rates achieved by the BS under wiretapping of all Eves. Notice that the achievable secrecy rate for multicast network is determined by the worst link~\cite{J_Liu14Gao_outage}. As a result, the achievable secrecy rate is given by
\begin{equation}\label{eq:secRate_sum}
\min\limits_{1\leq k \leq K,1\leq j\leq J}~\left[\log_2\left(1+|(\sqrt{\alpha_{1}\alpha_{r,k}}\mathbf{h}_{r,k}^H\mathbf{\Theta}\mathbf{H}+\sqrt{\alpha_{d,k}}\mathbf{h}^H_{d,k})\mathbf{w}|^2/\sigma^2_k\right)-D_{k,j}\right]^+,
\end{equation}
which is the minimum achievable secrecy rate of all users.

The energy consumption of the RIS-assisted downlink system constitutes three major parts: 1) the transmit power; 2) the hardware static power; and 3) the RIS power consumption. Mathematically, the total power consumption is given by~\cite{J_IRS_EE19Huang}
\begin{equation}\label{eq:total_power} \frac{1}{\eta}\mathrm{Tr}\left(\mathbf{w}\mathbf{w}^H\right)+  P_{a}+K P_c + MP_s,
\end{equation} 
where $\eta$ is the power amplifier efficiency, $\mathrm{Tr}(\mathbf{w}\mathbf{w}^H)$ is the transmit power due to beamforming at the BS, $P_a$, $P_c$ and $P_s$ are the hardware-dissipated power at the BS, the circuit power at each user, and the hardware-dissipated power at each reflecting element, respectively.

Our objective is to design an optimal secure transmission scheme to maximize the EE subject to the SOP constraint and transmit power budget. Since secure EE of the downlink network is defined as a ratio of the achievable secrecy rate to the total power consumption~\cite{J_Zheng18SecureEE}, by using~\eqref{eq:secRate_sum} and~\eqref{eq:total_power}, the secure EE  maximization problem is thus formulated as
\begin{subequations}\label{eq:Secrecy_Rate_Max}
\begin{align}
  \mathcal{P}0: \quad & \max_{\mathbf{w}, \mathbf{\Theta}, \{D_{k,j}\}}
     \frac{\min\limits_{1\leq k \leq K,1\leq j\leq J}\left[\log_2\left(1+\frac{|(\sqrt{\alpha_{1}\alpha_{r,k}}\mathbf{h}_{r,k}^H\mathbf{\Theta}\mathbf{H}+\sqrt{\alpha_{d,k}}\mathbf{h}^H_{d,k})\mathbf{w}|^2}{\sigma^2_k}\right)-D_{k,j}\right]^+}{\frac{1}{\eta}\mathrm{Tr}\left(\mathbf{w}\mathbf{w}^H\right)+ P_a+KP_c + MP_s}, \label{eq:max_SEE} \\
    \mathrm{s.t.}\quad
     &p^{k,j}_\mathrm{so}\leq\varepsilon_k, ~\forall k,j \label{eq:pso_cons_opt} \\
     & |\mathbf{\Theta}_{m,m}| = 1, ~\forall m,\label{cons:phase_IRS} \\
     & \mathrm{Tr}\left(\mathbf{w}\mathbf{w}^H\right)\leq P_\mathrm{max},
\end{align}
\end{subequations}
where $\varepsilon_k\in (0,1)$ is a predefined upper bound representing the maximum tolerable SOP for user $k$, and $P_\mathrm{max}$ is the maximum transmit power at the BS. The unit modulus constraint~\eqref{cons:phase_IRS} ensures that each reflecting element in the RIS does not change the amplitude of the signal, indicating 100\% reflection efficiency. 
Besides, the feasible set of $\mathbf{\Theta}$ can be either continuous phase-shift coefficients or discrete phase-shift coefficients. Since the solution to discrete phase-shift coefficients can be obtained from the continuous phase-shift case via quantization~\cite{J_Huang18IRSDis}, the algorithm derivations focus on the continuous phase-shift case of $\mathcal{P}0$.

Problem $\mathcal{P}0$ provides a general formulation to measure the RIS aided secure transmission performance.
Notice that if we set ${1}/{\eta}=0$, the denominator of the objective function becomes a constant. Therefore, $\mathcal{P}0$ can be employed to investigate not only the secure EE maximization but also the spectral efficiency maximization.

\section{Secure EE Maximization}
\subsection{Handling the Probabilistic Constraint  in $\mathcal{P}0$}
From~\eqref{eq:SOP_Cons}, it can be seen that the SOP is a probability of Rayleigh fading induced outage. Due to the passive and unauthorized nature of Eves, the BS only knows the statistical CSI of the channel from the BS to Eve $j$~\cite{J_DongHuiM20Secure}. 
By leveraging the exponential distribution property of the received signal power in the Rayleigh fading environment~\cite{B_S_Prin02}, a closed-form expression of the SOP can be derived and is given in the following theorem, which is proved in Appendix~\ref{appe:Derive_SOP}.
\begin{theorem}\label{the:CF_SOP_prob}
Supposing $\mathbf{g}_{r,j}\sim \mathcal{CN}(\mathbf{0},\mu_{r,j}^2\mathbf{I}_M)$ and $\mathbf{g}_{d,j}\sim \mathcal{CN}(\mathbf{0},\mu_{d,j}^2\mathbf{I}_N)$, a closed-form expression of the SOP is given by
\begin{equation}\label{eq:them1_SOP}
 p^{k,j}_\mathrm{so}=\exp\left(-\frac{2^{D_{k,j}}-1}{\kappa_{1,j}(\mathbf{\Theta}\mathbf{H}\mathbf{w})^H\mathbf{\Theta}\mathbf{H}\mathbf{w}+\kappa_{2,j}\mathbf{w}^H\mathbf{w}}\right),~~\forall k, j,
\end{equation}
where $\kappa_{1,j}=\alpha_{1}\alpha_{r,j}\mu_{r,j}^2/\sigma^2_j$ and $\kappa_{2,j}=\alpha_{d,j}\mu_{d,j}^2/\sigma^2_j$. 
\end{theorem}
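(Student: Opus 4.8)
The plan is to collapse the probabilistic constraint to the tail probability of a single exponentially distributed scalar. First I would invert the monotone logarithm inside \eqref{eq:SOP_Cons}: writing $Z_j := |(\sqrt{\alpha_1\alpha_{r,j}}\mathbf{g}_{r,j}^H\mathbf{\Theta}\mathbf{H}+\sqrt{\alpha_{d,j}}\mathbf{g}_{d,j}^H)\mathbf{w}|^2/\sigma_j^2$ for the normalized received power at Eve $j$, the event $\{D_{k,j}<\log_2(1+Z_j)\}$ is equivalent to $\{Z_j>2^{D_{k,j}}-1\}$. Hence $p^{k,j}_\mathrm{so}=\mathrm{Pr}\{Z_j>2^{D_{k,j}}-1\}$, and the problem reduces to identifying the distribution of $Z_j$.

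Next I would argue that the effective scalar channel $A_j:=\sqrt{\alpha_1\alpha_{r,j}}\mathbf{g}_{r,j}^H\mathbf{\Theta}\mathbf{H}\mathbf{w}+\sqrt{\alpha_{d,j}}\mathbf{g}_{d,j}^H\mathbf{w}$ is a zero-mean circularly symmetric complex Gaussian scalar. Each summand is a deterministic linear functional of a circularly symmetric complex Gaussian vector, so each summand is itself zero-mean complex Gaussian; since $\mathbf{g}_{r,j}$ and $\mathbf{g}_{d,j}$ are independent, their sum $A_j$ is again zero-mean complex Gaussian. Its variance follows from $\mathbb{E}\{\mathbf{g}_{r,j}\mathbf{g}_{r,j}^H\}=\mu_{r,j}^2\mathbf{I}_M$ and $\mathbb{E}\{\mathbf{g}_{d,j}\mathbf{g}_{d,j}^H\}=\mu_{d,j}^2\mathbf{I}_N$, which, together with independence (so the cross term vanishes), gives $\Omega_j:=\mathbb{E}\{|A_j|^2\}=\alpha_1\alpha_{r,j}\mu_{r,j}^2(\mathbf{\Theta}\mathbf{H}\mathbf{w})^H\mathbf{\Theta}\mathbf{H}\mathbf{w}+\alpha_{d,j}\mu_{d,j}^2\mathbf{w}^H\mathbf{w}$ via the quadratic-form identity $\mathbb{E}\{\mathbf{g}^H\mathbf{a}\mathbf{a}^H\mathbf{g}\}=\mathbf{a}^H\mathbb{E}\{\mathbf{g}\mathbf{g}^H\}\mathbf{a}$.

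It then follows that $|A_j|^2$, being the squared modulus of a zero-mean complex Gaussian, is exponentially distributed, so $Z_j=|A_j|^2/\sigma_j^2\sim\mathrm{Exp}(\Omega_j/\sigma_j^2)$. Evaluating the complementary CDF of the exponential law at $2^{D_{k,j}}-1$ yields $p^{k,j}_\mathrm{so}=\exp\big(-(2^{D_{k,j}}-1)\sigma_j^2/\Omega_j\big)$, and substituting $\kappa_{1,j}=\alpha_1\alpha_{r,j}\mu_{r,j}^2/\sigma_j^2$ and $\kappa_{2,j}=\alpha_{d,j}\mu_{d,j}^2/\sigma_j^2$ reproduces exactly \eqref{eq:them1_SOP}.

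I expect the only delicate point to be the Gaussianity-and-variance step: one must check that the conjugation in $\mathbf{g}^H$ preserves circular symmetry (so $\mathbf{g}^H\mathbf{a}$ is genuinely complex Gaussian with the stated variance), and that the cross term between the RIS-reflected and direct contributions vanishes in expectation by independence, so that the two variances simply add. Once $A_j$ is established as $\mathcal{CN}(0,\Omega_j)$, the remainder is a routine exponential tail evaluation.
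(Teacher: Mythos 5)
Your proposal is correct and follows essentially the same route as the paper's proof: invert the logarithm to get a tail event, establish that the combined RIS-plus-direct channel seen through $\mathbf{w}$ is zero-mean circularly symmetric complex Gaussian with variance $\alpha_1\alpha_{r,j}\mu_{r,j}^2(\mathbf{\Theta}\mathbf{H}\mathbf{w})^H\mathbf{\Theta}\mathbf{H}\mathbf{w}+\alpha_{d,j}\mu_{d,j}^2\mathbf{w}^H\mathbf{w}$, and evaluate the exponential tail. The only (cosmetic) difference is that you work directly with the scalar $A_j=\mathbf{\Omega}\mathbf{w}$, which makes the exponentiality of $|A_j|^2$ immediate, whereas the paper first derives the distribution of the vector $\mathbf{\Omega}$ and then cites the exponential-power property of Rayleigh fading for the quadratic form $\mathbf{w}^H\mathbf{\Omega}^H\mathbf{\Omega}\mathbf{w}$.
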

\noindent Based on Theorem~\ref{the:CF_SOP_prob}, the SOP constraint~\eqref{eq:pso_cons_opt} can be expressed as
\begin{equation}\label{ineq:D_j_SOP}
D_{k,j}\geq \log_2\left(1+{(\kappa_{1,j}(\mathbf{\Theta}\mathbf{H}\mathbf{w})^H\mathbf{\Theta}\mathbf{H}\mathbf{w}+\kappa_{2,j}\mathbf{w}^H\mathbf{w})}\ln\varepsilon_k^{-1}\right), ~~\forall k,j.
\end{equation}

By virtue of~\eqref{ineq:D_j_SOP}, $\mathcal{P}0$ can be equivalently transformed into 
\begin{subequations}
	\begin{align}
	\mathcal{P}1: \quad &  \max_{\mathbf{w}, \mathbf{\Theta}, \{D_{k,j}\}}
     \frac{\min\limits_{1\leq k \leq K,1\leq j\leq J}\left[\log_2\left(1+\frac{|(\sqrt{\alpha_{1}\alpha_{r,k}}\mathbf{h}_{r,k}^H\mathbf{\Theta}\mathbf{H}+\sqrt{\alpha_{d,k}}\mathbf{h}^H_{d,k})\mathbf{w}|^2}{\sigma^2_k}\right)-D_{k,j}\right]^+}{\frac{1}{\eta}\mathrm{Tr}\left(\mathbf{w}\mathbf{w}^H\right)+ P_a+KP_c + MP_s}, \label{obj:EE_Se_determine}\\
	\mathrm{s.t.}\quad
	& D_{k,j}\geq \log_2\left(1+{(\kappa_{1,j}(\mathbf{\Theta}\mathbf{H}\mathbf{w})^H\mathbf{\Theta}\mathbf{H}\mathbf{w}+\kappa_{2,j}\mathbf{w}^H\mathbf{w})}\ln\varepsilon_k^{-1} \right), ~~\forall k,j, \label{eq: clf_SOP_trace}\\
	& |\mathbf{\Theta}_{m,m}| = 1, \quad \forall m,\\
	& \mathrm{Tr}\left(\mathbf{w}\mathbf{w}^H\right)\leq P_\mathrm{max}.
	\end{align}
\end{subequations}
Notice that in~\eqref{obj:EE_Se_determine}, decreasing $D_{k,j}$ would lead to a larger value of the objective function. Hence, the optimal value of $D_{k,j}$ is obtained when~\eqref{eq: clf_SOP_trace} reaches equality and is expressed as 
\begin{equation}
D^*_{k,j}= \log_2\left(1+({\kappa_{1,j}(\mathbf{\Theta}\mathbf{H}\mathbf{w})^H\mathbf{\Theta}\mathbf{H}\mathbf{w}+\kappa_{2,j}\mathbf{w}^H\mathbf{w}})\ln\varepsilon_k^{-1}\right), ~~\forall k,j.
\end{equation}
Putting $D^*_{k,j}$ into the objective function of $\mathcal{P}1$, we have 
\begin{equation}\label{eq:obj_temp1}
\frac{\min\limits_{1\leq k \leq K,1\leq j\leq J}\left[\log_2\left(\frac{1+{|(\sqrt{\alpha_{1}\alpha_{r,k}}\mathbf{h}_{r,k}^H\mathbf{\Theta}\mathbf{H}+\sqrt{\alpha_{d,k}}\mathbf{h}^H_{d,k})\mathbf{w}|^2}/{\sigma^2_k}}{1+(\kappa_{1,j}(\mathbf{\Theta}\mathbf{H}\mathbf{w})^H\mathbf{\Theta}\mathbf{H}\mathbf{w}+\kappa_{2,j}\mathbf{w}^H\mathbf{w})\ln\varepsilon_k^{-1}}\right)\right]^+}{\frac{1}{\eta}\mathrm{Tr}\left(\mathbf{w}\mathbf{w}^H\right)+ P_a+KP_c + MP_s}.
\end{equation}
It is observed that the denominator of~\eqref{eq:obj_temp1} does not depend on $k$ and $j$. As a result,~\eqref{eq:obj_temp1} is equivalent to
\begin{equation}\label{obj:final_EE_secure}
\min\limits_{1\leq k \leq K,1\leq j\leq J}\left \{\frac{\left[\log_2\left(\frac{1+{|(\sqrt{\alpha_{1}\alpha_{r,k}}\mathbf{h}_{r,k}^H\mathbf{\Theta}\mathbf{H}+\sqrt{\alpha_{d,k}}\mathbf{h}^H_{d,k})\mathbf{w}|^2}/{\sigma^2_k}}{1+(\kappa_{1,j}(\mathbf{\Theta}\mathbf{H}\mathbf{w})^H\mathbf{\Theta}\mathbf{H}\mathbf{w}+\kappa_{2,j}\mathbf{w}^H\mathbf{w})\ln\varepsilon_k^{-1}}\right)\right]^+}{\frac{1}{\eta}\mathrm{Tr}\left(\mathbf{w}\mathbf{w}^H\right)+ P_a+KP_c + MP_s}\right\}.
\end{equation}
Hence, $\mathcal{P}1$ is equivalently transformed into
\begin{subequations}
	\begin{align}
	\mathcal{P}2: \quad & \max_{\mathbf{w},\mathbf{\Theta}}~
	\min\limits_{1\leq k\leq K,1\leq j\leq J}\left \{~\frac{\left[\log_2\left(\frac{1+{|(\sqrt{\alpha_{1}\alpha_{r,k}}\mathbf{h}_{r,k}^H\mathbf{\Theta}\mathbf{H}+\sqrt{\alpha_{d,k}}\mathbf{h}^H_{d,k})\mathbf{w}|^2}/{\sigma^2_k}}{1+(\kappa_{1,j}(\mathbf{\Theta}\mathbf{H}\mathbf{w})^H\mathbf{\Theta}\mathbf{H}\mathbf{w}+\kappa_{2,j}\mathbf{w}^H\mathbf{w})\ln\varepsilon_k^{-1}}\right)\right]^+}{\frac{1}{\eta}\mathrm{Tr}\left(\mathbf{w}\mathbf{w}^H\right)+ P_a+KP_c + MP_s}\right\}, \label{obj:secure_EE_P2} \\
	\mathrm{s.t.}\quad
	& |\mathbf{\Theta}_{m,m}| = 1, \quad  \forall m, \label{cons: theta_modulus} \\
	& \mathrm{Tr}\left(\mathbf{w}\mathbf{w}^H\right)\leq P_\mathrm{max}. \label{cons:transm_power}
	\end{align}
\end{subequations}
Notice that the inner objective function of $\mathcal{P}2$ only depends on one $k$ or $j$, and $\{\kappa_{1,j},\kappa_{2,j}\}_{j=1}^J$ are independent of $\{\mathbf{w},\mathbf{\Theta}\}$.
Furthermore, parameters $\{\alpha_{r,k},\alpha_{d,k},\sigma_k,\varepsilon_k\}_{k=1}^K$ are independent of $\{\mathbf{w},\mathbf{\Theta}\}$. Therefore, the operations of maximization and minimization in $\mathcal{P}2$ can be interchanged, and we can solve $\mathcal{P}2$ by separately solving $KJ$ independent maximization problems and selecting the minimum value. Here, the subproblem of user $k$ with respect to Eve $j$ is expressed as 
\begin{subequations}
	\begin{align}
	\mathcal{P}2^{[k,j]}: \quad & \max_{\mathbf{w},\mathbf{\Theta}}~	
	~\frac{\left[\log_2\left(\frac{1+{|(\sqrt{\alpha_{1}\alpha_{r,k}}\mathbf{h}_{r,k}^H\mathbf{\Theta}\mathbf{H}+\sqrt{\alpha_{d,k}}\mathbf{h}^H_{d,k})\mathbf{w}|^2}/{\sigma^2_k}}{1+(\kappa_{1,j}(\mathbf{\Theta}\mathbf{H}\mathbf{w})^H\mathbf{\Theta}\mathbf{H}\mathbf{w}+\kappa_{2,j}\mathbf{w}^H\mathbf{w})\ln\varepsilon_k^{-1}}\right)\right]^+}{\frac{1}{\eta}\mathrm{Tr}\left(\mathbf{w}\mathbf{w}^H\right)+ P_a+KP_c + MP_s}, \label{obj:P1_j_parallel}\\
	\mathrm{s.t.}\quad
	& |\mathbf{\Theta}_{m,m}| = 1, \quad  \forall m, \label{cons:modul_thetaP1} \\
	& \mathrm{Tr}\left(\mathbf{w}\mathbf{w}^H\right)\leq P_\mathrm{max}. \label{cons:beamform_power}
	\end{align}
\end{subequations}
Compared to the spectral efficiency maximization problem~\cite{J_Yu20IRS_SE,J_ZhangR_SecureIRS19,J_20IRS_Secure}, the objective of EE maximization problem~\eqref{obj:P1_j_parallel} is defined as the ratio of the spectral efficiency to the total power consumption. Since the spectral efficiency maximization is already a non-concave problem, the EE maximization brings extra non-concave fractional form into the maximization problem, leading to a more challenging class of fractional programming.

In general, if the numerator of the fraction is concave and the denominator is convex (known as concave-convex form), one can employ the quadratic transform~\cite{J_Yuwei18FP} to convert the concave-convex fractional program into a sequence of concave subproblems. 
More specifically, consider a nonnegative concave function $X(\mathbf{w})$, a positive convex function $Y(\mathbf{w})$ and a maximization problem 
$\max\limits_{\mathbf{w}} \frac{X(\mathbf{w})}{Y(\mathbf{w})}$. Applying quadratic
transform leads to a sequence of iterative subproblems, with the $l^{th}$ subproblem expressed as
\begin{equation}\label{opt:general_FP_P3}
\max_{\mathbf{w}}~~
2z^{(l)}\sqrt{X(\mathbf{w})}-(z^{(l)})^2Y(\mathbf{w}), 
\end{equation}
where 
$z^{(l)}=\frac{\sqrt{X(\mathbf{w}^{(l-1)})}}{Y(\mathbf{w}^{(l-1)})}$ with $\mathbf{w}^{(l-1)}$ being the optimal solution of~\eqref{opt:general_FP_P3} at the $(l-1)^{th}$ iteration. 
Since~\eqref{opt:general_FP_P3} is a strongly concave problem, it can be readily solved by numerical convex tools, such as CVX.

Notice that the denominator of the objective function of $\mathcal{P}2^{[k,j]}$ is convex. So the challenge of solving $\mathcal{P}2^{[k,j]}$ comes from the non-concavity of the numerator in~\eqref{obj:P1_j_parallel}. A general framework for concavifying the non-concave term is the 
successive concave approximation (SCA). 
However, depending on the specific non-concave functions, different procedures are needed, and not all procedures can be easily executed. In this work, two procedures are adopted to handle two different forms of non-concave functions:
\begin{itemize}
\item If $X(\mathbf{w})$ can be equivalently rewritten as $X(\mathbf{w})=X_1(\mathbf{w})-X_2(\mathbf{w})$, where both $X_1(\mathbf{w})$ and $X_2(\mathbf{w})$ are concave functions, the concave-convex procedure (CCP)~\cite{J_Lipp16CCP} can be employed to concavify $X(\mathbf{w})$ at a feasible point $\mathbf{w}^{(t)}$ as
\begin{equation}\label{eq:CCP_prere}
\hat{X}(\mathbf{w};\mathbf{w}^{(t)})=X_1(\mathbf{w})-\left(X_2(\mathbf{w}^{(t)})+\nabla_{\mathbf{w}}X_2(\mathbf{w}^{(t)})^H(\mathbf{w}-\mathbf{w}^{(t)})\right).
\end{equation}
\item If either $X_1(\mathbf{w})$ or $X_2(\mathbf{w})$ is nonconcave, the CCP is not applicable and we need to construct a surrogate function to lower bound $X(\mathbf{w})$. To be specific, the path-following procedure (PFP)~\cite{J_Anstre01Anbf} is adopted to concavify $X_1(\mathbf{w})$ as $\hat{X}^{(t)}_1(\mathbf{w})$ and convexify $X_2(\mathbf{w})$ as $\hat{X}^{(t)}_2(\mathbf{w})$, giving a lower bound of $X(\mathbf{w})$ as:
\begin{equation}\label{eq:Path-foPre}
 X(\mathbf{w})\geq \hat{X}^{(t)}_1(\mathbf{w})-\hat{X}^{(t)}_2(\mathbf{w}),
\end{equation}
where the equality holds at $\mathbf{w}=\mathbf{w}^{(t)}$.
\end{itemize}

\subsection{Conventional Method to Solve $\mathcal{P}2^{[k,j]}$}
Noticing that the constraints in $\mathcal{P}2^{[k,j]}$ are decoupled when either $\mathbf{w}$ or $\mathbf{\Theta}$ is fixed, $\mathbf{w}$ and $\mathbf{\Theta}$ can be updated under the alternating maximization (AM) framework.  
When $\mathbf{\Theta}$ is fixed, the subproblem of $\mathcal{P}2^{[k,j]}$ for updating $\mathbf{w}$ is 
	\begin{align}\label{obj:Determini_opt}
	\mathcal{D}1: \quad  \max_{\mathbf{w}}~
~\frac{[f\left(\mathbf{w}\right)]^+}{\frac{1}{\eta}\mathrm{Tr}\left(\mathbf{w}\mathbf{w}^H\right)+ P_a+K P_c + MP_s}, 
~~~	\mathrm{s.t.}~
\mathrm{Tr}\left(\mathbf{w}\mathbf{w}^H\right)\leq P_\mathrm{max},
	\end{align}
where $f\left(\mathbf{w}\right)$ is given by
\begin{equation}\label{eq:f_objective_nu}
f\left(\mathbf{w}\right)=\log_2\left(\frac{1+{|(\sqrt{\alpha_{1}\alpha_{r,k}}\mathbf{h}_{r,k}^H\mathbf{\Theta}\mathbf{H}+\sqrt{\alpha_{d,k}}\mathbf{h}^H_{d,k})\mathbf{w}|^2}/{\sigma^2_k}}{1+(\kappa_{1,j}(\mathbf{\Theta}\mathbf{H}\mathbf{w})^H\mathbf{\Theta}\mathbf{H}\mathbf{w}+\kappa_{2,j}\mathbf{w}^H\mathbf{w})\ln\varepsilon_k^{-1}}\right).
\end{equation}
Our aim is to transform $[f\left(\mathbf{w}\right)]^+$ into a concave form. In particular, by introducing an auxiliary variable $\mathbf{W}=\mathbf{w}\mathbf{w}^H\in \mathbb{C}^{N\times N} $ with an additional rank constraint $\mathrm{rank}(\mathbf{W})=1$, $f\left(\mathbf{w}\right)$ can be rewritten as
\begin{align}\label{eq:DC_concavify}
F(\mathbf{W})=&\underbrace{\log_2\left(1+{(\sqrt{\alpha_{1}\alpha_{r,k}}\mathbf{h}_{r,k}^H\mathbf{\Theta}\mathbf{H}+\sqrt{\alpha_{d,k}}\mathbf{h}^H_{d,k})\mathbf{W}(\sqrt{\alpha_{1}\alpha_{r,k}}\mathbf{h}_{r,k}^H\mathbf{\Theta}\mathbf{H}+\sqrt{\alpha_{d,k}}\mathbf{h}^H_{d,k})^H}/{\sigma^2_k}\right)}_{:=F_1(\mathbf{W})} \nonumber \\
&-\underbrace{\log_2\left(1+(\kappa_{1,j}\mathrm{Tr}(\mathbf{\Theta}\mathbf{H}\mathbf{W}(\mathbf{\Theta}\mathbf{H})^H)+\kappa_{2,j}\mathrm{Tr}(\mathbf{W}))\ln\varepsilon_k^{-1}\right)}_{:=F_2(\mathbf{W})},
\end{align}
which is a difference of two concave functions. Based on CCP method~\eqref{eq:CCP_prere}, $F(\mathbf{W})$ can be locally concavified  at a feasible point $\mathbf{W}^{(n)}$: 
\begin{align}\label{eq: der_trace_CCP}
\hat{F}(\mathbf{W};\mathbf{W}^{(n)})
=&F_1(\mathbf{W})-F_2(\mathbf{W}^{(n)}) \nonumber \\
&-\frac{\left(\kappa_{1,j}\mathrm{Tr}\left(\mathbf{\Theta}\mathbf{H}
	\left(\mathbf{W}-\mathbf{W}^{(n)}\right)(\mathbf{\Theta}\mathbf{H})^H\right)+ \kappa_{2,j}\mathrm{Tr}(\mathbf{W}-\mathbf{W}^{(n)})
\right)\ln\varepsilon_k^{-1}}{\left(1+\left(\kappa_{1,j}\mathrm{Tr}(\mathbf{\Theta}\mathbf{H}\mathbf{W}^{(n)}(\mathbf{\Theta}\mathbf{H})^H)+\kappa_{2,j}\mathrm{Tr}(\mathbf{W}^{(n)})\right)\ln\varepsilon_k^{-1}\right)\ln 2}.
\end{align}
With the help of $\mathbf{W}$ and $\hat{F}(\mathbf{W};\mathbf{W}^{(n)})$, $\mathcal{D}1$ can be iteratively replaced by a sequence of subproblems with the $(n+1)^{th}$ subproblem being 
	\begin{align}\label{opt:D_W_Dinkel_nonCon}
\max_{\mathbf{W}\succeq \mathbf{0}}~
	~\frac{[\hat{F}(\mathbf{W};\mathbf{W}^{(n)})]^+}{\frac{1}{\eta}\mathrm{Tr}\left(\mathbf{W}\right)+ P_a+KP_c + MP_s},
~~\mathrm{s.t.}~
  \mathrm{Tr}\left(\mathbf{W}\right)\leq P_\mathrm{max},~ \mathrm{rank}(\mathbf{W})=1.
	\end{align}
Since $\hat{F}(\mathbf{W};\mathbf{W}^{(n)})$ is concave on $\mathbf{W}$ and pointwise maximum operation preserves concavity~\cite{Cov_Opt90}, $[\hat{F}(\mathbf{W};\mathbf{W}^{(n)})]^+$ is a concave function, leading to a concave-convex form of objective function in~\eqref{opt:D_W_Dinkel_nonCon}. Accordingly, for a fixed $n$, the 
quadratic transform method~\cite{J_Yuwei18FP} can be further applied to  convert~\eqref{opt:D_W_Dinkel_nonCon} into a sequence of subproblems, with the $l^{th}$ subproblem expressed as
\begin{subequations}\label{opt:D1_w_CCP}
	\begin{align}
	\max_{\mathbf{W}\succeq \mathbf{0}}~~
	&~2y^{(l)}\sqrt{[\hat{F}(\mathbf{W};\mathbf{W}^{(n)})]^+}-(y^{(l)})^2\left(\frac{1}{\eta}\mathrm{Tr}\left(\mathbf{W}\right)+ P_a+KP_c + MP_s\right), \\
	\mathrm{s.t.}\quad
	&  \mathrm{Tr}\left(\mathbf{W}\right)\leq P_\mathrm{max},~
	\mathrm{rank}(\mathbf{W})=1,\label{ineq:SDR_D4}
	\end{align}
\end{subequations}
where
$y^{(l)}$ is defined in Algorithm~\ref{alg:conventional_D1}. To handle the non-convex rank constraint,~\eqref{opt:D1_w_CCP} can be relaxed by dropping the rank constraint, and the relaxed problem is given by 
\begin{equation}\label{opt:final_IPM_W}
\max_{\mathbf{W}\succeq \mathbf{0}}~
2y^{(l)}\sqrt{[\hat{F}(\mathbf{W};\mathbf{W}^{(n)})]^+}\!-\!(y^{(l)})^2\left(\frac{\mathrm{Tr}\left(\mathbf{W}\right)}{\eta}\!+\! P_a\!+\!KP_c + MP_s\right)\!, ~\mathrm{s.t.}~
\mathrm{Tr}\left(\mathbf{W}\right)\leq P_\mathrm{max},
\end{equation}
which is a semidefinite programming and directly solved via the interior-point method with the complexity order of $\mathcal{O}(N^3)$~\cite{J_PA10IPM_Bok}.
The property of the solution to~\eqref{opt:final_IPM_W} is revealed by the following theorem, and the proof is delegated to Appendix~\ref{Rank:SDR_1}.
\begin{theorem}\label{tem:FP_CCP_def}
The optimal solution of~\eqref{opt:final_IPM_W} is always rank-one.
\end{theorem}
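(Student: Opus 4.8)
The plan is to pass to the KKT system of~\eqref{opt:final_IPM_W} and read off the rank of the optimizer from the structure of the dual certificate. First I would record that~\eqref{opt:final_IPM_W} is a genuine convex program: $\hat{F}(\mathbf{W};\mathbf{W}^{(n)})$ in~\eqref{eq: der_trace_CCP} is concave in $\mathbf{W}$ by construction, the map $\sqrt{[\,\cdot\,]^+}$ preserves concavity (pointwise maximum with $0$ followed by a concave nondecreasing square root), and the subtracted power term is affine, while the feasible set $\{\mathbf{W}\succeq\mathbf{0},\,\mathrm{Tr}(\mathbf{W})\le P_\mathrm{max}\}$ is convex. Slater's condition holds (take $\mathbf{W}=\tfrac{P_\mathrm{max}}{2N}\mathbf{I}$), so strong duality holds and the KKT conditions are necessary and sufficient. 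I would work in the nondegenerate regime $\hat{F}(\mathbf{W}^\star)>0$, where the objective is differentiable; the boundary case $\hat{F}\le 0$ kills the secrecy term and collapses~\eqref{opt:final_IPM_W} to trace minimization, whose optimizer carries no transmit power and is excluded as trivial.

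Next I would form the Lagrangian $L=g(\mathbf{W})+\mathrm{Tr}(\mathbf{Z}\mathbf{W})-\lambda(\mathrm{Tr}(\mathbf{W})-P_\mathrm{max})$ with dual variables $\mathbf{Z}\succeq\mathbf{0}$ and $\lambda\ge 0$, and retain the complementary-slackness identity $\mathbf{Z}\mathbf{W}^\star=\mathbf{0}$. The essential structural fact from~\eqref{eq:DC_concavify}--\eqref{eq: der_trace_CCP} is that $\hat{F}$ depends on $\mathbf{W}$ only through the single rank-one quadratic form inside the logarithm, $\mathbf{a}^H\mathbf{W}\mathbf{a}$ with $\mathbf{a}^H=\sqrt{\alpha_{1}\alpha_{r,k}}\mathbf{h}_{r,k}^H\mathbf{\Theta}\mathbf{H}+\sqrt{\alpha_{d,k}}\mathbf{h}^H_{d,k}$, plus terms that are \emph{linear} in $\mathbf{W}$ (the CCP correction and the power cost). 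Differentiating and rearranging the stationarity condition $\nabla_{\mathbf{W}}g+\mathbf{Z}-\lambda\mathbf{I}=\mathbf{0}$ then yields
\begin{equation}
\mathbf{Z}=\underbrace{\tfrac{y^{(l)}}{\sqrt{\hat{F}}}\,\mathbf{B}+\Big(\tfrac{(y^{(l)})^2}{\eta}+\lambda\Big)\mathbf{I}}_{:=\,\mathbf{\Phi}}-\beta\,\mathbf{a}\mathbf{a}^H,
\end{equation}
where $\beta=\tfrac{y^{(l)}}{\sqrt{\hat{F}}}\cdot\tfrac{1/\sigma_k^2}{(1+\mathbf{a}^H\mathbf{W}^\star\mathbf{a}/\sigma_k^2)\ln 2}>0$ comes from the derivative of the logarithm, and $\mathbf{B}=\tfrac{\ln\varepsilon_k^{-1}}{c_n}\big(\kappa_{1,j}\mathbf{H}^H\mathbf{\Theta}^H\mathbf{\Theta}\mathbf{H}+\kappa_{2,j}\mathbf{I}\big)$ is the coefficient matrix of the linear part, with $c_n>0$ the (positive) constant denominator in~\eqref{eq: der_trace_CCP}.

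The key observation is then $\mathbf{\Phi}\succ\mathbf{0}$. Indeed $\kappa_{2,j}>0$ and $\ln\varepsilon_k^{-1}>0$ (since $\varepsilon_k\in(0,1)$) force $\mathbf{B}\succeq\tfrac{\ln\varepsilon_k^{-1}\kappa_{2,j}}{c_n}\mathbf{I}\succ\mathbf{0}$, so $\mathbf{\Phi}$ is positive definite of full rank $N$ — and this persists even in the spectral-efficiency case $1/\eta=0$. Since $\mathbf{a}\mathbf{a}^H$ has rank one, the rank perturbation bound gives $\mathrm{rank}(\mathbf{Z})\ge\mathrm{rank}(\mathbf{\Phi})-1=N-1$. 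Combining with $\mathbf{Z}\mathbf{W}^\star=\mathbf{0}$, the range of $\mathbf{W}^\star$ lies in the null space of $\mathbf{Z}$, whence $\mathrm{rank}(\mathbf{W}^\star)\le N-\mathrm{rank}(\mathbf{Z})\le 1$. Finally $\mathbf{W}^\star\ne\mathbf{0}$ in the regime $\hat{F}(\mathbf{W}^\star)>0$ with positive objective, so $\mathrm{rank}(\mathbf{W}^\star)=1$, and the dropped rank constraint in~\eqref{opt:D1_w_CCP} is automatically satisfied.

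I expect the main obstacle to be the two differentiability/nondegeneracy issues surrounding the outer $\sqrt{[\,\cdot\,]^+}$: one must justify working at a point with $\hat{F}(\mathbf{W}^\star)>0$ (so the square root is smooth and $y^{(l)}>0$), and one must confirm $\mathbf{W}^\star\ne\mathbf{0}$ so that ``rank at most one'' sharpens to ``rank exactly one.'' The algebraic heart of the argument, namely that $\mathbf{\Phi}$ is strictly positive definite, is the real crux and hinges on the fact that the linear penalty matrix $\mathbf{B}$ inherits a strictly positive multiple of $\mathbf{I}$ from the direct-link coefficient $\kappa_{2,j}$; without that term one would need $1/\eta>0$ to supply full rank, and the clean rank-one conclusion could otherwise degrade.
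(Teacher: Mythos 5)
Your proposal is correct and follows essentially the same route as the paper's proof in Appendix~\ref{Rank:SDR_1}: both form the KKT system of the relaxed SDP, invoke complementary slackness, and exploit the fact that stationarity expresses the dual variable as a positive definite matrix (your $\mathbf{\Phi}$ is exactly the paper's $\mathbf{\Lambda}$ in~\eqref{eq:matrix_lamda}) minus a positive multiple of the rank-one channel outer product. The only cosmetic difference is the finishing step—you bound $\mathrm{rank}(\mathbf{Z})\geq N-1$ by rank subadditivity and count the nullity, whereas the paper multiplies the slackness identity through $\mathbf{\Lambda}^{-1}$ and bounds $\mathrm{rank}(\mathbf{W}^*)$ by the rank of a product—and both arguments conclude by excluding the trivial solution $\mathbf{W}^*=\mathbf{0}$.
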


The entire procedure for solving $\mathcal{D}1$  is summarized in Algorithm~\ref{alg:conventional_D1} with outer iterations over $n$ and inner iterations over $l$. Based on Theorem~\ref{tem:FP_CCP_def}, it is known that there is no performance loss after rank relaxation, and the optimal solution to~\eqref{opt:final_IPM_W} is also the optimal solution to~\eqref{opt:D1_w_CCP}. Moreover, since~\eqref{opt:D1_w_CCP} is the quadratic transformation of~\eqref{opt:D_W_Dinkel_nonCon}, the iteration over $l$ is guaranteed to converge to a local optimal solution to~\eqref{opt:D_W_Dinkel_nonCon}~\cite{J_Yuwei18FP}.
Besides, the iteration over $n$ with CCP method is guaranteed to converge to a stationary solution of $\mathcal{D}1$~\cite{J_Lipp16CCP}.
\begin{algorithm}[H]
	\caption{Conventional method for solving $\mathcal{D}1$} 
	\begin{algorithmic}[1]\label{alg:conventional_D1}
		\STATE Initialize $\mathbf{W}^{(0)}$ and set $n:=0$.
		\REPEAT
		\STATE Initialize $y^{(0)}$ and set $l:=0$.
		\REPEAT
		\STATE  Solve problem~\eqref{opt:final_IPM_W} with interior-point method and output solution $\mathbf{W}^{\diamond}$.
		\STATE Set $l:=l+1$, and compute $y^{(l)}=\frac{([\hat{F}(\mathbf{W}^{\diamond},\mathbf{W}^{(n)})]^+)^{1/2}}{\frac{1}{\eta}\mathrm{Tr}\left(\mathbf{W}^{\diamond}\right)+ P_a+KP_c + MP_s}$.
		\UNTIL Stopping criterion is satisfied.
		\STATE Update $\mathbf{W}^{(n+1)}=\mathbf{W}^{\diamond}$ and iteration $n:=n+1$.
		\UNTIL Stopping criterion is satisfied.
	\end{algorithmic}
\end{algorithm}

On the other hand, when beamforming vector $\mathbf{w}$ is fixed, the subproblem of $\mathcal{P}2^{[k,j]}$ for updating $\mathbf{\Theta}$ becomes 
	\begin{align}
	\mathcal{Q}1: \max_{\mathbf{\Theta}}
	~\log_2\left(\frac{1+{|(\sqrt{\alpha_{1}\alpha_{r,k}}\mathbf{h}_{r,k}^H\mathbf{\Theta}\mathbf{H}+\sqrt{\alpha_{d,k}}\mathbf{h}^H_{d,k})\mathbf{w}|^2}/{\sigma^2_k}}{1+(\kappa_{1,j}(\mathbf{\Theta}\mathbf{H}\mathbf{w})^H\mathbf{\Theta}\mathbf{H}\mathbf{w}+\kappa_{2,j}\mathbf{w}^H\mathbf{w})\ln\varepsilon_k^{-1}}\right), \label{obj:Q1_theta} 
	~\mathrm{s.t.}~
	|\mathbf{\Theta}_{m,m}| = 1,~  \forall m, 
	\end{align}
where $[\cdot]^+$ is removed since the objective function value of $\mathcal{P}2^{[k,j]}$ must be non-negative at optimality. 
Denoting $\mathbf{A}_k=\left[\begin{matrix}
\sqrt{\alpha_{1}\alpha_{r,k}}\mathrm{diag}(\mathbf{h}_{r,k}^H)\mathbf{H} \\
\sqrt{\alpha_{d,k}}\mathbf{h}^H_{d,k}  
\end{matrix}\right]\in \mathbb{C}^{(M+1) \times N}$, $\hat{\mathbf{H}}=\left[
\mathbf{H}~
\mathbf{0}  \right]^T\in \mathbb{C}^{(M+1) \times N}$, and noticing that $\mathbf{\Theta}=\mathrm{diag}\{\mathbf{e}\}$, we have the following equalities 
\begin{equation}\label{eq:obje_demo_Q1}
|(\sqrt{\alpha_{1}\alpha_{r,k}}\mathbf{h}_{r,k}^H\mathbf{\Theta}\mathbf{H}+\sqrt{\alpha_{d,k}}\mathbf{h}^H_{d,k})\mathbf{w}|^2=\mathrm{Tr}(\mathbf{A}_k\mathbf{w}\mathbf{w}^H\mathbf{A}^H_k\mathbf{Q}),
\end{equation}
\begin{equation}\label{eq:obje_numer_Q1}
(\mathbf{\Theta}\mathbf{H}\mathbf{w})^H\mathbf{\Theta}\mathbf{H}\mathbf{w}=\mathrm{Tr}(\hat{\mathbf{H}}\mathbf{w}\mathbf{w}^H\hat{\mathbf{H}}^H\mathbf{Q}),
\end{equation}
where $\mathbf{Q}=[\mathbf{e}^T,1]^T [\mathbf{e}^T,1]$.  Putting~\eqref{eq:obje_demo_Q1} and~\eqref{eq:obje_numer_Q1} into $	\mathcal{Q}1$, $\mathcal{Q}1$ can be equivalently transformed into\footnote{ $\log_2$ is removed since logarithm function does not affect the optimization. }
\begin{subequations}\label{opt:Q1_equi_Qudra}
	\begin{align}
	\max_{{\mathbf{Q}\succeq \mathbf{0}}}~
	&\frac{1+\mathrm{Tr}(\mathbf{A}_k\mathbf{w}\mathbf{w}^H\mathbf{A}^H_k\mathbf{Q})/{\sigma^2_k}}{1+(\kappa_{1,j}\mathrm{Tr}(\hat{\mathbf{H}}\mathbf{w}\mathbf{w}^H\hat{\mathbf{H}}^H\mathbf{Q})+\kappa_{2,j}\mathbf{w}^H\mathbf{w})\ln\varepsilon_k^{-1}},  \label{obj:FP_Dinkel_theta} \\
	\mathrm{s.t.}\quad
	&\mathrm{Tr}\left(\mathbf{E}_m\mathbf{Q}\right)=1, ~\forall m,~\mathrm{rank}(\mathbf{Q})= 1, \label{cons:rank_1_Dinkel}
	\end{align}
\end{subequations}
where $\mathbf{E}_m$ is a matrix with the $(m,m)^{th}$ element being 1 and 0 in other positions such that~\eqref{cons:rank_1_Dinkel} is equivalent to constraint $|\mathbf{\Theta}_{m,m}| = 1, \forall m$.
Notice that~\eqref{opt:Q1_equi_Qudra} complies with concave-convex form such that 
the quadratic transformation~\eqref{opt:general_FP_P3} can be employed to solve this problem with the $l^{th}$ subproblem expressed as 
\begin{subequations}\label{opt:SDP_Q_SDR}
	\begin{align}
	\max_{\mathbf{Q}\succeq \mathbf{0}}~
	&2\tau^{(l)}\sqrt{1\!+\!\mathrm{Tr}(\mathbf{A}_k\mathbf{w}\mathbf{w}^H\mathbf{A}^H_k\mathbf{Q})/{\sigma^2_k}}\!-\!
	(\tau^{(l)})^2\left(1\!+\!(\kappa_{1,j}\mathrm{Tr}(\hat{\mathbf{H}}\mathbf{w}\mathbf{w}^H\hat{\mathbf{H}}^H\mathbf{Q})\!+\!\kappa_{2,j}\mathbf{w}^H\mathbf{w})\ln\varepsilon_k^{-1}\right), \\
	\mathrm{s.t.}\quad
	&  \mathrm{Tr}\left(\mathbf{E}_m\mathbf{Q}\right)=1, \forall m,
	 ~\mathrm{rank}(\mathbf{Q})= 1,
	\end{align}
\end{subequations}
where 
$\tau^{(l)}$ is defined in Algorithm~\ref{alg:conventional_Q1}. Subproblem~\eqref{opt:SDP_Q_SDR} can be directly solved by employing SDR, and Gaussian randomization method is further used to guarantee a feasible solution~\cite{J_LuoSDR}.

The entire procedure for solving $\mathcal{Q}1$ is summarized in Algorithm~\ref{alg:conventional_Q1} with iterations over $l$. Since $\mathcal{Q}1$ is equivalent to~\eqref{opt:Q1_equi_Qudra} and~\eqref{opt:Q1_equi_Qudra} is iteratively replaced by quadratic transformation without performance loss, the obtained solution to~\eqref{opt:SDP_Q_SDR} is a feasible solution to $\mathcal{Q}1$. Notice that~\eqref{opt:SDP_Q_SDR} has $M$ semidefinite programming constraints, and each involves an $M\times M$ positive semidefinite matrix. Therefore, the computational complexity order of~\eqref{opt:SDP_Q_SDR} is 
$\mathcal{O}\left(\sqrt{M}(2M^4+M^3)\right)$~\cite[Theorem 3.12]{J_PA10IPM_Bok}. 
\begin{algorithm}[H]
	\caption{Conventional method for solving $\mathcal{Q}1$} 
	\begin{algorithmic}[1]\label{alg:conventional_Q1}
		\STATE Initialize feasible values of $\mathbf{Q}$, $\tau^{(0)}$ and set $l:=0$.
		\REPEAT
		\STATE  Solve problem~\eqref{opt:SDP_Q_SDR} with SDR method and output solution $\mathbf{Q}^{\diamond}$.
		\STATE Update iteration $l:=l+1$ and $\tau^{(l)}=\frac{\sqrt{1+\mathrm{Tr}(\mathbf{A}_k\mathbf{w}\mathbf{w}^H\mathbf{A}^H_k\mathbf{Q}^{\diamond})/\sigma^2_k}}{1+(\kappa_{1,j}\mathrm{Tr}(\hat{\mathbf{H}}\mathbf{w}\mathbf{w}^H\hat{\mathbf{H}}^H\mathbf{Q}^{\diamond})+\kappa_{2,j}\mathbf{w}^H\mathbf{w})\ln\varepsilon_k^{-1}}$.
		\UNTIL Stopping criterion is satisfied.
	\end{algorithmic}
\end{algorithm}

To sum up, with the feasible initial points of $\mathbf{W}$ and $\mathbf{Q}$, $\mathcal{P}2^{[k,j]}$ can be solved through iteratively executing Algorithms~\ref{alg:conventional_D1} and~\ref{alg:conventional_Q1}.
However, since the interior-point method is involved in Algorithms~\ref{alg:conventional_D1} and~\ref{alg:conventional_Q1}, the computational complexity for solving $\mathcal{P}2^{[k,j]}$ is very high when the number of antennas $N$ or the reflecting elements $M$ is large.

\section{Large-Scale Optimization Algorithm}
To overcome the high complexity challenge brought by Algorithms~\ref{alg:conventional_D1} and~\ref{alg:conventional_Q1} when $N$ or $M$ is large, accelerated first-order algorithms are proposed in this section to significantly reduce the time complexity.
In particular, we propose an accelerated projected-gradient (PG) algorithm with a PFP for solving $\mathcal{D}1$. Besides, an accelerated Riemannian manifold optimization algorithm is proposed for solving $\mathcal{Q}1$. Finally, the convergence of the proposed iterative algorithms is proved mathematically.

\subsection{Accelerated PG Algorithm for Solving $\mathcal{D}1$}
Although $f\left(\mathbf{w}\right)$ in~\eqref{eq:f_objective_nu} can be concavified via~\eqref{eq: der_trace_CCP} based on CCP of~\eqref{eq:CCP_prere},
the introduced auxiliary variable $\mathbf{W}$ together with the rank constraint bring extra computational complexity. To maintain the optimization variable as $\mathbf{w}$, we employ the PFP of~\eqref{eq:Path-foPre} to derive a lower bound of $f\left(\mathbf{w}\right)$ as shown in the following property.
\begin{lemma}\label{lem:Approx_Obj_D1}
Denote $\hat{\mathbf{h}}=\sqrt{\alpha_{1}\alpha_{r,k}}\mathbf{h}_{r,k}^H\mathbf{\Theta}\mathbf{H}+\sqrt{\alpha_{d,k}}\mathbf{h}^H_{d,k}$. Given any fixed $\mathbf{w}^{(t)}$, we have: $f\left(\mathbf{w}\right) \geq R_1^{(t)}(\mathbf{w})-R_2^{(t)}(\mathbf{w})$ with the equality when $\mathbf{w}=\mathbf{w}^{(t)}$, where $R^{(t)}_1(\mathbf{w})$ and $R^{(t)}_2(\mathbf{w})$ are
\begin{align}\label{eq:R1_auxi}
R^{(t)}_1(\mathbf{w})\!=&\log_2\left(1+\frac{|\hat{\mathbf{h}}\mathbf{w}^{(t)}|^2}{\sigma^2_k}\right)\!+\!\frac{2\Re\{(\hat{\mathbf{h}}\mathbf{w}^{(t)})^H\hat{\mathbf{h}}\mathbf{w}\}}{\sigma^2_k\ln 2} 
\!-\!\frac{(\sigma^2_k+|\hat{\mathbf{h}}\mathbf{w}|^2)|\hat{\mathbf{h}}\mathbf{w}^{(t)}|^2}{(\sigma^2_k+|\hat{\mathbf{h}}\mathbf{w}^{(t)}|^2)\sigma^2_k\ln2}
-\frac{|\hat{\mathbf{h}}\mathbf{w}^{(t)}|^2}{\sigma^2_k\ln2},
\end{align}
\begin{align}\label{eq:R2_auxli}
R^{(t)}_2(\mathbf{w})\!=&\log_2\left(1+\left(\kappa_{1,j}\|\mathbf{\Theta}\mathbf{H}\mathbf{w}^{(t)}\|^2+\kappa_{2,j}\|\mathbf{w}^{(t)}\|^2\right)\ln\varepsilon_k^{-1}\right)\nonumber \\
&+\frac{\left(\kappa_{1,j}(\|\mathbf{\Theta}\mathbf{H}\mathbf{w}\|^2-\|\mathbf{\Theta}\mathbf{H}\mathbf{w}^{(t)}\|^2)+\kappa_{2,j}(\|\mathbf{w}\|^2-\|\mathbf{w}^{(t)}\|^2)\right)\ln\varepsilon_k^{-1}}{\left(1+\left(\kappa_{1,j}\|\mathbf{\Theta}\mathbf{H}\mathbf{w}^{(t)}\|^2+\kappa_{2,j}\|\mathbf{w}^{(t)}\|^2\right)\ln\varepsilon_k^{-1}\right)\ln2}.
\end{align}
\end{lemma}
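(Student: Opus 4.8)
The plan is to exploit the difference-of-logarithms structure already present in $f$. Writing $u=\hat{\mathbf{h}}\mathbf{w}$ and $q(\mathbf{w})=\left(\kappa_{1,j}\|\mathbf{\Theta}\mathbf{H}\mathbf{w}\|^2+\kappa_{2,j}\|\mathbf{w}\|^2\right)\ln\varepsilon_k^{-1}$, I would first observe that $f(\mathbf{w})=f_1(\mathbf{w})-f_2(\mathbf{w})$ with $f_1(\mathbf{w})=\log_2(1+|u|^2/\sigma_k^2)$ and $f_2(\mathbf{w})=\log_2(1+q(\mathbf{w}))$. Neither $f_1$ nor $-f_2$ is concave, so I apply the PFP template of~\eqref{eq:Path-foPre} separately to each term: a concave minorant $R_1^{(t)}$ for $f_1$ and a convex majorant $R_2^{(t)}$ for $f_2$ (equivalently a concave minorant $-R_2^{(t)}$ for $-f_2$), both tight at $\mathbf{w}^{(t)}$. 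Adding the two bounds then yields $f(\mathbf{w})\ge R_1^{(t)}(\mathbf{w})-R_2^{(t)}(\mathbf{w})$, and tightness of each piece at $\mathbf{w}^{(t)}$ gives the claimed equality there.

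The term $f_2$ is the easy one. Since $\varepsilon_k\in(0,1)$ implies $\ln\varepsilon_k^{-1}>0$, the argument $q(\mathbf{w})$ is a nonnegative convex quadratic in $\mathbf{w}$, while $\log_2(1+\cdot)$ is concave and increasing. Concavity of the outer logarithm makes its first-order Taylor expansion in the scalar argument an overestimator, so with $q_0=q(\mathbf{w}^{(t)})$ I would write $f_2(\mathbf{w})\le\log_2(1+q_0)+\frac{q(\mathbf{w})-q_0}{(1+q_0)\ln 2}$. Because $q(\mathbf{w})$ is convex and the multiplier $1/((1+q_0)\ln 2)$ is a positive constant, the right-hand side is convex; substituting $q$ and $q_0$ reproduces exactly $R_2^{(t)}(\mathbf{w})$ in~\eqref{eq:R2_auxli}, with equality at $\mathbf{w}=\mathbf{w}^{(t)}$.

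The term $f_1$ is the crux, since $\log_2(1+|\hat{\mathbf{h}}\mathbf{w}|^2/\sigma_k^2)$ is not concave and cannot be minorized by a naive Taylor step. Here I would invoke the path-following inequality underlying~\eqref{eq:R1_auxi}: for all complex $u$ and the fixed $u_0=\hat{\mathbf{h}}\mathbf{w}^{(t)}$,
\begin{equation*}
\ln\!\left(1+\frac{|u|^2}{\sigma_k^2}\right)\ge\ln\!\left(1+\frac{|u_0|^2}{\sigma_k^2}\right)+\frac{2\Re\{u_0^* u\}}{\sigma_k^2}-\frac{(\sigma_k^2+|u|^2)|u_0|^2}{(\sigma_k^2+|u_0|^2)\sigma_k^2}-\frac{|u_0|^2}{\sigma_k^2}.
\end{equation*}
Dividing by $\ln 2$ and setting $u=\hat{\mathbf{h}}\mathbf{w}$ gives precisely $R_1^{(t)}(\mathbf{w})$, whose $\mathbf{w}$-dependence enters only through the linear term $\Re\{u_0^*\hat{\mathbf{h}}\mathbf{w}\}$ and the concave term $-|\hat{\mathbf{h}}\mathbf{w}|^2$, so $R_1^{(t)}$ is concave in $\mathbf{w}$ as required.

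Establishing this scalar inequality is the main obstacle. I would show that $u_0$ globally minimizes the gap function $\phi(u)$ obtained by subtracting the right-hand side from the left. A Wirtinger-derivative computation shows the stationarity condition forces any critical point to be a real multiple $u=\lambda u_0$ of $u_0$, and direct substitution confirms that $\lambda=1$ (that is $u=u_0$) is stationary with $\phi=0$. Reducing stationarity along this ray yields a cubic in $\lambda$ that factors as $(\lambda-1)$ times a quadratic whose discriminant is proportional to $-(3\sigma_k^2+4|u_0|^2)<0$; hence $\lambda=1$ is the \emph{unique} real critical point (the degenerate case $u_0=0$ giving $\phi(u)=\ln(1+|u|^2/\sigma_k^2)\ge 0$ directly). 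Since $\phi(u)\to+\infty$ as $|u|\to\infty$, this unique stationary point is the global minimizer, so $\phi\ge 0$ everywhere with equality at $u=u_0$. Combining the minorant for $f_1$ with the majorant for $f_2$ then completes the proof; the delicate step throughout is the non-concave minorization of $f_1$, which is exactly what forces the path-following construction rather than a plain CCP/Taylor argument.
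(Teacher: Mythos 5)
Your proposal is correct, and its overall scaffolding — the split $f=f_1-f_2$, a concave first-order-Taylor overestimator for $f_2$, and summing the two one-sided bounds with tightness at $\mathbf{w}^{(t)}$ — coincides with the paper's proof in Appendix~\ref{lem:App_LB_obj}; in particular, your treatment of $f_2$ via $\log_2(1+z)\le \log_2(1+\hat z)+(z-\hat z)/((1+\hat z)\ln 2)$ is literally the paper's step~\eqref{ineq:Taylor_f2_ub}. Where you genuinely diverge is the crux inequality minorizing $f_1$. The paper obtains it structurally: it introduces the lifted function $g_1(x,y)=-\log_2(1-|x|^2/y)$, invokes its joint convexity (a quadratic-over-linear argument composed with a convex increasing function, cited to a reference), uses the first-order Taylor expansion of $g_1$ at $(\hat x,\hat y)$ as a global underestimator, and then substitutes $b=y-|x|^2$, $\hat b=\hat y-|\hat x|^2$ with $b=\hat b=\sigma_k^2$. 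You instead prove the same scalar inequality directly: Wirtinger stationarity of the gap $\phi(u)$ forces any critical point onto the ray $u=\lambda u_0$ with $\lambda>0$; along that ray stationarity reduces to $(\lambda-1)\bigl[|u_0|^4\lambda^2-\sigma_k^2|u_0|^2(\lambda-1)+\sigma_k^4\bigr]=0$, whose quadratic factor has discriminant $-|u_0|^4\sigma_k^2\left(3\sigma_k^2+4|u_0|^2\right)<0$; coercivity then makes $u=u_0$ the unique global minimizer with $\phi(u_0)=0$, and the degenerate case $u_0=0$ is immediate. Both routes are sound — I checked your factorization and the sign of the discriminant. The paper's lifting argument is shorter and exposes the bound as a tangent-plane bound of a convex function, which also makes the gradient-consistency property exploited later (Lemma~\ref{lem: gradient_inner_approx}, Theorem~\ref{theorm: convergence_property}) immediate; your argument is self-contained, needing no external convexity lemma, and additionally yields uniqueness of the touching point, at the price of a more computational verification that generalizes less cleanly (e.g., if the denominator term $b$ were also allowed to vary, as it implicitly can in the paper's two-variable formulation).
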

\begin{proof}
Please see Appendix~\ref{lem:App_LB_obj}.
\end{proof}
\noindent Since $R_1^{(t)}(\mathbf{w})$ is a concave function and $R_2^{(t)}(\mathbf{w})$ is a convex function, the lower bound  $R_1^{(t)}(\mathbf{w})-R_2^{(t)}(\mathbf{w})$ is a concave function. 
With the sequence of concave functions $\{R_1^{(t)}(\mathbf{w})-R_2^{(t)}(\mathbf{w})\}_{t\in\mathbb{N}}$,
$\mathcal{D}1$ can be iteratively replaced by a sequence of subproblems with the $t^{th}$ subproblem given by
	\begin{align}\label{opt:CCR_D1_inner_Appx}
\mathbf{w}^{(t+1)}=\mathop{\arg\max}\limits_{\mathbf{w}}~
	&\frac{\left[R_1^{(t)}(\mathbf{w})-R_2^{(t)}(\mathbf{w})\right]^+}{\frac{1}{\eta}\mathrm{Tr}\left(\mathbf{w}\mathbf{w}^H\right)+ P_a+KP_c + MP_s}, 
	~~\mathrm{s.t.}
	~\mathrm{Tr}\left(\mathbf{w}\mathbf{w}^H\right)\leq P_\mathrm{max}.
	\end{align}
Since the objective function value of~\eqref{opt:CCR_D1_inner_Appx} must be non-negative at optimality, $[\cdot]^+$ in the numerator can be dropped. 
Recognizing the concave-convex form of~\eqref{opt:CCR_D1_inner_Appx}, quadratic transformation~\eqref{opt:general_FP_P3} can be applied to~\eqref{opt:CCR_D1_inner_Appx} with the $l^{th}$ subproblem written as
\begin{subequations}\label{eq:obj_approx_dual_problem}
	\begin{align}
\quad  \max_{\mathbf{w}}~~
	&\underbrace{2\gamma^{(l)}\sqrt{R_1^{(t)}(\mathbf{w})-R_2^{(t)}(\mathbf{w})}-(\gamma^{(l)})^2\left(\frac{1}{\eta}\mathrm{Tr}\left(\mathbf{w}\mathbf{w}^H\right)+ P_a+KP_c + MP_s\right)}_{\triangleq \Phi^{(t)}(\mathbf{w})}, \label{obj:Approx_concave_D1}  \\
	\mathrm{s.t.}\quad
	&  \mathrm{Tr}\left(\mathbf{w}\mathbf{w}^H\right)\leq P_\mathrm{max}, \label{ineq: power_budget_cons}
	\end{align}
\end{subequations}
where $\gamma^{(l)}$ is defined in Algorithm~\ref{alg:PG_Path}.

To solve~\eqref{eq:obj_approx_dual_problem}, the PG method with a warm-start initialization $\mathbf{x}^{(0)}:=\mathbf{w}^{(t)}$ can be employed.  In the $i^{th}$ iteration, the update of $\mathbf{x}$ at the gradient descent step is given by
\begin{equation}\label{eq:grad_PG_w}
\mathbf{x}^{(i+\frac{1}{2})}=\mathbf{x}^{(i)}+\mathcal{I}^{(i)}
\nabla_\mathbf{w}  \Phi^{(t)}|_{\mathbf{w}=\mathbf{x}^{(i)}},
\end{equation}
where $\mathcal{I}^{(i)}$ is a variable step-size chosen by Armijo condition to guarantee convergence~\cite{J_Abs09Accel_Amoji}, and the gradient of $\Phi^{(t)}(\mathbf{w})$ is given by
\begin{align}
&\nabla_\mathbf{w} \Phi^{(t)}
=\frac{2\gamma^{(l)}\left(\frac{(\hat{\mathbf{h}}\mathbf{w}^{(t)})^H\hat{\mathbf{h}}^H}{\sigma^2_k\ln 2}\!-\!\frac{(\hat{\mathbf{h}}\mathbf{w})^H\hat{\mathbf{h}}^H|\hat{\mathbf{h}}\mathbf{w}^{(t)}|^2}{(\sigma^2_k+|\hat{\mathbf{h}}\mathbf{w}^{(t)}|^2)\sigma^2_k\ln 2}
	\!-\!\frac{\left(\kappa_{1,j}(\mathbf{\Theta}\mathbf{H})^H(\mathbf{\Theta}\mathbf{H}\mathbf{w})+ \kappa_{2,j}\mathbf{w}\right)\ln\varepsilon_k^{-1} }{\left(1+(\kappa_{1,j}\|\mathbf{\Theta}\mathbf{H}\mathbf{w}^{(t)}\|^2+\kappa_{2,j}\|\mathbf{w}^{(t)}\|^2)\ln\varepsilon_k^{-1}\right)\ln2}\!-\!\frac{(\gamma^{(l)})^2}{\eta}\mathbf{w}\right)}{\sqrt{\left(R_1^{(t)}(\mathbf{w})-R_2^{(t)}(\mathbf{w})\right)}}.
\end{align}
On the other hand, to project $\mathbf{x}^{(i+\frac{1}{2})}$ onto the feasible set of~\eqref{eq:obj_approx_dual_problem}, we have an equivalent optimization problem expressed as
\begin{align}\label{opt:project_w}
&\mathbf{x}^{(i+1)}=\mathop{\arg\min}_{\mathbf{x}\in \mathcal{P}_{\mathbf{x}}} \|\mathbf{x}-\mathbf{x}^{(i+\frac{1}{2})}\|^2,
\end{align}
where $\mathcal{P}_{\mathbf{x}}=\{\mathbf{x}| \|\mathbf{x}\|^2\leq P_\mathrm{max}\}$ is a convex set. Since~\eqref{opt:project_w} is a strongly convex problem, a closed-form solution can be derived based on Karush-Kuhn-Tucker (KKT) condition and is given by the following property, which is proved in Appendix~\ref{proof:theorem_primal_dual}.
\begin{lemma}\label{them: optimal_w_Approxi}
The optimal solution to~\eqref{opt:project_w} is given by
$\mathbf{x}^{(i+1)}=	\min\left\{\mathbf{x}^{(i+\frac{1}{2})},\frac{\sqrt{P_\mathrm{max}}\mathbf{x}^{(i+\frac{1}{2})}}{\|\mathbf{x}^{(i+\frac{1}{2})}\|}\right\}$.
\end{lemma}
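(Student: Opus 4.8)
The plan is to solve the projection problem~\eqref{opt:project_w} exactly through its Karush--Kuhn--Tucker (KKT) system, exploiting the fact that it is a strongly convex program over a convex feasible set. First I would observe that the objective $\|\mathbf{x}-\mathbf{x}^{(i+\frac{1}{2})}\|^2$ is strongly convex in $\mathbf{x}$ and that $\mathcal{P}_{\mathbf{x}}=\{\mathbf{x}\mid \|\mathbf{x}\|^2\leq P_\mathrm{max}\}$ is a closed convex ball containing $\mathbf{0}$ in its interior. Hence Slater's condition holds, the minimizer is unique, and it is \emph{completely} characterized by the KKT conditions, which are here both necessary and sufficient.

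Next I would form the Lagrangian $L(\mathbf{x},\lambda)=\|\mathbf{x}-\mathbf{x}^{(i+\frac{1}{2})}\|^2+\lambda\left(\|\mathbf{x}\|^2-P_\mathrm{max}\right)$ with a single multiplier $\lambda\geq 0$ attached to the power constraint. Setting the gradient $\nabla_{\mathbf{x}}L=\mathbf{0}$ yields the stationarity relation $\mathbf{x}=\frac{1}{1+\lambda}\mathbf{x}^{(i+\frac{1}{2})}$, i.e. the optimal point is a nonnegative rescaling of $\mathbf{x}^{(i+\frac{1}{2})}$ along the same direction. The remaining task is to pin down $\lambda$ using complementary slackness $\lambda\left(\|\mathbf{x}\|^2-P_\mathrm{max}\right)=0$ together with primal and dual feasibility.

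I would then split into the two cases permitted by complementary slackness. If $\lambda=0$, then $\mathbf{x}=\mathbf{x}^{(i+\frac{1}{2})}$, which is primal feasible precisely when $\|\mathbf{x}^{(i+\frac{1}{2})}\|^2\leq P_\mathrm{max}$; this recovers the first branch. If $\lambda>0$, the constraint is active, $\|\mathbf{x}\|^2=P_\mathrm{max}$, and substituting the stationarity relation gives $1+\lambda=\|\mathbf{x}^{(i+\frac{1}{2})}\|/\sqrt{P_\mathrm{max}}$ (taking the positive root), so that $\mathbf{x}=\sqrt{P_\mathrm{max}}\,\mathbf{x}^{(i+\frac{1}{2})}/\|\mathbf{x}^{(i+\frac{1}{2})}\|$; the requirement $\lambda>0$ is consistent exactly when $\|\mathbf{x}^{(i+\frac{1}{2})}\|>\sqrt{P_\mathrm{max}}$. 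Merging the two branches---keep the point if it already lies inside the ball, otherwise rescale it onto the boundary sphere---gives precisely the stated expression, in which the $\min$ selects the vector of smaller norm.

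The computation is routine; the only point requiring genuine care, and thus the \emph{main obstacle} such as it is, lies in justifying that the KKT conditions suffice (not merely are necessary), which follows from convexity plus Slater, and in verifying that the two cases are mutually exclusive and jointly exhaustive. They are, being separated by whether $\|\mathbf{x}^{(i+\frac{1}{2})}\|$ exceeds $\sqrt{P_\mathrm{max}}$, so the $\min$-form is well defined and coincides with the unique Euclidean projection onto $\mathcal{P}_{\mathbf{x}}$.
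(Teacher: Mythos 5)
Your proposal is correct and follows essentially the same route as the paper's own proof: form the Lagrangian of~\eqref{opt:project_w}, use stationarity to get $\mathbf{x}=\mathbf{x}^{(i+\frac{1}{2})}/(1+\lambda)$, and resolve the multiplier via complementary slackness into the two cases that merge into the stated $\min$ expression. Your added remark that Slater's condition makes the KKT system sufficient (not just necessary) is a slight tightening of the paper's argument, which simply invokes strong convexity, but it does not change the substance of the proof.
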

\noindent
Iteratively executing~\eqref{eq:grad_PG_w} and Lemma~\ref{them: optimal_w_Approxi} results in the standard projected gradient.  But since~\eqref{eq:obj_approx_dual_problem} is a smooth concave optimization problem, the momentum technique~\cite{J_BeckAmir09AFIS} can be used to accelerate the PG method. In particular, $\mathbf{x}^{(i)}$ in~\eqref{eq:grad_PG_w} can be augmented by a momentum term, giving an accelerated PG method~\cite{J_BeckAmir09AFIS}:
\begin{equation}\label{eq:acce_PG_moment}
\mathbf{x}^{(i+1)}=
\min\left\{\mathbf{f}^{(i)}+\mathcal{I}^{(i)}\nabla_\mathbf{w}  \Phi^{(t)}|_{\mathbf{w}=\mathbf{f}^{(i)}}, \frac{\sqrt{P_\mathrm{max}}\left(\mathbf{f}^{(i)}+\mathcal{I}^{(i)}\nabla_\mathbf{w}  \Phi^{(t)}|_{\mathbf{w}=\mathbf{f}^{(i)}}\right)}{\|\mathbf{f}^{(i)}+\mathcal{I}^{(i)}\nabla_\mathbf{w}  \Phi^{(t)}|_{\mathbf{w}=\mathbf{f}^{(i)}}\|} \right\},
\end{equation}
where $\mathbf{f}^{(i)}$ is given by 
\begin{equation}\label{eq:new_Accela_PG-para}
\mathbf{f}^{(i)}=\mathbf{x}^{(i)}+\frac{a^{(i-1)}-1}{a^{(i)}}\left(\mathbf{x}^{(i)}-\mathbf{x}^{(i-1)}\right)
\end{equation}
with a monotonically increasing sequence $\{a^{(i)}\}$ to adjust the momentum $\mathbf{x}^{(i)}-\mathbf{x}^{(i-1)}$. To achieve a fast convergence rate, $a^{(i)}$ is updated by~\cite{J_BeckAmir09AFIS}
\begin{equation}\label{eq:tunned_para_PG}
a^{(0)}=1,~a^{(i)}=\frac{1+\sqrt{1+4(a^{(i-1)})^2}}{2}.
\end{equation}
By updating $\mathbf{x}$ based on~\eqref{eq:acce_PG_moment}-\eqref{eq:tunned_para_PG}, the global optimal solution $\mathbf{w}^{*}$ to problem~\eqref{eq:obj_approx_dual_problem} can be obtained. 
Compared to the simple gradient descent, the proposed accelerated PG achieves a convergence rate of $\mathcal{O}\left(1/i^2\right)$~\cite{J_BeckAmir09AFIS}, and the differences are compared in Fig.~\ref{fig:algo_PG_accce}. 
This $\mathcal{O}\left(1/i^2\right)$ iteration complexity in fact touches the lower bound for any gradient based algorithm~\cite{B_YNesterov}, which indicates that the proposed algorithm is theoretically one of the fastest algorithms for this problem.
\begin{figure}[tb]
	\centering
	\includegraphics[scale=0.36]{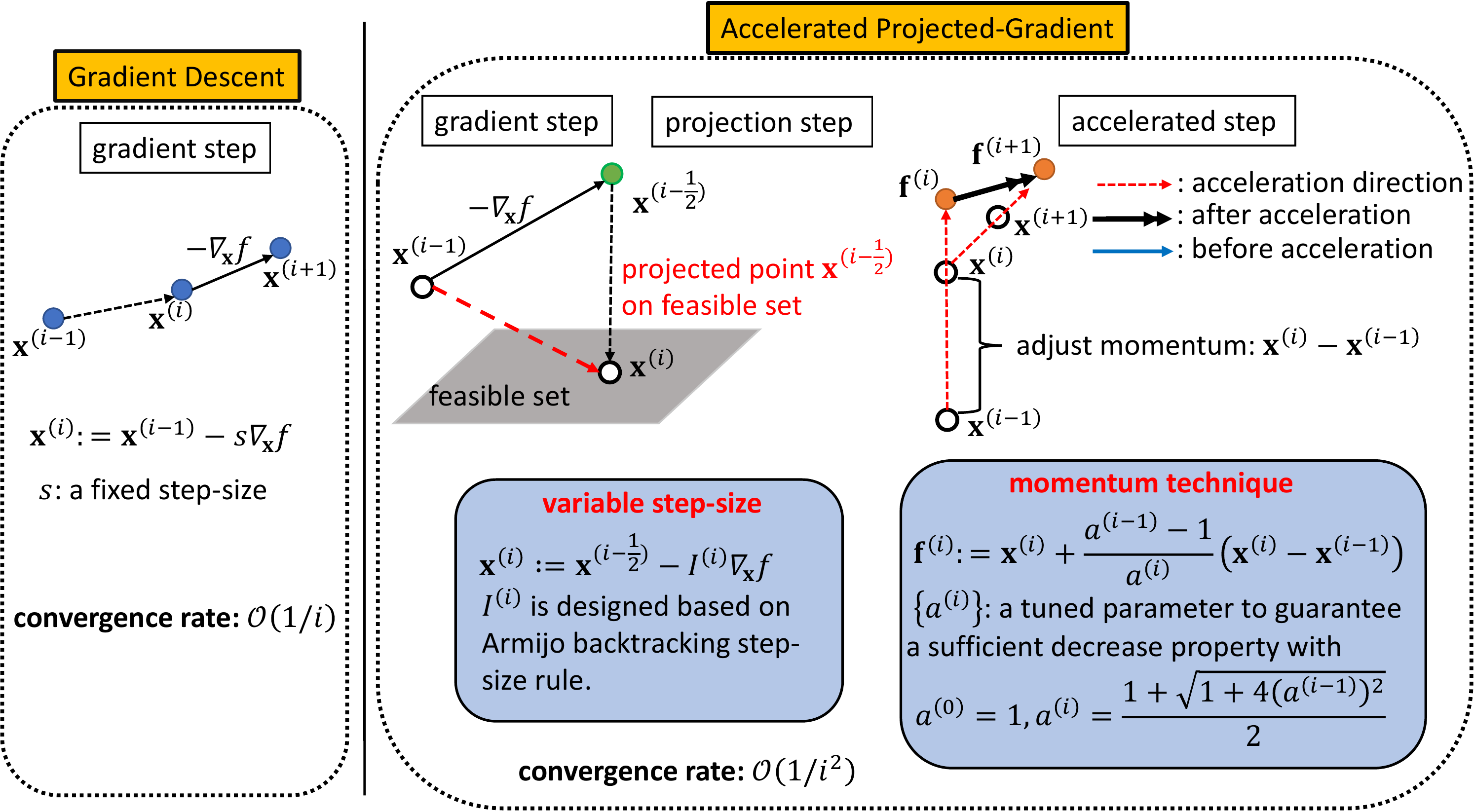}
	\caption{Comparison of gradient descent and accelerated projected-gradient methods.} \label{fig:algo_PG_accce} 
\end{figure}

The entire procedure for solving $\mathcal{D}1$ is summarized in Algorithm~\ref{alg:PG_Path}, where PFP is applied over $t$ and the quadratic transformation is applied over $l$ for the outer iterations, and each subproblem~\eqref{eq:obj_approx_dual_problem} is solved with accelerated PG method over $i$ for the inner iterations, where the convergence property is revealed by Theorem~\ref{theorm: convergence_property}, which is proved in Appendix~\ref{proof:theorem_KKT}.
\begin{theorem}\label{theorm: convergence_property}
	Starting from a feasible point of $\mathbf{w}$, the sequence of solutions generated from Algorithm~\ref{alg:PG_Path} converges to a local optimal point of $\mathcal{D}1$.
\end{theorem}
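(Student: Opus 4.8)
The plan is to prove convergence by analyzing the three nested loops of Algorithm~\ref{alg:PG_Path} from the inside out, and then to invoke a minorize–maximize (MM) argument to certify that the outer limit point is stationary for $\mathcal{D}1$. Throughout I write $g^{(t)}(\mathbf{w}):=R_1^{(t)}(\mathbf{w})-R_2^{(t)}(\mathbf{w})$ for the surrogate numerator and $P(\mathbf{w}):=\frac{1}{\eta}\mathrm{Tr}(\mathbf{w}\mathbf{w}^H)+P_a+KP_c+MP_s$ for the positive, convex denominator.

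First I would treat the innermost loop over $i$. For fixed $t$ and $l$, problem~\eqref{eq:obj_approx_dual_problem} maximizes the smooth strongly concave function $\Phi^{(t)}(\mathbf{w})$ over the closed convex ball $\mathcal{P}_{\mathbf{x}}$. The gradient step~\eqref{eq:grad_PG_w}, the exact projection of Lemma~\ref{them: optimal_w_Approxi}, and the momentum update~\eqref{eq:acce_PG_moment}--\eqref{eq:tunned_para_PG} together form the accelerated projected-gradient (FISTA) scheme, whose iterates converge to the global maximizer of~\eqref{eq:obj_approx_dual_problem} at rate $\mathcal{O}(1/i^2)$~\cite{J_BeckAmir09AFIS}, with the Armijo step-size guaranteeing sufficient ascent; hence each quadratic-transform subproblem is solved exactly. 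For the middle loop over $l$, problem~\eqref{opt:CCR_D1_inner_Appx} is a concave--convex fractional program, so the quadratic transform~\eqref{opt:general_FP_P3} yields a monotonically non-decreasing objective that converges to a stationary point of~\eqref{opt:CCR_D1_inner_Appx}~\cite{J_Yuwei18FP}. Consequently the point $\mathbf{w}^{(t+1)}$ returned at convergence of $l$ satisfies $\frac{[g^{(t)}(\mathbf{w}^{(t+1)})]^+}{P(\mathbf{w}^{(t+1)})}\geq\frac{[g^{(t)}(\mathbf{w}^{(t)})]^+}{P(\mathbf{w}^{(t)})}$.

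The outer loop over $t$ is then an MM scheme, and this is where the core argument lives. Lemma~\ref{lem:Approx_Obj_D1} supplies the two MM properties: $g^{(t)}(\mathbf{w})\leq f(\mathbf{w})$ for all $\mathbf{w}$, and $g^{(t)}(\mathbf{w}^{(t)})=f(\mathbf{w}^{(t)})$. Since the common denominator $P(\mathbf{w})>0$, these lift to the fractional objectives and give the chain $\frac{[f(\mathbf{w}^{(t+1)})]^+}{P(\mathbf{w}^{(t+1)})}\geq\frac{[g^{(t)}(\mathbf{w}^{(t+1)})]^+}{P(\mathbf{w}^{(t+1)})}\geq\frac{[g^{(t)}(\mathbf{w}^{(t)})]^+}{P(\mathbf{w}^{(t)})}=\frac{[f(\mathbf{w}^{(t)})]^+}{P(\mathbf{w}^{(t)})}$. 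Thus the true objective of $\mathcal{D}1$ is non-decreasing in $t$; as it is bounded above (a finite secrecy rate divided by a denominator bounded below by $P_a+KP_c+MP_s>0$), the objective sequence converges.

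The hard part will be promoting objective convergence to stationarity of a limit point $\bar{\mathbf{w}}$. I would exploit the tangency in Lemma~\ref{lem:Approx_Obj_D1}: because $f-g^{(t)}\geq 0$ attains its minimum value $0$ at $\mathbf{w}^{(t)}$, differentiability forces the gradient match $\nabla_{\mathbf{w}}g^{(t)}(\mathbf{w}^{(t)})=\nabla_{\mathbf{w}}f(\mathbf{w}^{(t)})$. At a fixed point $\bar{\mathbf{w}}=\mathbf{w}^{(t+1)}=\mathbf{w}^{(t)}$, the KKT conditions of the surrogate subproblem~\eqref{opt:CCR_D1_inner_Appx}, assembled from the quadratic-transform stationarity and the projection characterization of Lemma~\ref{them: optimal_w_Approxi}, therefore coincide with the KKT conditions of $\mathcal{D}1$ at $\bar{\mathbf{w}}$, identifying $\bar{\mathbf{w}}$ as a KKT (hence local-optimal) point. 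The delicate point to verify is that neither the $[\cdot]^+$ operator nor the auxiliary variable $\gamma^{(l)}$ obstructs this matching; I would resolve this by noting that the optimal secrecy rate is strictly positive, so $[\cdot]^+$ is locally smooth near $\bar{\mathbf{w}}$, and that the quadratic transform is tight at convergence, so that the stationarity of the surrogate subproblem is exactly the stationarity of its fractional objective.
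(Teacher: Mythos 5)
Your proposal is correct, and its skeleton matches the paper's proof: the innermost loop is handled by global convergence of the accelerated PG method on the strongly concave problem~\eqref{eq:obj_approx_dual_problem} via~\cite{J_BeckAmir09AFIS}, the middle loop by the quadratic-transform convergence theory of~\cite{J_Yuwei18FP}, and the outer loop by a tangency (gradient-consistency) argument for the PFP surrogates. The genuine difference lies in how gradient consistency is established and how the outer-loop conclusion is reached. The paper proves $\nabla f(\mathbf{w}^{(t)})=\nabla R_1^{(t)}(\mathbf{w}^{(t)})-\nabla R_2^{(t)}(\mathbf{w}^{(t)})$ as a standalone lemma (Lemma~\ref{lem: gradient_inner_approx}) by explicitly computing all three gradients and checking equality at $\mathbf{w}=\mathbf{w}^{(t)}$, and then invokes~\cite[Theorem 1]{J_Marks78AGIA} to conclude that the PFP iteration converges to a KKT point of $\mathcal{D}1$. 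You instead obtain the gradient match for free from the minorant structure: since $f-\bigl(R_1^{(t)}-R_2^{(t)}\bigr)\ge 0$ everywhere with equality at $\mathbf{w}^{(t)}$ (Lemma~\ref{lem:Approx_Obj_D1}), the point $\mathbf{w}^{(t)}$ is an unconstrained minimizer of a smooth nonnegative function, so the gradients must agree there. This is shorter, computation-free, and applies to any surrogate satisfying Lemma~\ref{lem:Approx_Obj_D1}, whereas the paper's explicit computation doubles as a verification of the specific expressions~\eqref{eq:R1_auxi}--\eqref{eq:R2_auxli}. You also make the MM ascent chain and the boundedness of the true objective explicit, which the paper leaves implicit in its citations. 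One caution on your final step: reasoning ``at a fixed point $\bar{\mathbf{w}}=\mathbf{w}^{(t+1)}=\mathbf{w}^{(t)}$'' establishes that \emph{fixed points} of the outer map are KKT points, but it does not by itself show that \emph{limit points} of the iterate sequence are fixed points; closing that gap requires compactness of the feasible ball $\{\mathbf{w}:\mathrm{Tr}(\mathbf{w}\mathbf{w}^H)\le P_\mathrm{max}\}$ together with a closedness/continuity argument for the surrogate solution map --- which is precisely what the citation to~\cite{J_Marks78AGIA} packages in the paper's version.
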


\subsection{Accelerated Riemannian Manifold Algorithm for Solving $\mathcal{Q}1$}
To reduce the complexity in solving $\mathcal{Q}1$, we notice that $\mathbf{e}=[e^{i\theta_1},\ldots, e^{i\theta_M}]^H$ is the diagonal elements of $\mathbf{\Theta}$, and $\mathcal{Q}1$ can be equivalently rewritten as
	\begin{align}\label{obj:Q2_theta_v} 
	\mathcal{Q}2: ~~\max_{\mathbf{v}}~
	~\underbrace{\log_2\left(\frac{1+{|\mathbf{v}^H\mathbf{A}_k\mathbf{w}|^2}/{\sigma^2_k}}{1+(\kappa_{1,j}|\mathbf{w}^H\hat{\mathbf{H}}^H\mathbf{v}|^2+\kappa_{2,j}\|\mathbf{w}\|^2)\ln\varepsilon_k^{-1}}\right)}_{:=\Upsilon(\mathbf{v})}, ~
	\mathrm{s.t.}~
	[\mathbf{v}\mathbf{v}^H]_{m,m} = 1,~ \forall m,
	\end{align}	
	where $\mathbf{v}=[\mathbf{e}^T,1]^T$.
	Since the constraint set of $\mathcal{Q}2$ forms an oblique manifold~\cite{B_AbsilP09}, the standard Riemannian conjugate gradient (CG) can be applied to solve $\mathcal{Q}2$ via alternatively computing Riemannian gradient, finding conjugate direction, and performing retraction mapping, as has been done in~\cite{C_Xu19Retraction,J_PanRCG_IRS20,J_ReminCG_IRS21}.

\begin{algorithm}[t]
	\caption{PFP-based accelerated PG method for solving $\mathcal{D}1$} 
	\begin{algorithmic}[1]\label{alg:PG_Path}
		\STATE Initialize $\mathbf{w}^{(0)}$ and set $t:=0$.
		\REPEAT 
		\STATE Initialize $\gamma^{(0)}$ and set $l:=0$.
		\REPEAT 
		\STATE Initialize $\mathbf{x}^{(0)}=\mathbf{w}^{(t)}$ and set $i:=0$.
		\REPEAT
		\STATE  Update $\mathbf{x}$ based on~\eqref{eq:acce_PG_moment}-\eqref{eq:tunned_para_PG} and iteration $i:=i+1$.
		\UNTIL Stopping criterion is satisfied.
		\STATE Output the converged point $\mathbf{x}^{\diamond}$.
		\STATE Update iteration $l:=l+1$ and  $\gamma^{(l)}=\frac{\sqrt{R_1^{(t)}(\mathbf{x}^{\diamond})-R_2^{(t)}(\mathbf{x}^{\diamond})}}{\frac{1}{\eta}\mathrm{Tr}\left(\mathbf{x}^{\diamond}(\mathbf{x}^{\diamond})^H\right)+ P_a+KP_c + MP_s}$.
		\UNTIL Stopping criterion is satisfied.
		\STATE Update $\mathbf{w}^{(t+1)}=\mathbf{x}^{\diamond} $ and iteration $t:=t+1$.
		\UNTIL Stopping criterion is satisfied.
	\end{algorithmic}
\end{algorithm}

However, notice that the feasible set of $\mathcal{Q}2$ is a geodesically convex set~\cite{J_vishnoi2018geodesic}, an accelerated Riemannian manifold algorithm~\cite{C_Accele_remani18Alg} can be exploited to further reduce the computation time of the Riemannian CG method. But since the accelerated Riemannian manifold method
requires a concave objective function, we need to transform $\mathcal{Q}2$ into a sequence of geodesically concave subproblems with the $t^{th}$ subproblem given by
\begin{align}\label{obj:opt_geo_concave}
\mathbf{v}^{(t+1)}=\mathop{\arg\max}\limits_{\mathbf{v}}~
\hat{\Upsilon}^{(t)}(\mathbf{v}), ~~~ 
\mathrm{s.t.} ~ [\mathbf{v}\mathbf{v}^H]_{m,m} = 1,~ \forall m,
\end{align}
where $\{\hat{\Upsilon}^{(t)}(\mathbf{v})\}_{t\in\mathbb{N}}$ is a sequence of concave lower bound functions of $\Upsilon(\mathbf{v})$ obtained by the PFP of~\eqref{eq:Path-foPre}, and its explicit expression is shown in~\eqref{eq:low_SCA_ups} of Appendix~\ref{lem:CF_lowerbound_AM}. For a given $t$,~\eqref{obj:opt_geo_concave} can be solved by manifold optimization method with a warm-start initialization $\mathbf{z}^{(0)}:=\mathbf{v}^{(t)}$. In particular, at the $l^{th}$ round update, we compute 
 $r^{(l)}\mathrm{grad} \hat{\Upsilon}^{(t)}(\mathbf{z}^{(l)})$, where $r^{(l)}$ is a step-size satisfying Armijo condition~\cite{J_Abs09Accel_Amoji}, and $\mathrm{grad} \hat{\Upsilon}^{(t)}(\mathbf{z}^{(l)})$ is the Riemannian gradient~\cite{B_AbsilP09}:
\begin{equation}\label{eq: Rem_grad_int}
\mathrm{grad} \hat{\Upsilon}^{(t)}(\mathbf{z}^{(l)})= \nabla \hat{\Upsilon}^{(t)}(\mathbf{v})|_{\mathbf{v}=\mathbf{z}^{(l)}}-\Re\{ \nabla \hat{\Upsilon}^{(t)}(\mathbf{v})|_{\mathbf{v}=\mathbf{z}^{(l)}}\circ ((\mathbf{z}^{(l)})^H)^T\}\circ \mathbf{z}^{(l)},
\end{equation}
with $\nabla \hat{\Upsilon}^{(t)}(\mathbf{v})$ being the Euclidean gradient 
\begin{align}\label{eq:gradient_Euc_e}
\nabla \hat{\Upsilon}^{(t)}(\mathbf{v})=& \frac{2((\mathbf{v}^{(t)})^H\mathbf{A}_k\mathbf{w})^H\mathbf{A}_k\mathbf{w}}{\sigma^2_k\ln 2}-\frac{2(\mathbf{v}^H\mathbf{A}_k\mathbf{w})^H\mathbf{A}_k\mathbf{w}|(\mathbf{v}^{(t)})^H\mathbf{A}_k\mathbf{w}|^2}{(\sigma^2_k+|(\mathbf{v}^{(t)})^H\mathbf{A}_k\mathbf{w}|^2)\sigma^2_k\ln 2}\nonumber \\
&-\frac{2\kappa_{1,j}(\mathbf{w}^H\hat{\mathbf{H}}^H\mathbf{v})^H\hat{\mathbf{H}}\mathbf{w}\ln\varepsilon_k^{-1}}{(1+(\kappa_{1,j}|\mathbf{w}^H\hat{\mathbf{H}}^H\mathbf{v}^{(t)}|^2+\kappa_{2,j}\|\mathbf{w}\|^2)\ln\varepsilon_k^{-1})\ln2}.
\end{align}
Then, the exponential mapping is employed to guarantee the updated point $\mathbf{z}^{(l+1)}$ stays in the oblique manifold~\cite[eq. 4.31]{B_AbsilP09}:
\begin{equation}\label{eq:update_v_stanRO}
\mathbf{z}^{(l+1)} :=\mathrm{exp}_{\mathbf{z}^{(l)}}\left(r^{(l)}\mathrm{grad} \hat{\Upsilon}^{(t)}(\mathbf{z}^{(l)})\right),
\end{equation}
where $\mathrm{exp}_{\mathbf{z}^{(l)}}(\mathbf{c}^{(l)})$ is defined as
$\mathrm{exp}_{\mathbf{z}^{(l)}}(\mathbf{c}^{(l)})= \mathbf{z}^{(l)}\cos(\|\mathbf{c}^{(l)}\|)+\frac{\mathbf{c}^{(l)}}{\|\mathbf{c}^{(l)}\|}\sin(\|\mathbf{c}^{(l)}\|)$.

Iteratively executing~\eqref{eq: Rem_grad_int} and~\eqref{eq:update_v_stanRO} results in the standard manifold optimization.
However, since~\eqref{obj:opt_geo_concave} is strongly geodesically concave and the inverse of $\mathrm{exp}_{\mathbf{z}^{(l)}}(\mathbf{c}^{(l)})$ is well-defined, momentum can be employed to speed up the convergence~\cite{C_Accele_remani18Alg}.
In particular, by introducing an additional variable $\mathbf{d}^{(l)}$ on the manifold, an interpolated point $\mathbf{q}^{(l)}$ is computed as
\begin{equation}\label{eq:Remin_accele_Q}
\mathbf{q}^{(l)}=\mathrm{exp}_{\mathbf{z}^{(l)}}\left(\frac{\left(\sqrt{\beta^2+4(1+\beta)u_g\mathcal{S}}-\beta\right)\mathrm{exp}^{-1}_{\mathbf{z}^{(l)}}(\mathbf{d}^{(l)})}{2+\sqrt{\beta^2+4(1+\beta)u_g\mathcal{S}}+\beta}\right),
\end{equation}
where $\mathrm{exp}^{-1}_{\mathbf{z}^{(l)}}(\mathbf{d}^{(l)})$ is the inverse of the exponential map and defined as~\cite{J_HuangWen2021Rpgm}
\begin{equation}\label{eq:invers_map_exp}
\mathrm{exp}^{-1}_{\mathbf{z}^{(l)}}(\mathbf{d}^{(l)})=\frac{\cos^{-1}((\mathbf{z}^{(l)})^T\mathbf{d}^{(l)})}{\sqrt{1-((\mathbf{z}^{(l)})^T\mathbf{d}^{(l)})^2}}(\mathbf{I}-\mathbf{z}^{(l)}(\mathbf{z}^{(l)})^T)\mathbf{d}^{(l)},
\end{equation}
which projects $\mathbf{d}^{(l)}$ from the manifold onto the tangent space. In~\eqref{eq:Remin_accele_Q}, $\mathcal{S}$ is a step-size satisfying $\mathcal{S}<1/L_g$ with
Lipschitz gradient constant $L_g$, $\beta>0$ is the shrinkage parameter, and $u_g$ is a strong concavity constant for $\hat{\Upsilon}^{(t)}(\mathbf{v})$ that can be easily obtained based on~\cite[Definition 2]{C_pmlrv49zhang16b}. 
From $\mathbf{q}^{(l)}$ in~\eqref{eq:Remin_accele_Q}, the updated points $\mathbf{d}^{(l+1)} $ and $\mathbf{z}^{(l+1)} $ on the manifold are calculated as~\cite{C_Accele_remani18Alg}
\begin{align}
\mathbf{d}^{(l+1)} 
=&~\mathrm{exp}_{\mathbf{q}^{(l)}}\Bigg(\frac{u_g(2-\sqrt{\beta^2+4(1+\beta)u_g\mathcal{S}}+\beta)\mathrm{exp}^{-1}_{\mathbf{q}^{(l)}}(\mathbf{d}^{(l)})}{2u_g(1+\beta)} \nonumber \\
&-\frac{(\sqrt{\beta^2+4(1+\beta)u_g\mathcal{S}}+\beta)\mathrm{grad}\hat{\Upsilon}^{(t)}(\mathbf{q}^{(l)})}{2u_g(1+\beta)}\Bigg),\label{eq:acce_Auxili_D_remina}\\
\mathbf{z}^{(l+1)}=&~\mathrm{exp}_{\mathbf{q}^{(l)}}\left(-\mathcal{S}\mathrm{grad} \hat{\Upsilon}^{(t)}(\mathbf{q}^{(l)}) \right). \label{eq:acce_RM_z}
\end{align}
It is now clear that $\mathbf{d}^{(l)}$ represents the momentum, which affects the momentum at the $(l+1)^{th}$ iteration via~\eqref{eq:acce_Auxili_D_remina}, and the optimization variable $\mathbf{z}^{(l+1)}$ via~\eqref{eq:acce_RM_z} together with~\eqref{eq:Remin_accele_Q}.

To sum up, the entire procedure for solving $\mathcal{Q}2$ is summarized in Algorithm~\ref{alg:CG_obligueManifold}, where PFP is applied over $t$ for the outer iterations, and each subproblem~\eqref{obj:opt_geo_concave} is solved with accelerated Riemannian manifold over $l$ for the inner iterations. The convergence property is revealed by Theorem~\ref{theorm: convergence_SCAmanifold}, which is proved in Appendix~\ref{Proof:Theorm_4}. 
\begin{theorem}\label{theorm: convergence_SCAmanifold}
	Starting from a feasible point of $\mathbf{v}$, the sequence of solutions generated from Algorithm~\ref{alg:CG_obligueManifold} converges to a stationary point of $\mathcal{Q}2$.
\end{theorem}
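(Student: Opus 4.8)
The plan is to prove Theorem~\ref{theorm: convergence_SCAmanifold} through a majorization--maximization argument transplanted to the Riemannian setting, coupling the surrogate properties of the path-following procedure (PFP) with the global convergence of the inner accelerated Riemannian scheme. The starting point is the three defining properties of the concave lower bound $\hat{\Upsilon}^{(t)}(\mathbf{v})$ constructed in~\eqref{eq:low_SCA_ups} of Appendix~\ref{lem:CF_lowerbound_AM}: (i) the global lower bound $\Upsilon(\mathbf{v})\geq \hat{\Upsilon}^{(t)}(\mathbf{v})$ for every feasible $\mathbf{v}$; (ii) the tightness $\Upsilon(\mathbf{v}^{(t)})=\hat{\Upsilon}^{(t)}(\mathbf{v}^{(t)})$; and (iii) the gradient-consistency $\mathrm{grad}\,\Upsilon(\mathbf{v}^{(t)})=\mathrm{grad}\,\hat{\Upsilon}^{(t)}(\mathbf{v}^{(t)})$, all of which follow directly from the first-order linearization used in the PFP of~\eqref{eq:Path-foPre}.

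First I would verify that the inner loop delivers a genuine maximizer of each subproblem. Since the oblique manifold is geodesically convex~\cite{J_vishnoi2018geodesic} and $\hat{\Upsilon}^{(t)}(\mathbf{v})$ is strongly geodesically concave on it, the accelerated Riemannian updates~\eqref{eq:Remin_accele_Q}--\eqref{eq:acce_RM_z} converge to the global maximizer of~\eqref{obj:opt_geo_concave} by the guarantee of~\cite{C_Accele_remani18Alg}. Consequently $\mathbf{v}^{(t+1)}$ is a stationary point of the subproblem, i.e. $\mathrm{grad}\,\hat{\Upsilon}^{(t)}(\mathbf{v}^{(t+1)})=\mathbf{0}$, and in particular $\hat{\Upsilon}^{(t)}(\mathbf{v}^{(t+1)})\geq \hat{\Upsilon}^{(t)}(\mathbf{v}^{(t)})$. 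Chaining this with the surrogate properties yields a monotone objective sequence
\begin{equation}
\Upsilon(\mathbf{v}^{(t+1)})\;\geq\;\hat{\Upsilon}^{(t)}(\mathbf{v}^{(t+1)})\;\geq\;\hat{\Upsilon}^{(t)}(\mathbf{v}^{(t)})\;=\;\Upsilon(\mathbf{v}^{(t)}),
\end{equation}
where the first inequality is property (i), the second is inner-loop optimality, and the equality is property (ii).

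Because the oblique manifold is compact and $\Upsilon$ is continuous, $\{\Upsilon(\mathbf{v}^{(t)})\}$ is bounded above, so this monotone sequence converges; compactness also furnishes a subsequence $\mathbf{v}^{(t_s)}\to\mathbf{v}^{*}$. The squeeze $0\leq \hat{\Upsilon}^{(t)}(\mathbf{v}^{(t+1)})-\Upsilon(\mathbf{v}^{(t)})\leq \Upsilon(\mathbf{v}^{(t+1)})-\Upsilon(\mathbf{v}^{(t)})\to 0$ forces the per-step surrogate gap to vanish, and strong geodesic concavity then converts this into $d(\mathbf{v}^{(t+1)},\mathbf{v}^{(t)})\to 0$ for the geodesic distance, so $\mathbf{v}^{(t_s+1)}\to\mathbf{v}^{*}$ as well. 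Finally, passing to the limit in $\mathrm{grad}\,\hat{\Upsilon}^{(t_s)}(\mathbf{v}^{(t_s+1)})=\mathbf{0}$ and invoking gradient-consistency (iii) together with continuity of the Riemannian gradient gives $\mathrm{grad}\,\Upsilon(\mathbf{v}^{*})=\mathbf{0}$, which is precisely the first-order stationarity condition for $\mathcal{Q}2$.

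The main obstacle I anticipate is twofold, and both pieces live in the inner loop. First, one must certify that the PFP surrogate $\hat{\Upsilon}^{(t)}(\mathbf{v})$ is truly geodesically (strongly) concave on the oblique manifold rather than merely Euclidean-concave, since both the convergence guarantee of~\cite{C_Accele_remani18Alg} and the very existence of the strong-concavity constant $u_g$ rest on this; checking geodesic concavity of a function built from unit-modulus quadratic forms requires a Hessian computation in the manifold's intrinsic metric. Second, the asymptotic-regularity step needs $u_g$ to be bounded away from zero \emph{uniformly} in $t$, so that a vanishing objective gap genuinely implies a vanishing iterate gap; this is plausible because the strongly concave quadratic core of $\hat{\Upsilon}^{(t)}$ is fixed across iterations and only the linearization point moves, but it must be argued explicitly to close the limiting argument and hence to complete the proof.
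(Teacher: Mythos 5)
Your proposal is correct in substance and rests on the same three pillars as the paper's proof --- inner-loop optimality of the accelerated Riemannian scheme via~\cite{C_Accele_remani18Alg}, the lower-bound/tightness property $\Upsilon(\mathbf{v})\geq\hat{\Upsilon}^{(t)}(\mathbf{v})$ with equality at $\mathbf{v}^{(t)}$, and gradient consistency --- but it takes a genuinely different route to the conclusion. The paper's proof is essentially two citations: it asserts the surrogate properties and then invokes the classical successive-approximation convergence theorem of~\cite{J_Marks78AGIA} to conclude stationarity of the outer iteration, with no explicit limit argument. You instead re-derive the minorize--maximize convergence from first principles: monotonicity of $\{\Upsilon(\mathbf{v}^{(t)})\}$, compactness of the unit-modulus feasible set, asymptotic regularity $d(\mathbf{v}^{(t+1)},\mathbf{v}^{(t)})\to 0$, and passage to the limit in the subproblem stationarity condition. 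What your route buys is rigor precisely where the paper is loosest: the result in~\cite{J_Marks78AGIA} is a Euclidean SCA theorem, and its verbatim application to a problem whose feasible set is a non-convex manifold is not justified in the paper, whereas your argument works intrinsically with Riemannian gradients. What the paper's route buys is brevity, and --- importantly --- it avoids the two obstacles you flag, both of which are self-inflicted by your choice of the asymptotic-regularity step. Neither uniform strong geodesic concavity (constant $u_g$ bounded away from zero in $t$) nor iterate regularity is actually needed: since $\mathbf{v}^{(t_s+1)}$ globally maximizes $\hat{\Upsilon}^{(t_s)}$, you have $\hat{\Upsilon}^{(t_s)}(\mathbf{v})\leq\hat{\Upsilon}^{(t_s)}(\mathbf{v}^{(t_s+1)})\leq\Upsilon(\mathbf{v}^{(t_s+1)})$ for every feasible $\mathbf{v}$; letting $t_s\to\infty$ along a convergent subsequence $\mathbf{v}^{(t_s)}\to\mathbf{v}^{*}$ and using continuity of $(\mathbf{u},\mathbf{v})\mapsto\hat{\Upsilon}_{\mathbf{u}}(\mathbf{v})$ together with convergence of the objective values gives $\hat{\Upsilon}_{\mathbf{v}^{*}}(\mathbf{v})\leq\Upsilon(\mathbf{v}^{*})=\hat{\Upsilon}_{\mathbf{v}^{*}}(\mathbf{v}^{*})$, so $\mathbf{v}^{*}$ maximizes its own surrogate and gradient consistency yields $\mathrm{grad}\,\Upsilon(\mathbf{v}^{*})=\mathbf{0}$ directly. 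Replacing your squeeze-plus-regularity step with this value-convergence argument closes your proof without having to certify strong geodesic concavity on a compact manifold (where, indeed, it generally cannot hold globally), and it only requires the inner loop to return a maximizer of the geodesically concave subproblem, which local optimality already guarantees.
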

\begin{algorithm}[H]
	\caption{Accelerated manifold optimization for solving $\mathcal{Q}2$} 
	\begin{algorithmic}[1]\label{alg:CG_obligueManifold}
		\STATE Initialize $\mathbf{v}^{(0)}$, $\mathcal{S}<1/L_g$, $\beta > 0$ and set $t:=0$.
		\REPEAT 
		\STATE Initialize $\mathbf{z}^{(0)}=\mathbf{v}^{(t)}$, $\mathbf{d}^{(0)}=\mathbf{z}^{(0)}$, and set $l:=0$.
		\REPEAT
		\STATE  Update $\mathbf{z}$ based on~\eqref{eq:Remin_accele_Q}-\eqref{eq:acce_RM_z} and iteration $l:=l+1$.
		\UNTIL Stopping criterion is satisfied.
		\STATE Update $\mathbf{v}^{(t+1)}=\mathbf{z}^{\diamond} $ with the converged point $\mathbf{z}^{\diamond}$ and iteration $t:=t+1$.
		\UNTIL Stopping criterion is satisfied.
	\end{algorithmic}
\end{algorithm}

The framework of the accelerated manifold is shown in Fig.~\ref{fig:algo_alg4_mani}. 
Compared to standard Riemannian manifold optimization (only involves computing Riemannian gradient and performing exponential mapping in
 Fig.~\ref{fig:algo_alg4_mani}), which has a convergence rate $\mathcal{O}\left({L_g}/{u_g}\log(1/\varphi)\right)$, the proposed accelerated version achieves a faster convergence rate of $\mathcal{O}(\sqrt{{L_g}/{u_g}}\log(1/\varphi))$ per iteration, where $\varphi$ is the target solution accuracy~\cite{C_Accele_remani18Alg}. 
\begin{figure}[tb]
	\centering
	\includegraphics[scale=0.35]{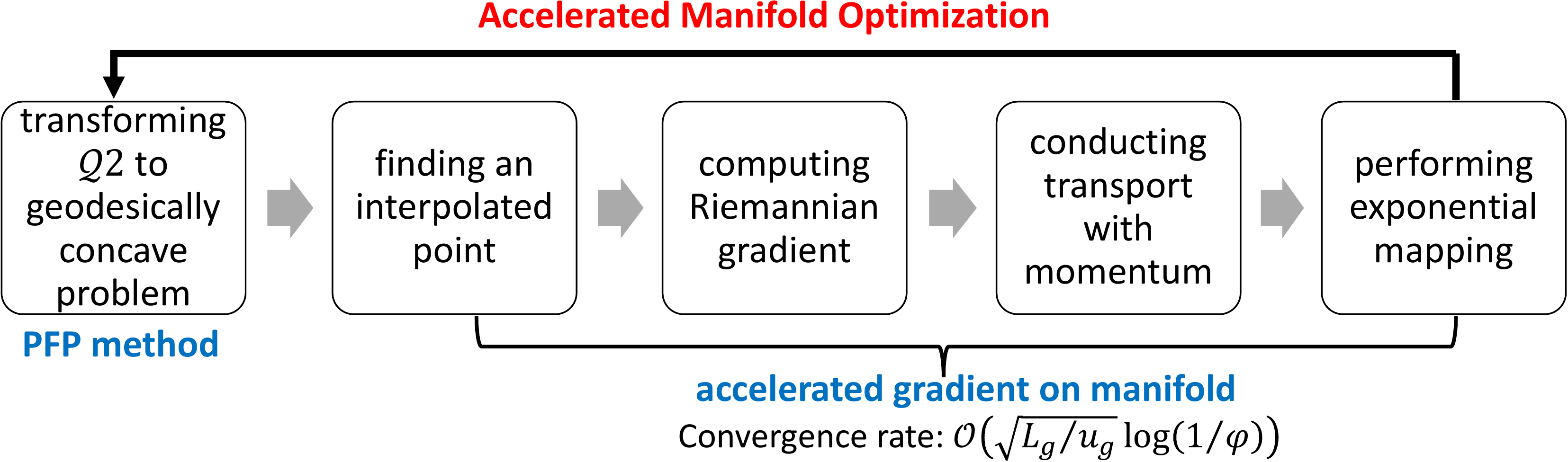}
	\caption{The framework of the accelerated Riemannian manifold optimization.} \label{fig:algo_alg4_mani} 
\end{figure}

\vspace{-0.5cm}
\subsection{Overall Algorithm for Solving $\mathcal{P}2$}
With the alternative update of $\mathbf{w}$ and $\mathbf{\Theta}$ given by Algorithms~\ref{alg:PG_Path} and~\ref{alg:CG_obligueManifold}  respectively, problem $\mathcal{P}2^{[k,j]}$ can be efficiently solved under the AM framework, and the convergence property is revealed by the following theorem, which is proved in Appendix~\ref{proof_tem_convergence}. 
\begin{theorem}\label{tem:conver_AM}
	Starting from a feasible solution of problem $ \mathcal{P}2^{[k,j]}$, the sequence of solutions generated by alternatively executing Algorithms~\ref{alg:PG_Path} and~\ref{alg:CG_obligueManifold} converges to a stationary point of problem $ \mathcal{P}2^{[k,j]}$.
\end{theorem}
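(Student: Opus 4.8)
The plan is to establish convergence of the alternating maximization (AM) loop through the classical two-ingredient argument: first showing that the objective sequence is monotone and bounded (hence convergent), and then showing that every limit point of the iterate sequence satisfies the joint stationarity conditions of $\mathcal{P}2^{[k,j]}$. Throughout, let $g(\mathbf{w},\mathbf{\Theta})$ denote the objective in~\eqref{obj:P1_j_parallel}, let $(\mathbf{w}^{(r)},\mathbf{\Theta}^{(r)})$ be the iterates produced at the $r^{th}$ outer AM round, and note that the feasible set is the Cartesian product $\mathcal{F}=\mathcal{F}_{\mathbf{w}}\times\mathcal{F}_{\mathbf{\Theta}}$, where $\mathcal{F}_{\mathbf{w}}=\{\mathbf{w}:\mathrm{Tr}(\mathbf{w}\mathbf{w}^H)\leq P_\mathrm{max}\}$ is a compact Euclidean ball and $\mathcal{F}_{\mathbf{\Theta}}=\{\mathbf{\Theta}:|\mathbf{\Theta}_{m,m}|=1,\forall m\}$ is the compact oblique manifold of unit-modulus coefficients. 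Since the objective is non-negative at optimality, the operator $[\cdot]^+$ is inactive at the points of interest and $g$ may be treated as a smooth function on $\mathcal{F}$.

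First I would prove monotonicity. Fixing $\mathbf{\Theta}^{(r)}$, Algorithm~\ref{alg:PG_Path} generates $\mathbf{w}^{(r+1)}$ from a warm start at $\mathbf{w}^{(r)}$; because the PFP surrogate $R_1^{(t)}-R_2^{(t)}$ of Lemma~\ref{lem:Approx_Obj_D1} is a tight lower bound that coincides with $f$ at the expansion point, and the quadratic-transform and accelerated-PG inner loops each never decrease their respective objectives, Theorem~\ref{theorm: convergence_property} yields $g(\mathbf{w}^{(r+1)},\mathbf{\Theta}^{(r)})\geq g(\mathbf{w}^{(r)},\mathbf{\Theta}^{(r)})$. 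Fixing then $\mathbf{w}^{(r+1)}$, the analogous tightness of the surrogate $\hat{\Upsilon}^{(t)}$ together with Theorem~\ref{theorm: convergence_SCAmanifold} gives $g(\mathbf{w}^{(r+1)},\mathbf{\Theta}^{(r+1)})\geq g(\mathbf{w}^{(r+1)},\mathbf{\Theta}^{(r)})$. Chaining the two inequalities makes $\{g(\mathbf{w}^{(r)},\mathbf{\Theta}^{(r)})\}$ monotonically non-decreasing. Because $\mathcal{F}$ is compact and $g$ is continuous, the sequence is bounded above, so by the monotone convergence theorem $g(\mathbf{w}^{(r)},\mathbf{\Theta}^{(r)})$ converges to a finite limit $g^\star$.

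Next I would extract a limit point. Compactness of $\mathcal{F}$ guarantees a subsequence along which $(\mathbf{w}^{(r)},\mathbf{\Theta}^{(r)})\to(\mathbf{w}^\star,\mathbf{\Theta}^\star)\in\mathcal{F}$. At this limit, Theorem~\ref{theorm: convergence_property} states that $\mathbf{w}^\star$ is a local maximizer---hence a KKT point---of $\mathcal{D}1$ with $\mathbf{\Theta}=\mathbf{\Theta}^\star$ held fixed, so the partial optimality condition in the $\mathbf{w}$-block holds, namely $-\nabla_{\mathbf{w}}g(\mathbf{w}^\star,\mathbf{\Theta}^\star)$ lies in the normal cone of $\mathcal{F}_{\mathbf{w}}$ at $\mathbf{w}^\star$. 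Symmetrically, Theorem~\ref{theorm: convergence_SCAmanifold} states that $\mathbf{\Theta}^\star$ is a stationary point of $\mathcal{Q}2$ with $\mathbf{w}=\mathbf{w}^\star$ fixed, i.e.\ the Riemannian gradient $\mathrm{grad}_{\mathbf{\Theta}}g(\mathbf{w}^\star,\mathbf{\Theta}^\star)$ vanishes on the oblique manifold.

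The final and most delicate step is to lift these two blockwise conditions to joint stationarity of $\mathcal{P}2^{[k,j]}$. The key structural fact is that $\mathcal{F}=\mathcal{F}_{\mathbf{w}}\times\mathcal{F}_{\mathbf{\Theta}}$ is a product set, so its tangent cone (equivalently, the Riemannian tangent space for the manifold factor) decomposes as the direct product of the two blockwise tangent cones, and likewise for the normal cone; since $g$ is continuously differentiable, its gradient decomposes as $(\nabla_{\mathbf{w}}g,\mathrm{grad}_{\mathbf{\Theta}}g)$, and the two partial conditions above state precisely that $-\nabla g(\mathbf{w}^\star,\mathbf{\Theta}^\star)$ belongs to the product normal cone, which is exactly the first-order stationarity condition of the joint problem. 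I expect this product-cone decomposition to be the main obstacle, both because it must be justified across two different geometries (an inequality-constrained Euclidean ball and a Riemannian manifold) and because it implicitly requires the two partial conditions to be evaluated at a common limit point; the cleanest way to close this gap is to argue that $\|(\mathbf{w}^{(r+1)},\mathbf{\Theta}^{(r+1)})-(\mathbf{w}^{(r)},\mathbf{\Theta}^{(r)})\|\to 0$, which follows from the objective converging together with the sufficient-ascent property of each accelerated inner solver, so that the blockwise stationarity conditions indeed hold simultaneously at $(\mathbf{w}^\star,\mathbf{\Theta}^\star)$. This establishes that $(\mathbf{w}^\star,\mathbf{\Theta}^\star)$ is a stationary point of $\mathcal{P}2^{[k,j]}$.
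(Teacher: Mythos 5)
Your skeleton — monotone ascent via the two chained inequalities $g(\mathbf{w}^{(r+1)},\mathbf{\Theta}^{(r)})\geq g(\mathbf{w}^{(r)},\mathbf{\Theta}^{(r)})$ and $g(\mathbf{w}^{(r+1)},\mathbf{\Theta}^{(r+1)})\geq g(\mathbf{w}^{(r+1)},\mathbf{\Theta}^{(r)})$, boundedness of the feasible set, and extraction of a limit point — coincides with the first half of the paper's proof (its Lemma~\ref{lem:limitPoint_convergence}). The divergence, and the genuine gap, is in your final lifting step. You correctly identify the classical index-mismatch problem of alternating maximization: Theorem~\ref{theorm: convergence_property} certifies that $\mathbf{w}^{(r+1)}$ is stationary for $\mathcal{D}1$ with $\mathbf{\Theta}^{(r)}$ fixed, while Theorem~\ref{theorm: convergence_SCAmanifold} certifies that $\mathbf{\Theta}^{(r+1)}$ is stationary given $\mathbf{w}^{(r+1)}$; the two blockwise gradient conditions are therefore never evaluated at the same pair, and to make them hold simultaneously at $(\mathbf{w}^\star,\mathbf{\Theta}^\star)$ you invoke $\|(\mathbf{w}^{(r+1)},\mathbf{\Theta}^{(r+1)})-(\mathbf{w}^{(r)},\mathbf{\Theta}^{(r)})\|\to 0$, justified by a ``sufficient-ascent property of each accelerated inner solver.'' No such property is established anywhere in the paper, and it is not a safe assumption: Nesterov/momentum-type schemes (the accelerated PG of Algorithm~\ref{alg:PG_Path} and the accelerated Riemannian method of Algorithm~\ref{alg:CG_obligueManifold}) are not even per-iteration monotone in general, and here each block update is the output of a nested multi-loop procedure (PFP outer loop, quadratic-transform loop, accelerated inner loop) for which no sufficient-increase inequality of the form $g(\mathbf{w}^{(r+1)},\mathbf{\Theta}^{(r)})-g(\mathbf{w}^{(r)},\mathbf{\Theta}^{(r)})\geq c\,\|\mathbf{w}^{(r+1)}-\mathbf{w}^{(r)}\|^2$ has been proved. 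Convergence of the objective values alone does not force the iterate differences to vanish, so your argument does not close.

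The paper avoids this obstacle by never passing gradient conditions of the iterates to the limit. Instead it rewrites $\mathcal{P}2^{[k,j]}$ in unconstrained form with indicator functions $\mathbb{I}_1(\mathbf{\Theta})$, $\mathbb{I}_2(\mathbf{w})$, and works with blockwise \emph{maximizer inequalities}, $Y(\mathbf{\Theta},\mathbf{w}_{r+1})+\mathbb{I}_1(\mathbf{\Theta})\leq Y(\mathbf{\Theta}_{r+1},\mathbf{w}_{r+1})+\mathbb{I}_1(\mathbf{\Theta}_{r+1})$ for all $\mathbf{\Theta}$ and its counterpart in $\mathbf{w}$. Both inequalities involve the \emph{same} pair $(\mathbf{\Theta}_{r+1},\mathbf{w}_{r+1})$, so they can be passed to a common limit point along a convergent subsequence using continuity of $Y$ and upper semicontinuity of the indicator functions, after which the first-order conditions are read off via limiting subdifferentials. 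In other words, the paper trades your pointwise gradient conditions (which require iterate differences to vanish) for variational inequalities indexed consistently across both blocks (which do not). If you wish to keep your normal-cone formulation, you would need to either (i) prove a genuine sufficient-ascent property for the (non-accelerated, monotone) versions of the inner solvers, or (ii) restate the blockwise conclusions of Theorems~\ref{theorm: convergence_property} and~\ref{theorm: convergence_SCAmanifold} as optimality inequalities valid against all competitors, which essentially reproduces the paper's route.
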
 
Noticing that $\mathcal{P}2$ consists of $KJ$ parallel subproblems in the form of $ \mathcal{P}2^{[k,j]}$, the overall algorithm for solving $\mathcal{P}2$ can be implemented in a parallel manner and is summarized in Algorithm~\ref{Overall_Algo:Sove_P1}, where the multi-core computing architecture can be leveraged for speeding up the computation.
\begin{algorithm}[H]
	\caption{Overall algorithm for solving $\mathcal{P}2$ } 
	\begin{algorithmic}[1]\label{Overall_Algo:Sove_P1}		
		\STATE Solve $\mathcal{P}2^{[k,j]}$ in a parallel manner for all $k,j$ by alternatively executing Algorithms~\ref{alg:PG_Path} and~\ref{alg:CG_obligueManifold}. 
		\STATE Compute the optimized objective function value of~\eqref{obj:P1_j_parallel} for all $k,j$.
		\STATE Select the minimum among $KJ$ objective function values of $\mathcal{P}2^{[k,j]}$.
	\end{algorithmic}
\end{algorithm}

For the computational complexity of Algorithm~\ref{Overall_Algo:Sove_P1}, it is dominated by the AM iteration in step 1. To be specific, Algorithm~\ref{alg:PG_Path} consists of the inner iterations over $i$, and the outer iterations over $l$ and $t$. 
Notice that the iteration with respect to $i$ is based on the accelerated PG method (i.e., step 7 in Algorithm~\ref{alg:PG_Path}), which only involves first-order differentiation. Therefore, it has $\mathcal{O}(N/\varrho)$ complexity order with an accuracy of $\varrho$~\cite{B_Bertseka97NP}. 
Together with the outer iterations, the complexity order of Algorithm~\ref{alg:PG_Path} is $\mathcal{O}(\mathcal{M}_3N/\varrho)$, where $\mathcal{M}_3$ is the outer iteration number for Algorithm~\ref{alg:PG_Path} to converge. 
On the other hand, Algorithm~\ref{alg:CG_obligueManifold} includes the inner iteration over $l$ and the outer iteration over $t$. The inner iteration is based on accelerated Riemannian manifold, which needs $\mathcal{O}(M\log(1/\varphi))$ operations for each outer iteration. Combined with the outer iteration, the complexity order of Algorithm~\ref{alg:CG_obligueManifold} is $\mathcal{O}(\mathcal{M}_4M\log(1/\varphi))$, where $\mathcal{M}_4$ is the outer iteration number.

Based on the above discussion, the complexities order for different algorithms are summarized in Table~\ref{tab:OptimizationMethodTable}, where $\mathcal{M}_1$ and $\mathcal{M}_2$ are the numbers of outer iterations for Algorithms~\ref{alg:conventional_D1} and~\ref{alg:conventional_Q1} to converge, respectively.\footnote{$\mathcal{M}_{i},i\in\{1,\ldots,4\}$ are within the same order of magnitude.}
\begin{table}[!htbp]
	\centering
	\caption{Summary of complexities of different algorithms}\label{tab:OptimizationMethodTable}
	\begin{tabular}{|c|c|c|}
\hline
		Type of Methods &Proposed Algorithms & Complexity Order \\
\hline
		\multirow{2}*{conventional method} &Algorithm~\ref{alg:conventional_D1}&   $\mathcal{O}\left(\mathcal{M}_1N^3/\varrho\right)$\\
	& Algorithm~\ref{alg:conventional_Q1} &$\mathcal{O}(\mathcal{M}_{2}\sqrt{M}(2M^4+M^3)\log(1/\varphi))$\\
	 \cline{1-3}
	\multirow{2}*{accelerated first-order method} 	&Algorithm~\ref{alg:PG_Path}&   $\mathcal{O}\left(\mathcal{M}_3N/\varrho\right)$\\
	& Algorithm~\ref{alg:CG_obligueManifold} &$\mathcal{O}(\mathcal{M}_4M\log(1/\varphi))$\\
\hline
	\end{tabular}
\end{table}
Compared with Algorithms~\ref{alg:conventional_D1} and~\ref{alg:conventional_Q1}, the complexity of accelerated first-order method is linear in $N$ or $M$. Hence, they are suitable for massive antennas and massive reflecting elements networks.

\noindent

\section{Simulation Results and Discussion}\label{Sec:VI}
In this section, we evaluate the secure transmission performance of the proposed algorithm through simulations. All problem instances are simulated using MATLAB R2017a on a Windows x64 desktop with 3.2 GHz CPU and 16 GB RAM, and the simulation results are obtained via averaging over 500 simulation trials, with independent users' and Eves' locations, channels, and noise realizations in each trial. Unless otherwise specified, the simulation set-up is as follows and kept throughout this section. 
There are 5 users and 10 Eves in the whole system, 
The BS and RIS are located at (0 m, 0 m), (50 m, 0 m).
All users are uniformly and randomly distributed in a circle centered at (50 m, 20 m) with radius 5 m. 
The distance from Eves to the RIS is randomly generated between 1 m and 10 m, and the path-loss exponent is set according to the 3GPP propagation environment~\cite{J_IRS_EE19Huang}.
As a result, the path-loss coefficients $\{\alpha_{1},\alpha_{r,k},\alpha_{d,k},\alpha_{r,j},\alpha_{d,j}\}$ can be respectively obtained based on the signal propagation model~\cite{J_Andes95_Prog}. Once the large-scale fading parameters are generated, they are assumed to be known and fixed throughout the simulations. For small-scale fading, the channel coefficients are generated from identically distributed and circularly complex Gaussian random variables, which are chosen from $\mathcal{CN}(0,1)$~\cite{J_Chu21IRSecure}.
The bandwidth is set to be $10$ MHz and the noise power spectral density for users and Eves are $\sigma^2_k=\sigma^2_j=-96$ dBm/Hz.
The power amplifier efficiency is $\eta=0.311$. The circuit power at the BS and the user are $P_a=39$ dBm and $P_c=20$ dBm, respectively~\cite{J_Archiyou20IRS_EE}. The static hardware-dissipated power at each reflecting element is $P_s = 10$ dBm~\cite{J_Archiyou20IRS_EE}. Besides, it is assumed that the upper bound of SOP for all users are set to be equal, i.e., $\varepsilon_k=\varepsilon$.
To avoid repeating figure descriptions, the settings for $(M,N,K,J,\varepsilon,P_\mathrm{max})$ are provided in the caption of each figure. 

\begin{figure*} 
	\centering
	\subfigure[The iterations of Algorithm~\ref{alg:PG_Path}.]{ 
		\label{fig:Converge_Alg1}
		\includegraphics[width=2in]{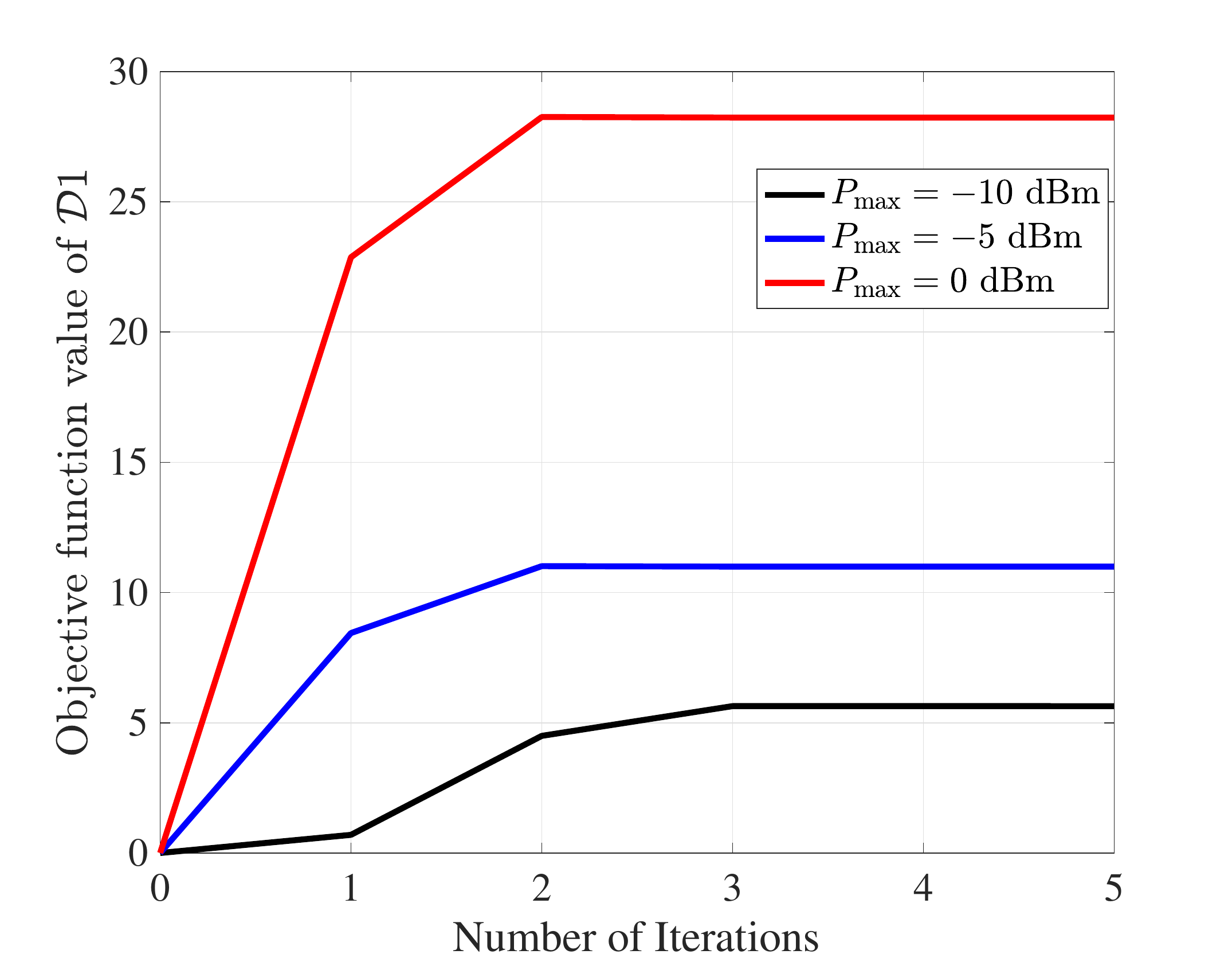}} \hspace{0in} 
	\subfigure[The iterations of Algorithm~\ref{alg:CG_obligueManifold}.]{
		\label{fig:Conver_alg2} 
		\includegraphics[width=2in]{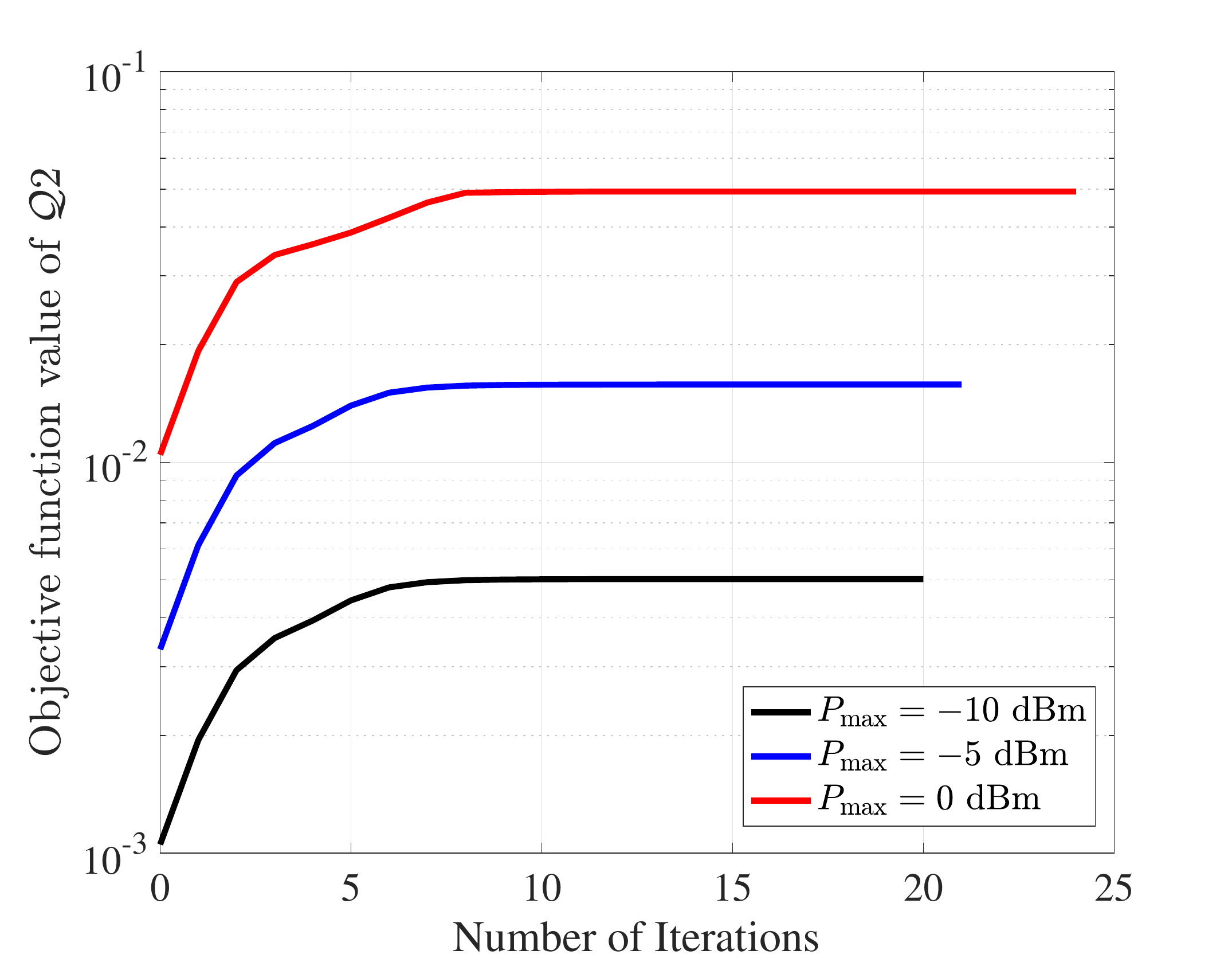}}  \hspace{0in} 
	\subfigure[The AM iterations in Algorithm~\ref{Overall_Algo:Sove_P1}.]{
		\label{fig:Conver_alg3} 
		\includegraphics[width=2in]{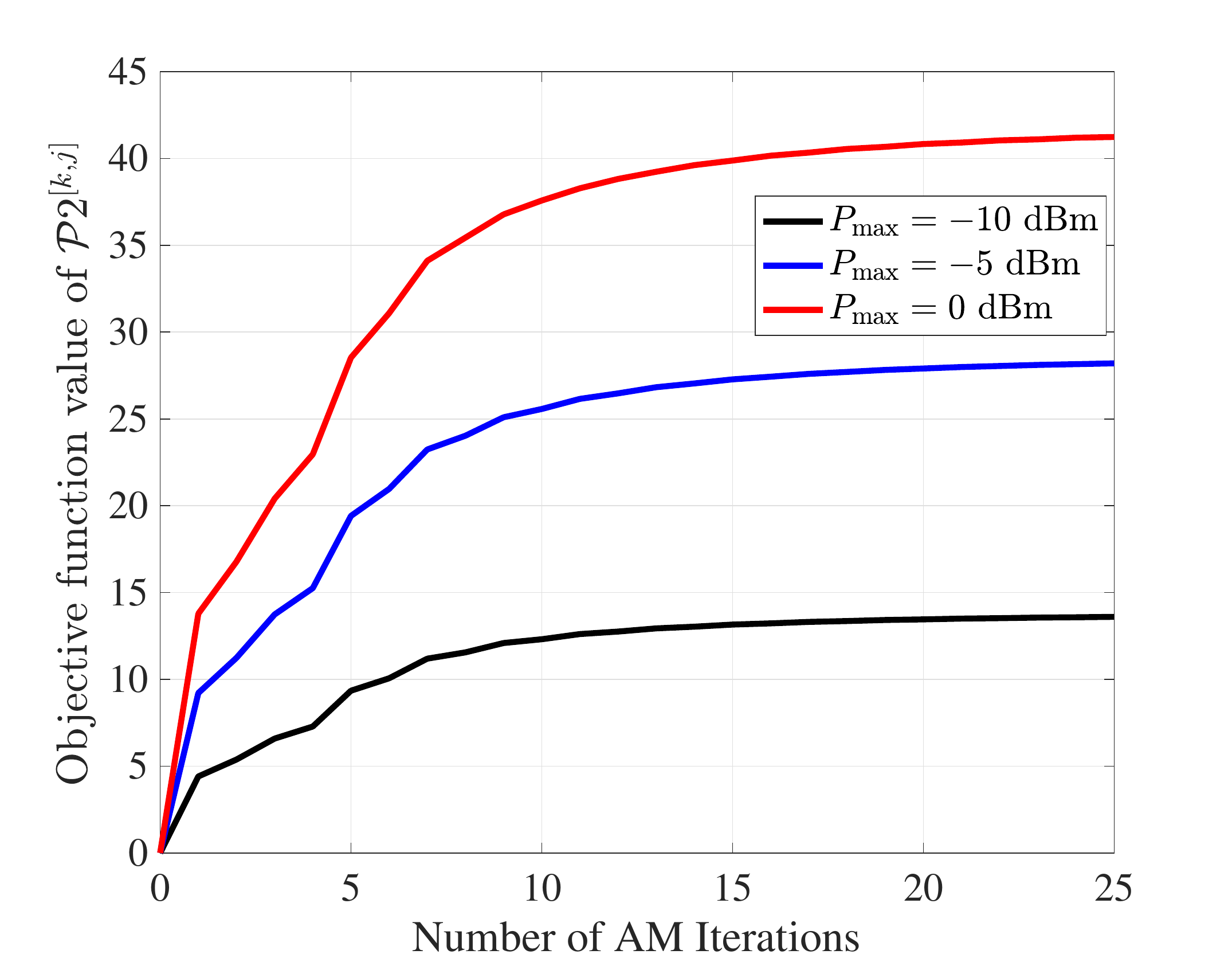}} 
	\caption{{Convergence behaviour of algorithms with $M=10$, $N=10$, $K=5$, $J=10$, $\varepsilon=0.1$.}}  
\end{figure*}
\subsection{Performance of the Proposed Algorithm}
First, we demonstrate the convergence behaviour of Algorithms~\ref{alg:PG_Path} and~\ref{alg:CG_obligueManifold}.  
Notice that Algorithm~\ref{alg:PG_Path} consists of multiple layers of iterations. The stopping criterion for each layer depends on the relative change of the corresponding objective function (e.g., less than $10^{-4}$), and the convergence results for updating $\mathbf{w}$ under fixed $\mathbf{\Theta}$ are shown in Fig.~\ref{fig:Converge_Alg1}. 
It is observed that Algorithm~\ref{alg:PG_Path} achieves fast convergence under different values of $P_\mathrm{max}$, which corroborates the results in Theorem~\ref{theorm: convergence_property}.
When fixing the number of reflecting elements $M$ and transmit antennas $N$, the objective function value of $\mathcal{D}1$ increases with $P_\mathrm{max}$.
On the other hand, the convergence of Algorithm~\ref{alg:CG_obligueManifold} is shown in Fig.~\ref{fig:Conver_alg2}. It can be seen that the accelerated Riemannian manifold algorithm for solving $\mathcal{Q}2$ converges within 20 iterations under different values of $P_\mathrm{max}$, which corroborates the results in Theorem~\ref{theorm: convergence_SCAmanifold}.
To verify the overall convergence of Algorithm~\ref{Overall_Algo:Sove_P1} by alternatively executing Algorithms~\ref{alg:PG_Path} and~\ref{alg:CG_obligueManifold}, Fig.~\ref{fig:Conver_alg3} shows the objective function value of $\mathcal{P}2^{[k,j]}$ versus the AM iteration.
It can be seen that the AM converges rapidly within 25 iterations under different values of $P_\mathrm{max}$, which corroborates the convergence result of Theorem~\ref{tem:conver_AM}.
\begin{figure*} 
	\centering
	\subfigure[Average computation time versus $N$ with $M=10$, $K=5$, $J=10$.]{ 
		\label{fig:time_compare}
		\includegraphics[width=2.8in]{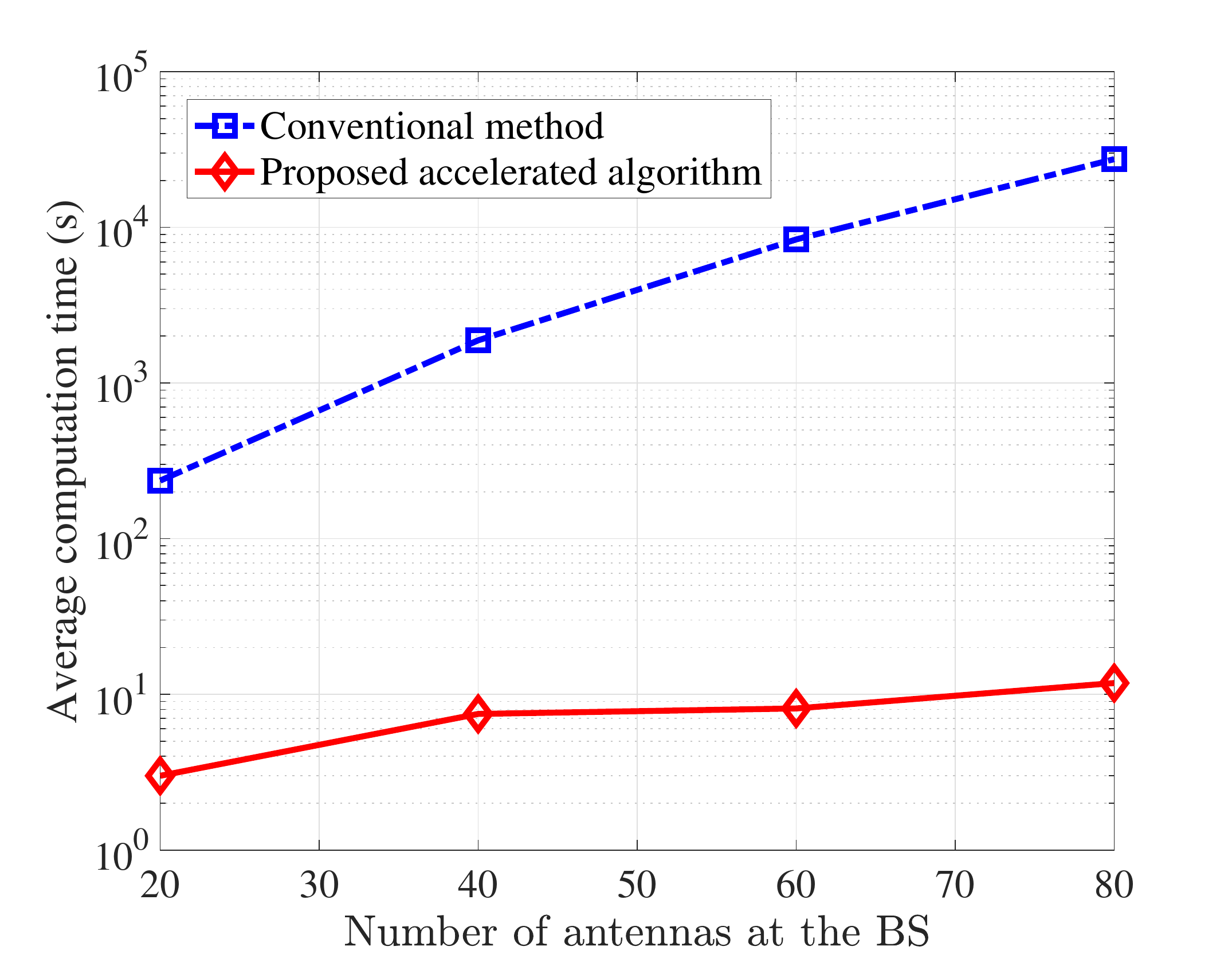}} \hspace{0.0in} 
		\subfigure[Average computation time versus $M$ with $N=10$, $K=5$, $J=10$.]{ 
		\label{fig:time_compare_M}
		\includegraphics[width=2.8in]{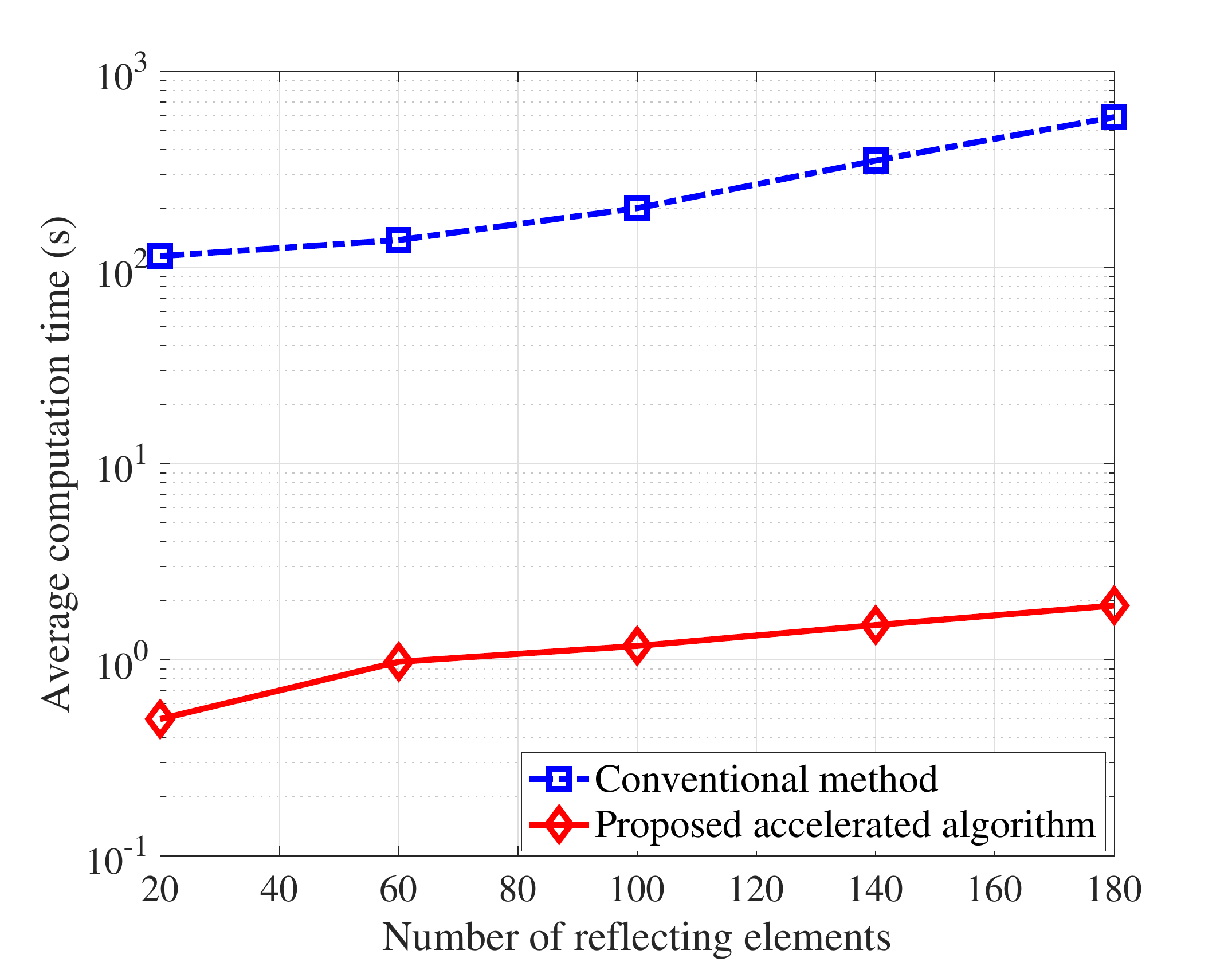}} \hspace{0.0in} 
	\subfigure[Average secure EE versus $N$ with $M=10$, $K=5$.]{
		\label{fig:Perf_alg_compare} 
		\includegraphics[width=2.8in]{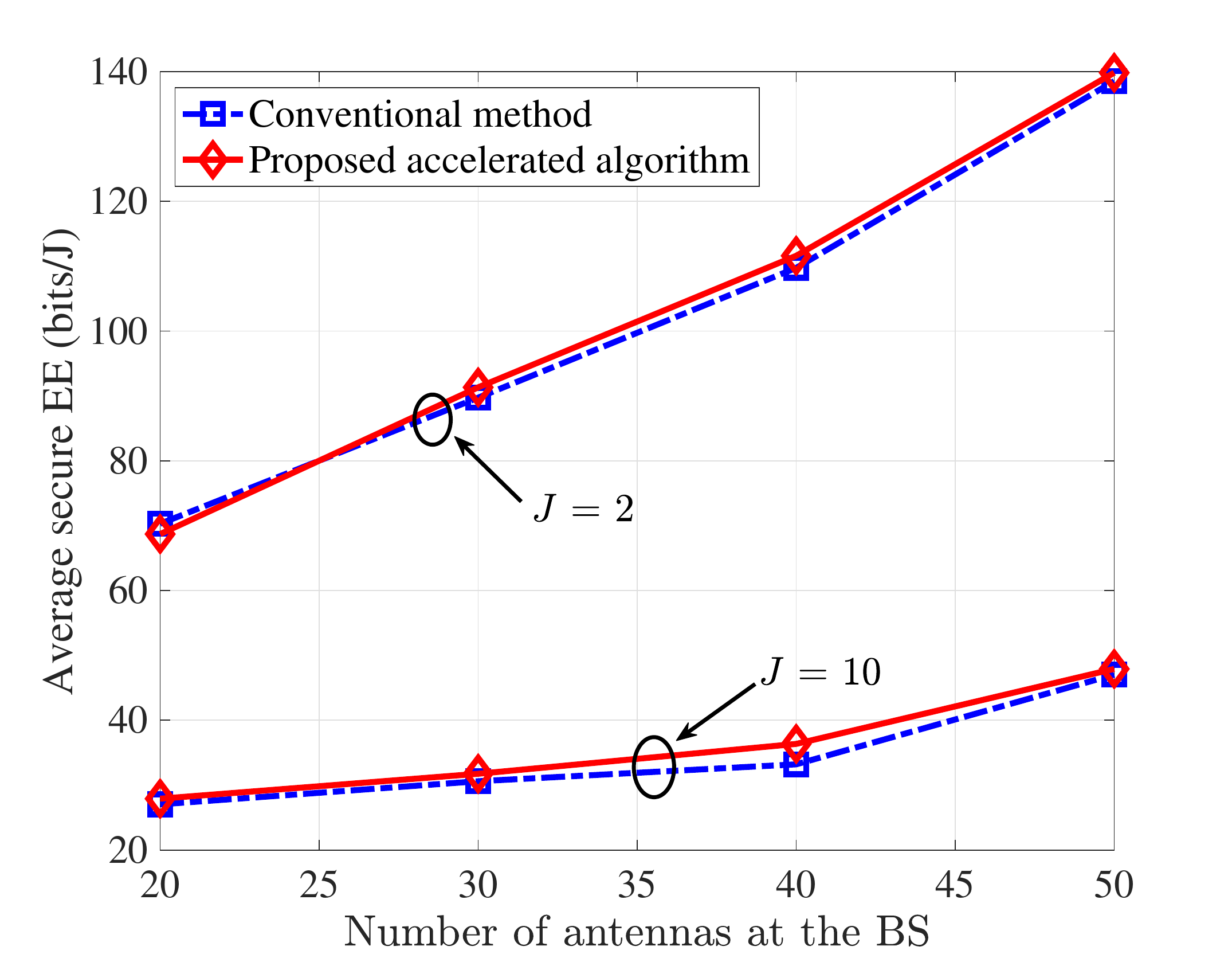}}
	\subfigure[Average secure EE versus $M$ with $N=10$, $J=10$.]{
		\label{fig:Perf_alg_compare_M} 
		\includegraphics[width=2.8in]{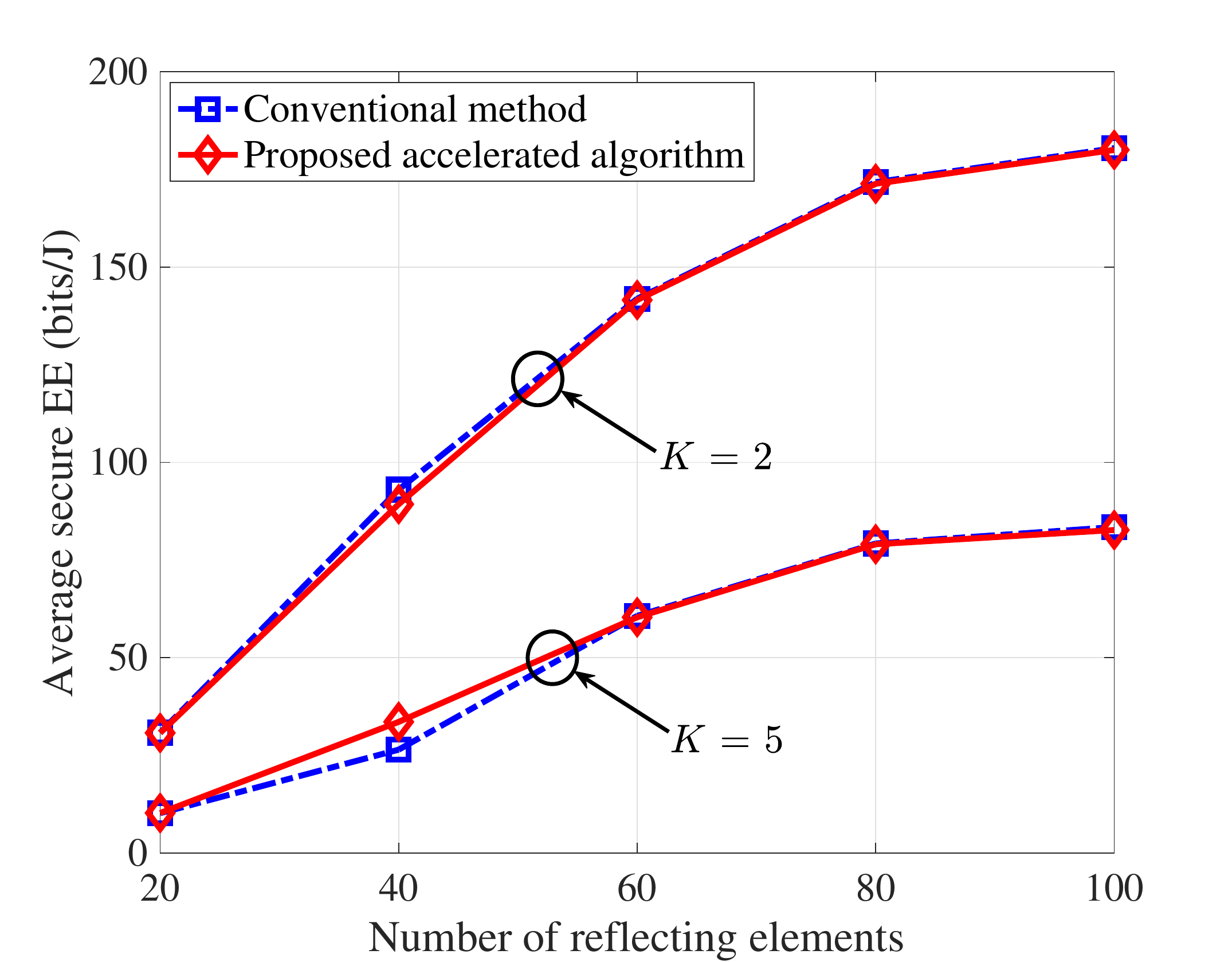}}  
	\caption{{Performance comparison with the conventional method with $\varepsilon=0.5$ and $P_\mathrm{max}=0$ dBm.}}  
\end{figure*}

Next, to show the complexity advantage of the proposed Algorithm~\ref{Overall_Algo:Sove_P1} for solving $\mathcal{P}2$, we compare it with the conventional method by alternatively executing Algorithms~\ref{alg:conventional_D1} and~\ref{alg:conventional_Q1} to solve each subproblem of $\mathcal{P}2$. The convergence tolerance and maximum number of iterations for the conventional method are set to $10^{-4}$ and 50, respectively. 
As shown in Fig.~\ref{fig:time_compare} and Fig.~\ref{fig:time_compare_M}, with the increase of transmit antennas at the BS and reflecting elements in the RIS, the proposed Algorithm~\ref{Overall_Algo:Sove_P1} reduces the computation time by at least two orders of magnitude compared with the conventional method and the advantage becomes more prominent as the number of transmit antenna or reflecting element increases. Since the proposed accelerated first-order algorithms replace the steps in conventional method that uses the interior-point method, the number of layers of iterative processes does not increase but the computation time is significantly reduced.
On the other hand, Fig.~\ref{fig:Perf_alg_compare} and Fig.~\ref{fig:Perf_alg_compare_M} show that the proposed Algorithm~\ref{Overall_Algo:Sove_P1} achieves almost the same average secure EE as the conventional method. 

Furthermore, with an increase of $J$, the average secure EE is degraded due to more Eves in the system as shown in Fig.~\ref{fig:Perf_alg_compare}.
With the increase of $M$, the average secure EE is increasing under different values of $K$ as shown in Fig.~\ref{fig:Perf_alg_compare_M}.
However, the average secure EE decreases when the number of users increases, since the overall performance is determined by the worse case user in network.
Since the degrees of freedom increase with the increase of antennas or reflecting elements, the average secure EE is increasing in $N$ or $M$. 
Due to the computational complexity advantage of the accelerated first-order algorithm, we only provide the solution to $\mathcal{P}2$ obtained via Algorithm~\ref{Overall_Algo:Sove_P1} in the following discussion.

\subsection{Performance Comparison with Other RIS Schemes}
		\begin{figure}[tb]
		\centering
		\includegraphics[scale=0.35]{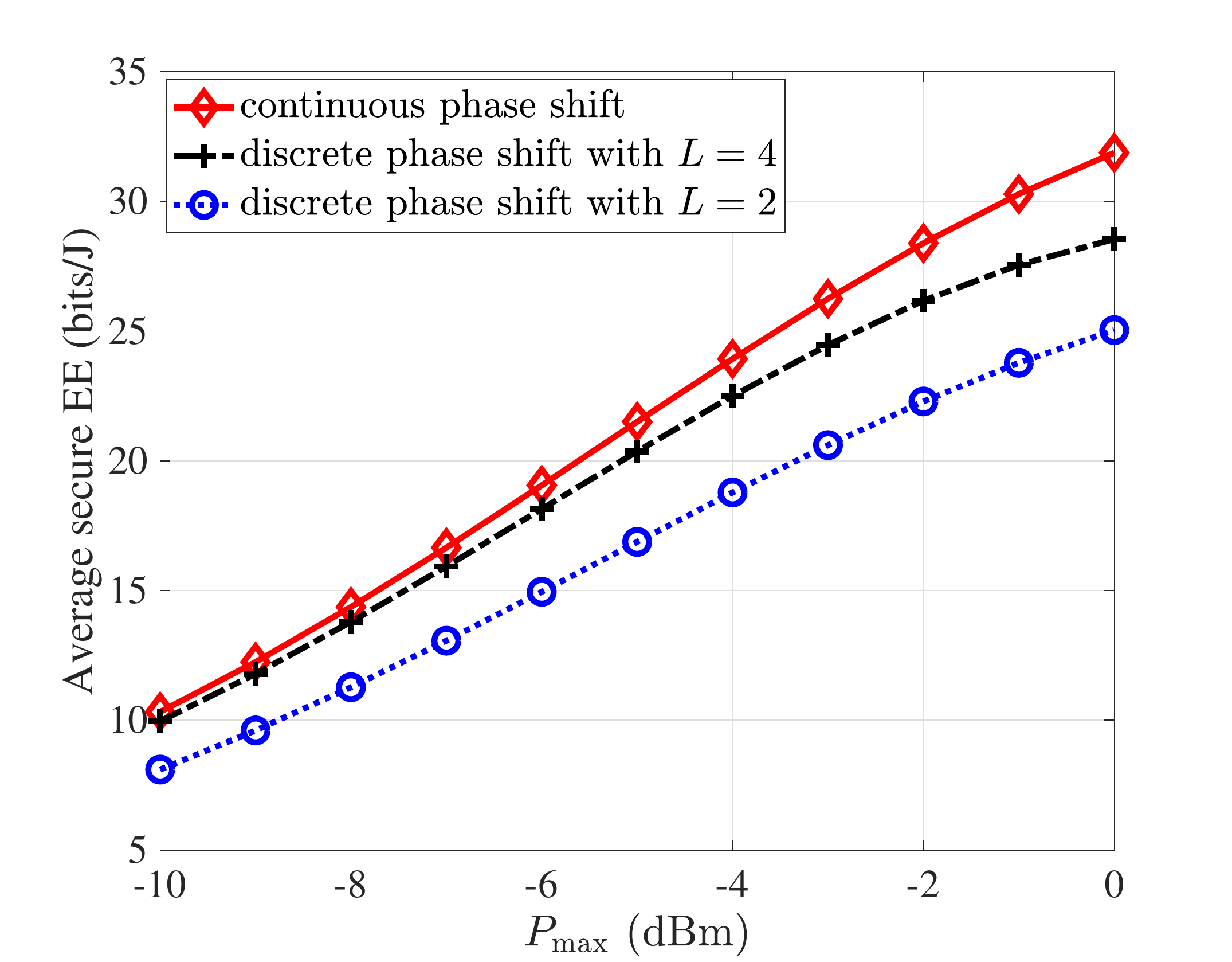}
		\caption{{Comparison with different quantization levels $L$ with $M = 8$, $N = 10$, $K=5$, $J=10$, $\varepsilon=0.1$.}} \label{fig:ObjValue_Q2} 
	\end{figure}
First, we demonstrate the performance of RIS with discrete phase shifts, where the average secure EE versus transmit power under different numbers of allowable phase shifts $L$ is shown in Fig.~\ref{fig:ObjValue_Q2}. For the discrete case, we simply take the optimized RIS phases from Algorithm~\ref{Overall_Algo:Sove_P1} and then project them to the nearest allowable values. It is observed that with the increase of $L$, the average secure EEs approach that of the continuous phase shift case due to a finer resolution of reflecting coefficients of RIS. Furthermore, we can observe from Fig.~\ref{fig:ObjValue_Q2} that the case of $L=4$ is already close to the continuous phase case. It is expected that the quantization loss will be insignificant when $L>4$.

Next, we compare the proposed RIS scheme with a random phase-shift (RPS) baseline scheme, where the phase-shift coefficients are randomly generated with equal probability and Algorithm~\ref{Overall_Algo:Sove_P1} is applied without updating the phase-shift of RIS. To make a fair comparison with the RPS scheme~\cite{J_Chen19IRS_Secure}, we simulate all schemes under the same security requirement and select $K=J=1$.
Furthermore, we compare the proposed scheme with the fixed phase-shift (FPS) baseline scheme~\cite{J_Archiyou20IRS_EE} that only optimizes the beamforming vector, i.e., $\mathbf{\Theta}=\mathbf{I}_M$.
To show the importance of considering the imperfect CSI of Eves' channels, we also compare the proposed scheme with the RIS aided secure transmission that ignoring CSI uncertainty. 
Without loss of generality, only the results of continuous phase shift is presented here, as it can be regarded as an upper bound of the discrete phase case.

To begin with, we illustrate the impact of SOP constraint on the system performance, where all average secure EEs are increasing in $\varepsilon$ as shown in Fig.~\ref{fig:SeEE_epsilon}, and the proposed scheme always achieves a significantly higher average secure EE than other baseline schemes and the RIS ignoring CSI uncertainty. 
Besides, the FPS scheme has a better performance than the RPS scheme, since some optimized phase shifts are quite close to 1. 
When comparing RIS schemes under different values of $P_\mathrm{max}$, Fig.~\ref{fig:Se_EE_Power} shows that the proposed scheme significantly improves the average secure EE. Moreover, the performance gaps between the proposed solution and other baseline schemes increase with an increase of $\varepsilon$ or $P_\mathrm{max}$. The results from Fig.~\ref{fig:perform_otherBaLiScheme} validate the strength of the proposed RIS aided transmission scheme. 
\begin{figure*}	\label{fig:SeEE_baselines}
	\centering
	\subfigure[Average secure EE versus $\varepsilon$: $P_\mathrm{max}=0$ dBm.]{ 
		\label{fig:SeEE_epsilon}
		\includegraphics[width=3in]{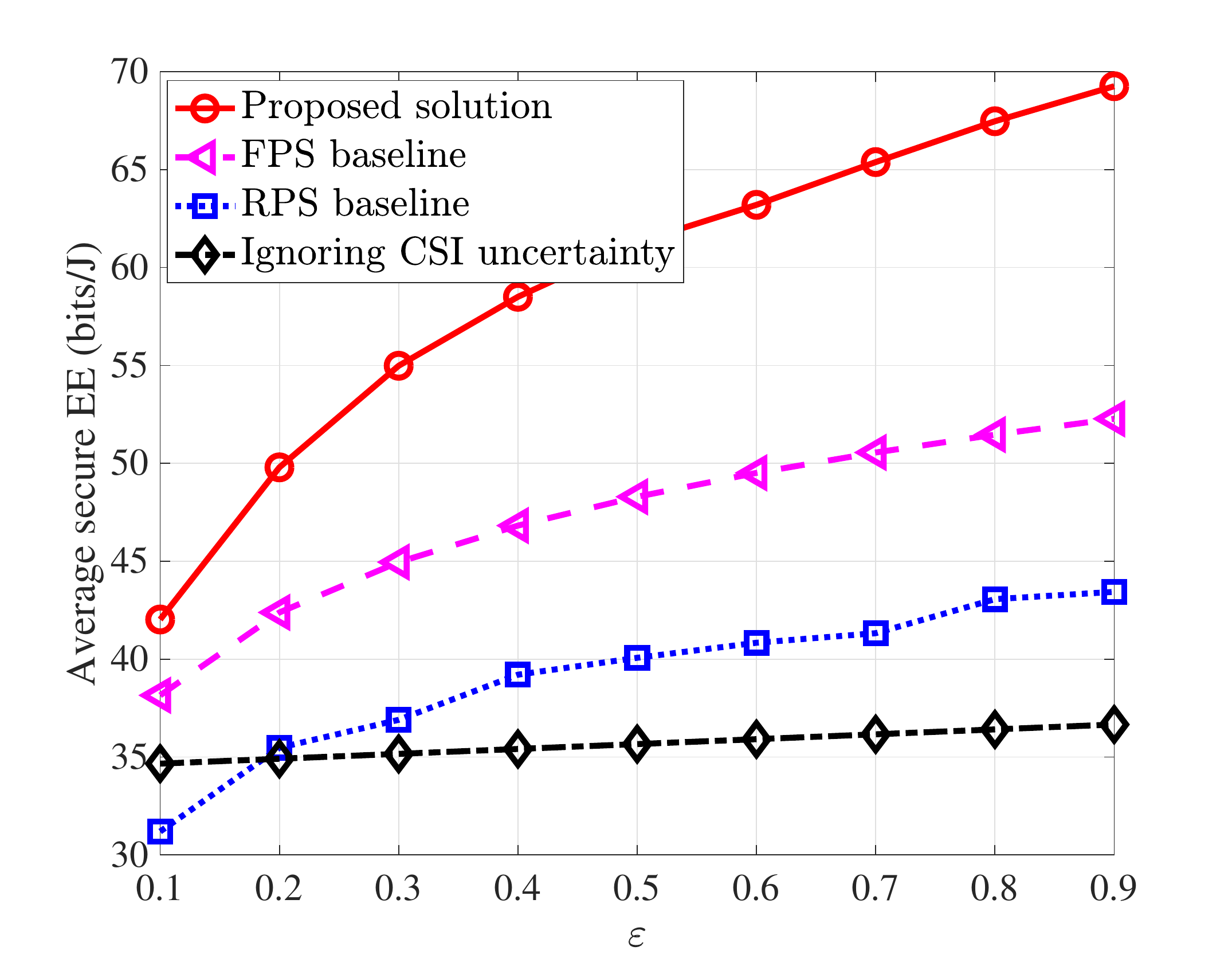}} \hspace{0.0in} 
	\subfigure[Average secure EE versus $P_\mathrm{max}$: $\varepsilon=0.5$.]{
		\label{fig:Se_EE_Power} 
		\includegraphics[width=3in]{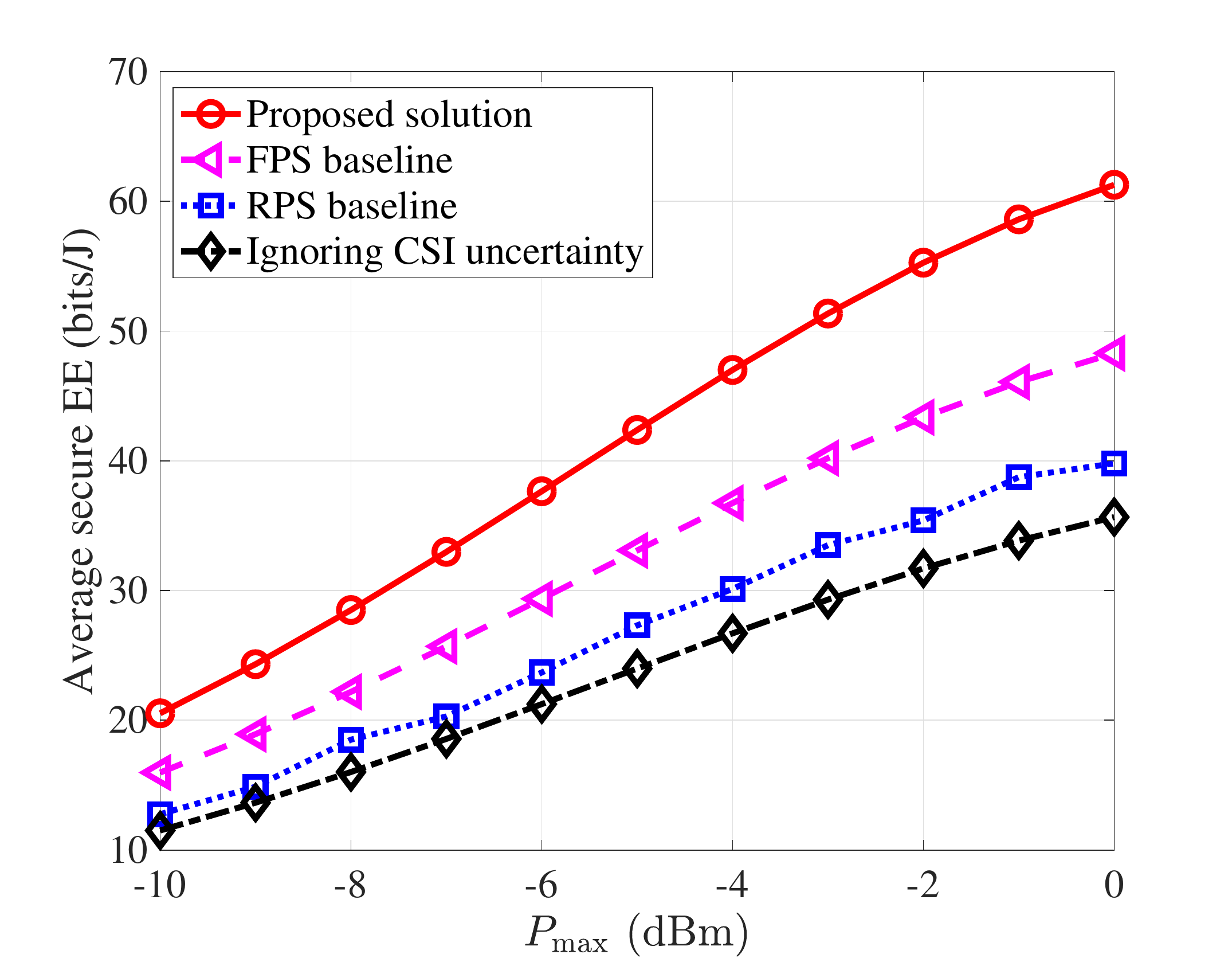}} \hspace{0.0in} 
	\caption{Comparison with other RIS schemes with $M = 10$, $N = 10$, $K=J=1$.} \label{fig:perform_otherBaLiScheme}
\end{figure*}

\section{Conclusion}\label{Sec:VII}
This paper studied the massive RIS aided secure multicast transmission with channel uncertainty, and proposed accelerated first-order algorithms to maximize the secure EE by optimizing the transmit beamformer, RIS phase-shift, and transmission rate. The proposed accelerated first-order algorithm has a linear complexity order with respect to transmit antennas at the BS and reflecting elements in the RIS, making it ideal for massive antennas and massive RIS applications. Numerical results demonstrated that the proposed accelerated first-order algorithm achieves identical performance to the conventional method but saves at least two orders of magnitude in computation time, and it significantly improves the secure EE compared to baseline schemes of random phase-shift, fixed phase-shift, and RIS ignoring CSI uncertainty.

\appendices

\section{Proof of Theorem~\ref{the:CF_SOP_prob}}
\label{appe:Derive_SOP}
Denote $\mathbf{\Omega}=\sqrt{\alpha_{1}\alpha_{r,j}}\mathbf{{g}}_{r,j}^{H}\mathbf{\Theta}\mathbf{H}+\sqrt{\alpha_{d,j}}\mathbf{{g}}_{d,j}^{H}\in \mathbb{C}^{1\times N}$, the SOP of~\eqref{eq:SOP_Cons} can be rewritten as
\begin{align}
p^{k,j}_\mathrm{so}&=\mathrm{Pr}\left\{ \log_2\left(1+\frac{|\mathbf{\Omega}\mathbf{w}|^2}{\sigma^2_j}\right)>D_{k,j}\right\}\nonumber \\
&=\mathrm{Pr}\left\{\mathbf{w}^H\mathbf{\Omega}^H\mathbf{\Omega}\mathbf{w}>{\sigma^2_j(2^{D_{k,j}}-1)}\right\}.\label{eq: SOP_deriva} 
\end{align}
Since $\mathbf{g}_{r,j}\sim \mathcal{CN}(\mathbf{0},\mu_{r,j}^2\mathbf{I}_M)$, we have $\mathbf{\Theta}\mathbf{H}\mathbf{{g}}_{r,j}\sim \mathcal{CN}(\mathbf{0},\mu_{r,j}^2\mathbf{\Theta}\mathbf{H}(\mathbf{\Theta}\mathbf{H})^H)$. Therefore, the random vector $\mathcal{X}:=\sqrt{\alpha_{1}\alpha_{r,j}}\mathbf{{g}}^H_{r,j}\mathbf{\Theta}\mathbf{H}$ satisfies $\mathcal{X}\sim \mathcal{CN}(\mathbf{0},\alpha_{1}\alpha_{r,j}\mu_{r,j}^2(\mathbf{\Theta}\mathbf{H})^H\mathbf{\Theta}\mathbf{H})$. On the other hand, since $\mathbf{g}_{d,j}\sim \mathcal{CN}(\mathbf{0},\mu_{d,j}^2\mathbf{I}_N)$, the random vector $\mathcal{Y}:=\sqrt{\alpha_{d,j}}\mathbf{{g}}^H_{d,j}$ satisfies
$\mathcal{Y}\sim \mathcal{CN}(\mathbf{0},\alpha_{d,j}\mu_{d,j}^2\mathbf{I}_N)$.
Furthermore, since $\mathcal{X}$ and $\mathcal{Y}$ are independent random vectors, we have 
\begin{equation}
\mathbf{\Omega}=\mathcal{X}+\mathcal{Y}\sim \mathcal{CN}(\mathbf{0},\alpha_{1}\alpha_{r,j}\mu_{r,j}^2(\mathbf{\Theta}\mathbf{H})^H\mathbf{\Theta}\mathbf{H}+\alpha_{d,j}\mu_{d,j}^2\mathbf{I}_N),
\end{equation}
and the expectation of $\mathbf{w}^H\mathbf{\Omega}^H\mathbf{\Omega}\mathbf{w}$ is given by~\cite{J_Kanduk02SOP}
\begin{equation}\label{eq:Expectation_auteo}
\mathbb{E}\left\{\mathbf{w}^H\mathbf{\Omega}^H\mathbf{\Omega}\mathbf{w}\right\}= \mathbf{w}^H(\alpha_{1}\alpha_{r,j}\mu_{r,j}^2(\mathbf{\Theta}\mathbf{H})^H\mathbf{\Theta}\mathbf{H}+\alpha_{d,j}\mu_{d,j}^2\mathbf{I}_N)\mathbf{w}.
\end{equation} 
Moreover, it is known that the received signal power follows the exponential distribution~\cite{B_S_Prin02}. Together with~\eqref{eq:Expectation_auteo}, we have $\mathbf{w}^H\mathbf{\Omega}^H\mathbf{\Omega}\mathbf{w}\sim\mathrm{Exp}(\alpha_{1}\alpha_{r,j}\mu_{r,j}^2\mathbf{w}^H(\mathbf{\Theta}\mathbf{H})^H\mathbf{\Theta}\mathbf{H}\mathbf{w}+\alpha_{d,j}\mu_{d,j}^2\mathbf{w}^H\mathbf{w})$, and a closed-form expression of~\eqref{eq: SOP_deriva} can be obtained as shown in~\eqref{eq:them1_SOP}.

\section{Proof of Theorem~\ref{tem:FP_CCP_def}}\label{Rank:SDR_1}
Since the function value of $\hat{F}(\mathbf{W};\mathbf{W}^{(n)})$ must be non-negative at optimality of problem~\eqref{opt:D_W_Dinkel_nonCon}, the pointwise maximum operation $[\cdot]^+$ in~\eqref{opt:final_IPM_W} can be dropped, and problem~\eqref{opt:final_IPM_W} is rewritten as
\begin{equation}\label{opt:obj_cons_W_CCP}
\max_{\mathbf{W}\succeq \mathbf{0}}~
2y^{(l)}\sqrt{\hat{F}(\mathbf{W};\mathbf{W}^{(n)})}-(y^{(l)})^2\left(\frac{1}{\eta}\mathrm{Tr}\left(\mathbf{W}\right)\!+\! P_a\!+\!KP_c \!+\! MP_s\right), ~\mathrm{s.t.}~
\mathrm{Tr}\left(\mathbf{W}\right)\leq P_\mathrm{max}.
\end{equation}
Then the Lagrangian function of~\eqref{opt:obj_cons_W_CCP} is given by
\begin{equation}
\begin{split}
\mathcal{L}\left(\mathbf{W},\mathbf{\Xi},\mu\right)=&-2y^{(l)}\sqrt{\hat{F}(\mathbf{W};\mathbf{W}^{(n)})}+(y^{(l)})^2\left(\frac{1}{\eta}\mathrm{Tr}\left(\mathbf{W}\right)+ P_a+KP_c + MP_s\right)\\
&+\mu\left(\mathrm{Tr}\left(\mathbf{W}\right)- P_\mathrm{max}\right)
- \mathrm{Tr}\left(\mathbf{\Xi}\mathbf{W}\right),
\end{split}
\end{equation}
where $\mathbf{\Xi} \in \mathbb{C}_+^{N\times N}$ and $\mu\geq 0$ are the dual variables corresponding to the constraints in~\eqref{opt:obj_cons_W_CCP}. Therefore, the optimal solution of~\eqref{opt:obj_cons_W_CCP} must satisfy the following KKT conditions:
$\frac{\partial \mathcal{L}\left(\mathbf{W},\mathbf{\Xi},\mu\right)}{\partial \mathbf{W}}=-\frac{y^{(l)}(\hat{F}(\mathbf{W};\mathbf{W}^{(n)}))^{-\frac{1}{2}}\bf{B}^H\bf{B}}{\left(\sigma^2_k+\bf{B}\mathbf{W}\bf{B}^H\right)\ln2}+\mathbf{\Lambda}-\mathbf{\Xi}=\mathbf{0}$,
	$\mathbf{\Xi}\mathbf{W}=\mathbf{0}$, and $\mathbf{\Xi}\succeq \mathbf{0}$,
	where $\bf{B}=(\sqrt{\alpha_{1}\alpha_{r,k}}\mathbf{h}_{r,k}^H\mathbf{\Theta}\mathbf{H}+\sqrt{\alpha_{d,k}}\mathbf{h}^H_{d,k})\in\mathbb{C}^{1\times N}$ and $\mathbf{\Lambda}$ is given by
	\begin{equation}\label{eq:matrix_lamda}
	\mathbf{\Lambda}= \left(\frac{(y^{(l)})^2}{\eta}+\mu\right)\mathbf{I}_N+
\frac{y^{(l)}(\hat{F}(\mathbf{W};\mathbf{W}^{(n)}))^{-\frac{1}{2}}(\kappa_{1,j}(\mathbf{\Theta}\mathbf{H})^H\mathbf{\Theta}\mathbf{H}+\kappa_{2,j}\mathbf{I}_N)\ln\varepsilon_k^{-1}
}{\left(1+\left(\kappa_{1,j}\mathrm{Tr}(\mathbf{\Theta}\mathbf{H}\mathbf{W}^{(n)}(\mathbf{\Theta}\mathbf{H})^H)+\kappa_{2,j}\mathrm{Tr}(\mathbf{W}^{(n)})\right)\ln\varepsilon_k^{-1}\right)\ln 2}.
\end{equation}
Hence, the optimal primal variable $\mathbf{W}^*$ and dual variable $\mathbf{\Xi}^*$ should satisfy
\begin{equation}
\begin{split}
\mathbf{\Xi}^*=-\frac{y^{(l)}(\hat{F}(\mathbf{W};\mathbf{W}^{(n)}))^{-\frac{1}{2}}\bf{B}^H\bf{B}}{\left(\sigma^2_k+\bf{B}\mathbf{W}\bf{B}^H\right)\ln2}+
\mathbf{\Lambda}.
\end{split}
\end{equation}
By putting $\mathbf{\Xi}^*$ into condition $\mathbf{\Xi}\mathbf{W}=\mathbf{0}$, the optimal $\mathbf{W}^*$ must satisfy
\begin{equation}\label{eq:tem_Inverse_W}
\begin{split}
&\mathbf{\Lambda}\mathbf{W}^*=\frac{y^{(l)}(\hat{F}(\mathbf{W};\mathbf{W}^{(n)}))^{-\frac{1}{2}}\bf{B}^H\bf{B}}{\left(\sigma^2_k+\bf{B}\mathbf{W}\bf{B}^H\right)\ln2}\mathbf{W}^*.
\end{split}
\end{equation}
Notice that matrix $\mathbf{\Lambda}$ of~\eqref{eq:matrix_lamda} is invertible. As a result,~\eqref{eq:tem_Inverse_W} can be rewritten as
\begin{equation}\label{eq:W_rank-1}
\begin{split}
\mathbf{W}^*=
\mathbf{\Lambda}^{-1}
\frac{y^{(l)}(\hat{F}(\mathbf{W};\mathbf{W}^{(n)}))^{-\frac{1}{2}}\bf{B}^H\bf{B}}{\left(\sigma^2_k+\bf{B}\mathbf{W}\bf{B}^H\right)\ln2}\mathbf{W}^*.
\end{split}
\end{equation}
Then taking the rank on both sides of~\eqref{eq:W_rank-1}, we have the following rank relation
\begin{align}
\mathrm{rank}(\mathbf{W}^*)
&=\mathrm{rank}\left(
\mathbf{\Lambda}^{-1}
\frac{y^{(l)}(\hat{F}(\mathbf{W};\mathbf{W}^{(n)}))^{-\frac{1}{2}}\bf{B}^H\bf{B}}{\left(\sigma^2_k+\bf{B}\mathbf{W}\bf{B}^H\right)\ln2}\mathbf{W}^*\right) \nonumber\\
&\leq \mathrm{rank}\left(\frac{y^{(l)}(\hat{F}(\mathbf{W};\mathbf{W}^{(n)}))^{-\frac{1}{2}}\bf{B}^H\bf{B}}{\left(\sigma^2_k+\bf{B}\mathbf{W}\bf{B}^H\right)\ln2}\right) \label{ineq: rank_ineq_ref}\\
&= \mathrm{rank}\left(\bf{B}^H\bf{B}\right), \label{ineq: rank_ineq_W}
\end{align}
where~\eqref{ineq: rank_ineq_ref} follows from~\cite[Lemma 4]{J_Tian16Equality}. 

On the other hand, notice that $\mathrm{rank}\left(\mathbf{q}\right)=\mathrm{rank}\left(\mathbf{q}^H\mathbf{q}\right)\leq 1$ always holds for $\mathbf{q}\in \mathbb{C}^{1\times N}$~\cite{B_LAaM14Banj}. By substituting $\bf{B}=\mathbf{q}$ into~\eqref{ineq: rank_ineq_W}, we can obtain $\mathrm{rank}(\mathbf{W}^*)\leq 1 $. By eliminating the trivial solution of $\mathbf{W}^*=\mathbf{0}$, we can conclude that $\mathrm{rank}(\mathbf{W}^*)=1$ holds.

\section{Proof of Lemma~\ref{lem:Approx_Obj_D1}}\label{lem:App_LB_obj}
From~\eqref{eq:f_objective_nu}, $f\left(\mathbf{w}\right)$ can be rewritten as
\begin{align}\label{eq: first_order_alg_fw}
f\left(\mathbf{w}\right)=&\underbrace{\log_2\left(1+{|\hat{\mathbf{h}}\mathbf{w}|^2}/{\sigma^2_k}\right)}_{:=f_1(\mathbf{w})}
-\underbrace{\log_2\left(1+\left(\kappa_{1,j}\|\mathbf{\Theta}\mathbf{H}\mathbf{w}\|^2+\kappa_{2,j}\|\mathbf{w}\|^2\right)\ln\varepsilon_k^{-1}\right) }_{:=f_2(\mathbf{w})}.
\end{align}
Based on~\eqref{eq: first_order_alg_fw}, we first derive a lower bound of $f_1(\mathbf{w})$ by applying the linear interpolation with the help of an auxiliary function $g_1(x,y):=-\log_2(1-|x|^2/y)$, which is a convex function based on~\cite{J_Ali17ConvexAux}. 
Then by using the first-order Taylor series expansion of $g_1(x,y)$ at any fixed point $(\hat{x},\hat{y})\in\mathcal{S}$, we have
\begin{equation}\label{ineq:taylor_expa}
\begin{split}
g_1(x,y)&\geq g_1(\hat{x},\hat{y})+(x-\hat{x})\nabla_{\hat{x}} g_1(x,\hat{y})+
(y-\hat{y})\nabla_{\hat{y}} g_1(\hat{x},y)
\\
&=g_1(\hat{x},\hat{y})+\frac{2\Re\{\hat{x}(x-\hat{x})\}}{(\hat{y}-|\hat{x}|^2)\ln 2}-\frac{(y-\hat{y})|\hat{x}|^2}{\hat{y}(\hat{y}-|\hat{x}|^2)\ln2}.
\end{split}
\end{equation}
By substituting $b:=y-|{x}|^2$ and $\hat{b}:=\hat{y}-|\hat{x}|^2$ into~\eqref{ineq:taylor_expa}, we obtain
\begin{equation}\label{ineq: Taylor_f1_LB}
\log_2\left(1+\frac{|{x}|^2}{b}\right)\geq \log_2\left( 1+\frac{|\hat{x}|^2}{\hat{b}}\right)+\frac{2\Re\{\hat{x}x\}}{\hat{b}\ln2}
-\frac{(b+|{x}|^2)|\hat{x}|^2}{(\hat{b}+|\hat{x}|^2)\hat{b}\ln 2}-\frac{|\hat{x}|^2}{\hat{b}\ln2}.
\end{equation}
Finally, by putting $x=\hat{\mathbf{h}}\mathbf{w}$, $\hat{x}=\hat{\mathbf{h}}\mathbf{w}^{(t)}$, $b=\sigma^2_k$ and $\hat{b}=\sigma^2_k$ into~\eqref{ineq: Taylor_f1_LB}, a lower bound of $f_1(\mathbf{w})$ is obtained as $f_1(\mathbf{w})\geq R^{(t)}_1(\mathbf{w})$ with the equality holds at $\mathbf{w}=\mathbf{w}^{(t)}$, where $R^{(t)}_1(\mathbf{w})$ is shown in~\eqref{eq:R1_auxi}.

On the other hand, we provide an upper bound of $f_2(\mathbf{w})$ with the help of an auxiliary function $g_2(z):=\log_2(1+z)$.
Since $g_2(z)$ is a concave function on $z\geq 0$, by using the first-order Taylor series expansion of $g_2(z)$ at any fixed point $\hat{z}\geq 0$, we have
\begin{equation}\label{ineq:Taylor_f2_ub}
\log_2(1+z)\leq \log_2(1+\hat{z})+\frac{(z-\hat{z})}{(1+\hat{z})\ln2}.
\end{equation}
By putting $z=\left(\kappa_{1,j}\|\mathbf{\Theta}\mathbf{H}\mathbf{w}\|^2+\kappa_{2,j}\|\mathbf{w}\|^2\right)\ln\varepsilon_k^{-1}$ and $\hat{z}=\left(\kappa_{1,j}\|\mathbf{\Theta}\mathbf{H}\mathbf{w}^{(t)}\|^2+\kappa_{2,j}\|\mathbf{w}^{(t)}\|^2\right)\ln\varepsilon_k^{-1}$ into~\eqref{ineq:Taylor_f2_ub}, we obtain an upper bound of 
$f_2(\mathbf{w})$ as $f_2(\mathbf{w})\leq R^{(t)}_2(\mathbf{w})$
with the equality holds at $\mathbf{w}=\mathbf{w}^{(t)}$, where 
$R^{(t)}_2(\mathbf{w})$ is shown in~\eqref{eq:R2_auxli}.

Based on above discussion, a lower bound for $f\left(\mathbf{w}\right)$ can be obtained as $f\left(\mathbf{w}\right) \geq R_1^{(t)}(\mathbf{w})-R_2^{(t)}(\mathbf{w})$ with the equality holds at $\mathbf{w}=\mathbf{w}^{(t)}$.

\section{Proof of Lemma~\ref{them: optimal_w_Approxi}}\label{proof:theorem_primal_dual}
Based on feasible set $\mathcal{P}_{\mathbf{x}}$, the Lagrangian function of problem~\eqref{opt:project_w} is given by
\begin{align}\label{eq:lagar_dualFunc}
\mathcal{L}\left(\mathbf{x},\zeta\right)=   \|\mathbf{x}-\mathbf{x}^{(i+\frac{1}{2})}\|^2
+\zeta \left(\|\mathbf{x}\|^2-P_\mathrm{max}\right),
\end{align}
where $\zeta$ is the corresponding dual variable. 
The optimal solutions $\mathbf{x}^{\diamond}$ and $\zeta^{\diamond}$ must satisfy the following KKT conditions:
$(1+\zeta^{\diamond})\mathbf{x}^{\diamond} = \mathbf{x}^{(i+\frac{1}{2})}$, 
$\zeta^{\diamond} \left(\|\mathbf{x}^{\diamond} \|^2-P_\mathrm{max}\right)=0$,  
$\|\mathbf{x}^{\diamond}\|^2-P_\mathrm{max}\leq 0$, and $\zeta^{\diamond}\geq 0$.
Therefore, $\mathbf{x}^{\diamond}$ is derived as
$\mathbf{x}^{\diamond} =\mathbf{x}^{(i+\frac{1}{2})}/(1+\zeta^{\diamond})$. Together with dual feasibility $\zeta\geq 0$, $\zeta^{\diamond}$ is given by
$\zeta^{\diamond}=({\|\mathbf{x}^{(i+\frac{1}{2})}\|}/{\sqrt{P_\mathrm{max}}}-1)\mathbb{I}(\zeta>0)$, 
where $\mathbb{I}(H)$ is the indicator function with $\mathbb{I}(H)=1$ if the event \textit{H} occurs and $\mathbb{I}(H)=0$ otherwise.
By putting $\zeta^{\diamond}$ into $\mathbf{x}^{\diamond}$, the optimal $\mathbf{x}^{\diamond}$ is given by
\begin{equation}
\begin{split}
\mathbf{x}^{\diamond}=&\left\{\begin{array}{ll}
\mathbf{x}^{(i+\frac{1}{2})}
, & \!\!\mathrm{if} ~\|\mathbf{x}^{(i+\frac{1}{2})}\|^2\leq P_\mathrm{max}, \\
\frac{\sqrt{P_\mathrm{max}}}{\left\|\mathbf{x}^{(i+\frac{1}{2})}\right\|}\mathbf{x}^{(i+\frac{1}{2})}, &\!\!\mathrm{if} ~\|\mathbf{x}^{(i+\frac{1}{2})}\|^2> P_\mathrm{max}.
\end{array}\right.
\end{split}
\end{equation}
Therefore, the optimal solution to~\eqref{opt:project_w} is obtained as shown in Lemma~\ref{them: optimal_w_Approxi}.

\section{Proof of Theorem~\ref{theorm: convergence_property}}
\label{proof:theorem_KKT}
Firstly, it is noticed that problem~\eqref{eq:obj_approx_dual_problem} is solved via accelerated PG method. Since the objective function is smooth concave and the constraint set is closed and convex, the iteration with respect to $i$ according to~\eqref{eq:acce_PG_moment}-\eqref{eq:tunned_para_PG} is guaranteed to converge to the global optimal point of~\eqref{eq:obj_approx_dual_problem}~\cite[Theorem 4.4]{J_BeckAmir09AFIS}. 
Since~\eqref{eq:obj_approx_dual_problem} is strongly concave over $\mathbf{w}$, the obtained global optimal point must satisfy KKT conditions of~\eqref{eq:obj_approx_dual_problem} and the optimal $\mathbf{w}^{*}$ is a KKT solution to problem~\eqref{eq:obj_approx_dual_problem}.
Moreover, it is known that problem~\eqref{eq:obj_approx_dual_problem} is reformulated from~\eqref{opt:CCR_D1_inner_Appx} based on the quadratic  transformation and the objective function of~\eqref{opt:CCR_D1_inner_Appx} is in concave-convex form.
As a result, as iteration number $l$ increases, the solution is guaranteed to converge to a local optimal solution to~\eqref{opt:CCR_D1_inner_Appx}~\cite{J_Yuwei18FP}. Finally, notice that~\eqref{opt:CCR_D1_inner_Appx} is the $t^{th}$ subproblem with PFP, to prove the convergence of the iteration with respect to $t$, we establish the following property:
\begin{lemma}\label{lem: gradient_inner_approx}
Denote the gradient of $f\left(\mathbf{w}\right)$, $R_1^{(t)}(\mathbf{w})$ and $R_2^{(t)}(\mathbf{w})$ as $\nabla f\left(\mathbf{w}\right) $, $\nabla R_1^{(t)}(\mathbf{w})$ and $\nabla R_2^{(t)}(\mathbf{w})$, respectively, we have 
$\nabla f\left(\mathbf{w}^{(t)}\right) = \nabla R_1^{(t)}(\mathbf{w}^{(t)})-\nabla R_2^{(t)}(\mathbf{w}^{(t)})$.
\end{lemma}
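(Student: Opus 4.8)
The plan is to exploit the fact, established in the proof of Lemma~\ref{lem:Approx_Obj_D1} (Appendix~\ref{lem:App_LB_obj}), that $R_1^{(t)}$ and $R_2^{(t)}$ are not merely bounds but \emph{tangent} bounds: $R_1^{(t)}(\mathbf{w})\le f_1(\mathbf{w})$ and $R_2^{(t)}(\mathbf{w})\ge f_2(\mathbf{w})$ hold for every $\mathbf{w}$, with equality precisely at $\mathbf{w}=\mathbf{w}^{(t)}$. The guiding principle is that a smooth function touching another smooth function from one side must share its gradient at the contact point. I would make this rigorous through a first-order optimality argument rather than by grinding out and comparing the closed-form gradients.

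First I would introduce the gap functions $\phi_1(\mathbf{w}):=f_1(\mathbf{w})-R_1^{(t)}(\mathbf{w})$ and $\phi_2(\mathbf{w}):=R_2^{(t)}(\mathbf{w})-f_2(\mathbf{w})$. By Lemma~\ref{lem:Approx_Obj_D1} both are nonnegative for all $\mathbf{w}\in\mathbb{C}^N$ and vanish at $\mathbf{w}=\mathbf{w}^{(t)}$. Hence $\mathbf{w}^{(t)}$ is an unconstrained global minimizer of each $\phi_i$ over the open domain $\mathbb{C}^N$. Since $f_1$, $f_2$, $R_1^{(t)}$ and $R_2^{(t)}$ are all smooth (compositions of the logarithm with quadratic forms of $\mathbf{w}$, as seen in~\eqref{eq: first_order_alg_fw} and~\eqref{eq:R1_auxi}--\eqref{eq:R2_auxli}), the gap functions are differentiable, and first-order optimality forces $\nabla\phi_1(\mathbf{w}^{(t)})=\mathbf{0}$ and $\nabla\phi_2(\mathbf{w}^{(t)})=\mathbf{0}$. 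These read $\nabla f_1(\mathbf{w}^{(t)})=\nabla R_1^{(t)}(\mathbf{w}^{(t)})$ and $\nabla f_2(\mathbf{w}^{(t)})=\nabla R_2^{(t)}(\mathbf{w}^{(t)})$, respectively.

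Combining the two identities with the decomposition $f(\mathbf{w})=f_1(\mathbf{w})-f_2(\mathbf{w})$ from~\eqref{eq: first_order_alg_fw}, linearity of the gradient yields $\nabla f(\mathbf{w}^{(t)})=\nabla f_1(\mathbf{w}^{(t)})-\nabla f_2(\mathbf{w}^{(t)})=\nabla R_1^{(t)}(\mathbf{w}^{(t)})-\nabla R_2^{(t)}(\mathbf{w}^{(t)})$, which is exactly the claim. This identity is the crucial ingredient for the subsequent convergence analysis, since it certifies that the surrogate $R_1^{(t)}-R_2^{(t)}$ matches $f$ to first order at the expansion point.

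The only point that needs care is the complex nature of $\mathbf{w}$. I would handle it by viewing each real-valued gap $\phi_i$ as a function of the $2N$ real coordinates $(\Re\mathbf{w},\Im\mathbf{w})$, so that the vanishing of its ordinary gradient at the interior global minimizer is unambiguous; this is equivalent to the vanishing of the Wirtinger gradient $\nabla_{\mathbf{w}}\phi_i$ employed elsewhere in the paper. I expect this bookkeeping, rather than any substantive analytic difficulty, to be the main thing to get right, since the tangency argument itself is immediate once the global minimizer property is recognized. As a sanity check one could instead differentiate the explicit expressions~\eqref{eq:R1_auxi}--\eqref{eq:R2_auxli} directly and substitute $\mathbf{w}=\mathbf{w}^{(t)}$, but the optimality argument is cleaner and avoids the error-prone algebra of the chain rule through the quadratic forms.
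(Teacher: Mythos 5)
Your proof is correct, but it takes a genuinely different route from the paper. The paper proves this lemma by brute force: it writes out the closed-form Wirtinger gradient $\nabla f(\mathbf{w})$ from~\eqref{eq: first_order_alg_fw}, differentiates the explicit expressions~\eqref{eq:R1_auxi}--\eqref{eq:R2_auxli} to get $\nabla R_1^{(t)}(\mathbf{w})$ and $\nabla R_2^{(t)}(\mathbf{w})$, substitutes $\mathbf{w}=\mathbf{w}^{(t)}$ into all three, and checks that the terms match. Your argument replaces that algebra with the standard tangency principle of majorization--minimization: a nonnegative, smooth gap function that vanishes at an interior point has zero gradient there. Two remarks. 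First, your reliance on the \emph{separate} bounds $f_1\geq R_1^{(t)}$ and $f_2\leq R_2^{(t)}$ (which are indeed established inside Appendix~\ref{lem:App_LB_obj}, though not stated in Lemma~\ref{lem:Approx_Obj_D1} itself) is more than you need: applying your own argument once to the single gap $\phi(\mathbf{w}):=f(\mathbf{w})-\bigl(R_1^{(t)}(\mathbf{w})-R_2^{(t)}(\mathbf{w})\bigr)$, which is nonnegative and vanishes at $\mathbf{w}^{(t)}$ by the statement of Lemma~\ref{lem:Approx_Obj_D1} alone, gives the conclusion in one step. Second, a minor wording issue: equality need not hold \emph{precisely} (i.e., only) at $\mathbf{w}^{(t)}$, but your argument only uses that it holds \emph{at} $\mathbf{w}^{(t)}$ together with the global inequality, so nothing breaks. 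What each approach buys: yours is cleaner, immune to differentiation slips, and exposes the general reason gradient consistency holds for any smooth tangent surrogate (which is really what~\cite{J_Marks78AGIA} needs); the paper's computation, on the other hand, produces the explicit gradient formulas that are reused in the algorithm and makes the lemma self-contained without appealing to interior first-order optimality over $\mathbb{C}^N\simeq\mathbb{R}^{2N}$ — a point you correctly flag as the one place requiring care in the complex setting.
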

\begin{proof}
From~\eqref{eq: first_order_alg_fw}, $\nabla f\left(\mathbf{w}\right) $ is given by
\begin{equation}\label{eq:Grad_f_proof}
\nabla f\left(\mathbf{w}\right) =\frac{2}{\ln 2}\left(\frac{(\hat{\mathbf{h}}\mathbf{w})^H\hat{\mathbf{h}}^H}{\sigma^2_k+|\hat{\mathbf{h}}\mathbf{w}|^2}-\frac{(\kappa_{1,j}(\mathbf{\Theta}\mathbf{H})^H\mathbf{\Theta}\mathbf{H}\mathbf{w}+\kappa_{2,j}\mathbf{w})\ln\varepsilon_k^{-1}}{1+\left(\kappa_{1,j}\|\mathbf{\Theta}\mathbf{H}\mathbf{w}\|^2+\kappa_{2,j}\|\mathbf{w}\|^2\right)\ln\varepsilon_k^{-1}}\right).
\end{equation}
Based on Lemma~\ref{lem:Approx_Obj_D1}, $\nabla R_1^{(t)}(\mathbf{w})$ and $\nabla R_2^{(t)}(\mathbf{w})$ are respectively given by
\begin{equation}
\nabla R_1^{(t)}(\mathbf{w})=\frac{2(\hat{\mathbf{h}}\mathbf{w}^{(t)})^H\hat{\mathbf{h}}^H}{\sigma^2_k\ln 2}-\frac{2(\hat{\mathbf{h}}\mathbf{w})^H\hat{\mathbf{h}}^H|\hat{\mathbf{h}}\mathbf{w}^{(t)}|^2}{(\sigma^2_k+|\hat{\mathbf{h}}\mathbf{w}^{(t)}|^2)\sigma^2_k\ln 2},
\end{equation}
\begin{equation}\label{eq:gradient_R2}
\nabla R_2^{(t)}(\mathbf{w})=\frac{2(\kappa_{1,j}(\mathbf{\Theta}\mathbf{H})^H\mathbf{\Theta}\mathbf{H}\mathbf{w}+\kappa_{2,j}\mathbf{w})\ln\varepsilon_k^{-1}}{\left(1+\left(\kappa_{1,j}\|\mathbf{\Theta}\mathbf{H}\mathbf{w}^{(t)}\|^2+\kappa_{2,j}\|\mathbf{w}^{(t)}\|^2\right)\ln\varepsilon_k^{-1}\right)\ln2}.
\end{equation}
By substituting $\mathbf{w}=\mathbf{w}^{(t)}$ into~\eqref{eq:Grad_f_proof}-\eqref{eq:gradient_R2}, we can verify that the equality $\nabla f\left(\mathbf{w}^{(t)}\right)=\nabla R_1^{(t)}(\mathbf{w}^{(t)})-\nabla R_2^{(t)}(\mathbf{w}^{(t)})$ holds, which completes the proof.
\end{proof}
\noindent Based on Lemma~\ref{lem:Approx_Obj_D1} and Lemma~\ref{lem: gradient_inner_approx}, it is known that the constructed lower bound of $f(\mathbf{w})$ satisfies the gradient consistency. As a result, the convergence with respect to $t$ is a KKT point of $\mathcal{D}1$~\cite[Theorem 1]{J_Marks78AGIA}.

To sum up, there is no performance loss from~\eqref{eq:obj_approx_dual_problem} to~\eqref{opt:CCR_D1_inner_Appx}, and the sequence of solutions obtained based on~\eqref{opt:CCR_D1_inner_Appx} with the PFP iterations converges to a KKT solution to $\mathcal{D}1$. Therefore, the sequence of solutions generated from Algorithm~\ref{alg:PG_Path} converges to a KKT point, which is at least a local optimal point of $\mathcal{D}1$.

\section{Derivation of a lower bound of $\Upsilon(\mathbf{v})$
}\label{lem:CF_lowerbound_AM}
From~\eqref{obj:Q2_theta_v}, $\Upsilon(\mathbf{v})$ can be rewritten as 
\begin{equation}
\Upsilon(\mathbf{v})=\underbrace{\log_2\left(1+{|\mathbf{v}^H\mathbf{A}_k\mathbf{w}|^2}/{\sigma^2_k}\right)}_{:=\Upsilon_1(\mathbf{v})}-\underbrace{\log_2\left(1+(\kappa_{1,j}|\mathbf{w}^H\hat{\mathbf{H}}^H\mathbf{v}|^2+\kappa_{2,j}\|\mathbf{w}\|^2)\ln\varepsilon_k^{-1}\right)}_{:=\Upsilon_2(\mathbf{v})}.
\end{equation}
Using the procedures similar to Appendix~\ref{lem:App_LB_obj}, given any fixed $\mathbf{v}^{(t)}$, a lower bound of $\Upsilon_1(\mathbf{v})$ and an upper bound of $\Upsilon_2(\mathbf{v})$ can be derived as
\begin{align}\label{eq:lower_ups1}
\Upsilon_1(\mathbf{v})\geq& \log_2\left(1+{|(\mathbf{v}^{(t)})^H\mathbf{A}_k\mathbf{w}|^2/\sigma^2_k}\right)+\frac{2\Re\{\left((\mathbf{v}^{(t)})^H\mathbf{A}_k\mathbf{w}\right)^H(\mathbf{v}^H\mathbf{A}_k\mathbf{w})\}}{\sigma^2_k\ln 2}\nonumber \\
&-\frac{\left(\sigma^2_k+|\mathbf{v}^H\mathbf{A}_k\mathbf{w}|^2\right)|(\mathbf{v}^{(t)})^H\mathbf{A}_k\mathbf{w}|^2}{(\sigma^2_k+|(\mathbf{v}^{(t)})^H\mathbf{A}_k\mathbf{w}|^2)\sigma^2_k\ln2}-\frac{|(\mathbf{v}^{(t)})^H\mathbf{A}_k\mathbf{w}|^2}{\sigma^2_k\ln2},
\end{align}
\begin{align}\label{eq:upper_ups2}
\Upsilon_2(\mathbf{v})\leq& \log_2\left(1+(\kappa_{1,j}|\mathbf{w}^H\hat{\mathbf{H}}^H\mathbf{v}^{(t)}|^2+\kappa_{2,j}\|\mathbf{w}\|^2)\ln\varepsilon_k^{-1}\right)\nonumber\\
&+\frac{\kappa_{1,j}(|\mathbf{w}^H\hat{\mathbf{H}}^H\mathbf{v}|^2-|\mathbf{w}^H\hat{\mathbf{H}}^H\mathbf{v}^{(t)}|^2)\ln\varepsilon_k^{-1}}{(1+(\kappa_{1,j}|\mathbf{w}^H\hat{\mathbf{H}}^H\mathbf{v}^{(t)}|^2+\kappa_{2,j}\|\mathbf{w}\|^2)\ln\varepsilon_k^{-1})\ln2},
\end{align}
with the equality holds at $\mathbf{v}=\mathbf{v}^{(t)}$.
Based on~\eqref{eq:lower_ups1} and~\eqref{eq:upper_ups2}, a lower bound of $\Upsilon(\mathbf{v})$, denoted by $\hat{\Upsilon}^{(t)}(\mathbf{v})$, can be obtained as 
\begin{align}\label{eq:low_SCA_ups}
\hat{\Upsilon}^{(t)}(\mathbf{v})=&\log_2\left(1+{|(\mathbf{v}^{(t)})^H\mathbf{A}_k\mathbf{w}|^2/\sigma^2_k}\right)+\frac{2\Re\{\left((\mathbf{v}^{(t)})^H\mathbf{A}_k\mathbf{w}\right)^H(\mathbf{v}^H\mathbf{A}_k\mathbf{w})\}}{\sigma^2_k\ln 2}-\frac{|(\mathbf{v}^{(t)})^H\mathbf{A}_k\mathbf{w}|^2}{\sigma^2_k\ln2}\nonumber\\ &-\!\frac{\left(\sigma^2_k+|\mathbf{v}^H\mathbf{A}_k\mathbf{w}|^2\right)|(\mathbf{v}^{(t)})^H\mathbf{A}_k\mathbf{w}|^2}{(\sigma^2_k+|(\mathbf{v}^{(t)})^H\mathbf{A}_k\mathbf{w}|^2)\sigma^2_k\ln2}\!-\!\log_2\left(1+(\kappa_{1,j}|\mathbf{w}^H\hat{\mathbf{H}}^H\mathbf{v}^{(t)}|^2+\kappa_{2,j}\|\mathbf{w}\|^2)\ln\varepsilon_k^{-1}\right)\nonumber\\
&-\!\frac{\kappa_{1,j}(|\mathbf{w}^H\hat{\mathbf{H}}^H\mathbf{v}|^2-|\mathbf{w}^H\hat{\mathbf{H}}^H\mathbf{v}^{(t)}|^2)\ln\varepsilon_k^{-1}}{(1+(\kappa_{1,j}|\mathbf{w}^H\hat{\mathbf{H}}^H\mathbf{v}^{(t)}|^2+\kappa_{2,j}\|\mathbf{w}\|^2)\ln\varepsilon_k^{-1})\ln2}.
\end{align}

\section{Proof of Theorem~\ref{theorm: convergence_SCAmanifold}}\label{Proof:Theorm_4}
Notice that the inner iteration is to solve a geodesically concave optimization problem, and the iteration with respect to $l$ is guaranteed to converge to a local optimal solution to~\eqref{obj:opt_geo_concave}~\cite{C_Accele_remani18Alg}. For outer iteration with the PFP, since the constructed concave functions $\{\hat{\Upsilon}^{(t)}(\mathbf{v})\}_{t\in\mathbb{N}}$ satisfy $ \nabla \hat{\Upsilon}^{(t)}(\mathbf{v}^{(t)})=\nabla \Upsilon\left(\mathbf{v}^{(t)}\right) $ and $\Upsilon(\mathbf{v})\geq \hat{\Upsilon}^{(t)}(\mathbf{v})$ with equality holds at $\mathbf{v}=\mathbf{v}^{(t)}$, $\hat{\Upsilon}^{(t)}(\mathbf{v})$ satisfies the gradient consistency. Therefore, the iteration over $t$ converges to a stationary point of $\mathcal{Q}2$~\cite{J_Marks78AGIA}. 	
Notice that there is no performance loss for solving~\eqref{obj:opt_geo_concave}, and the sequence of solutions to~\eqref{obj:opt_geo_concave} with the PFP converges to a stationary solution to $\mathcal{Q}2$. Therefore, the sequence of solutions generated from Algorithm~\ref{alg:CG_obligueManifold} converges to a stationary point.

\section{Proof of Theorem~\ref{tem:conver_AM}} \label{proof_tem_convergence}
Define $\mathbf{\Theta}_r$ and $\mathbf{w}_r$ as the solution generated from Algorithm~\ref{alg:PG_Path} and~\ref{alg:CG_obligueManifold}, respectively, under AM framework at the $r^{th}$ iteration with the corresponding objective function of $\mathcal{P}2^{[k,j]}$ denoted by $Y(\mathbf{\Theta}_r,\mathbf{w}_r)$.
To prove Theorem~\ref{tem:conver_AM}, we provide the following property. 
\begin{lemma}\label{lem:limitPoint_convergence}
The sequence of solutions $\{\mathbf{\Theta}_{r},\mathbf{w}_{r}\}_{r\in\mathbb{N}}$ is bounded and must have a limit point $(\mathbf{\Theta}^*,\mathbf{w}^*)$. 
\end{lemma}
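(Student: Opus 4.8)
The plan is to show that each of the two iterate sequences lives in a compact feasible set, so that the joint sequence automatically admits a limit point by the Bolzano--Weierstrass theorem. The argument splits naturally into a feasibility claim for the beamformer iterates, a feasibility claim for the phase-shift iterates, and a concluding compactness step.

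First I would verify that the beamformer iterates remain feasible at every round. The update of $\mathbf{w}$ is produced by Algorithm~\ref{alg:PG_Path}, whose inner accelerated PG step terminates with the projection of Lemma~\ref{them: optimal_w_Approxi} onto the set $\mathcal{P}_{\mathbf{x}}=\{\mathbf{x}\,|\,\|\mathbf{x}\|^2\le P_\mathrm{max}\}$. Hence $\|\mathbf{w}_r\|^2\le P_\mathrm{max}$ holds for every $r$, so the whole $\mathbf{w}$-sequence lies in the closed Euclidean ball of radius $\sqrt{P_\mathrm{max}}$ in $\mathbb{C}^N\cong\mathbb{R}^{2N}$, which is closed and bounded, hence compact.

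Next I would treat the phase-shift iterates. Each update of $\mathbf{\Theta}$, equivalently of $\mathbf{v}=[\mathbf{e}^T,1]^T$, is produced by Algorithm~\ref{alg:CG_obligueManifold}, whose updates~\eqref{eq:update_v_stanRO}, \eqref{eq:acce_Auxili_D_remina} and~\eqref{eq:acce_RM_z} are all exponential maps on the oblique manifold. By construction these retractions keep the iterate on the manifold, so that $|\mathbf{\Theta}_{r,m,m}|=1$ for every $m$ and every $r$. The feasible set for $\mathbf{\Theta}$ is therefore the Cartesian product of $M$ unit circles, which is again closed and bounded, hence compact.

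Finally, since the product of two compact sets is compact, the joint sequence $\{(\mathbf{\Theta}_r,\mathbf{w}_r)\}_{r\in\mathbb{N}}$ is contained in a compact subset of a finite-dimensional space; boundedness is then immediate, and Bolzano--Weierstrass yields a convergent subsequence whose limit $(\mathbf{\Theta}^*,\mathbf{w}^*)$ is the desired limit point. The only non-mechanical step is confirming that the projection of Lemma~\ref{them: optimal_w_Approxi} and the manifold retractions genuinely preserve feasibility; once this is in hand the compactness conclusion is routine. The main subtlety to watch is that feasibility must be checked for the actual algorithm outputs (the converged inner iterates $\mathbf{x}^\diamond$ and $\mathbf{z}^\diamond$) rather than only for intermediate accelerated steps, but since both feasible sets are closed this follows by continuity of the update maps.
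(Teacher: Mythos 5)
Your proposal is correct, and it reaches the conclusion by a genuinely more elementary route than the paper. For the beamformer iterates you and the paper argue identically: the power constraint~\eqref{cons:beamform_power} confines $\{\mathbf{w}_r\}$ to the closed ball of radius $\sqrt{P_\mathrm{max}}$, with feasibility guaranteed by the projection of Lemma~\ref{them: optimal_w_Approxi}. The divergence is in how boundedness of $\{\mathbf{\Theta}_r\}$ is obtained. You observe that the retractions in Algorithm~\ref{alg:CG_obligueManifold} keep every iterate on the oblique manifold, so the sequence lives in the torus defined by the unit-modulus constraints~\eqref{cons:modul_thetaP1}, which is compact; this is direct, self-contained, and needs no property of the optimization beyond feasibility preservation. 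The paper instead invokes the convergence of the $\mathcal{Q}2$ iterates to a compact, connected set of stationary points (citing an external result on oblique-manifold optimization) to deduce boundedness of $\{\mathbf{v}_r\}$ --- a more roundabout argument that depends on the algorithm's convergence behavior rather than on the geometry of the constraint set, and your version is arguably cleaner. There is one structural difference worth noting: before invoking Bolzano--Weierstrass, the paper's proof also establishes the monotone non-decrease of the objective, inequality~\eqref{eq:mon_incre_pro1}, using the saddle-point and stationarity properties from Theorems~\ref{theorm: convergence_property} and~\ref{theorm: convergence_SCAmanifold}. That monotonicity is not logically needed for the statement of this lemma (your omission of it is not a gap), but the paper reuses~\eqref{eq:mon_incre_pro1} in the proof of Lemma~\ref{lem:maximizer_prove_limitpoint} to obtain~\eqref{ineq:continu_Objec_AM}; if your proof replaced the paper's, that monotonicity argument would have to be supplied there instead.
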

\begin{proof}
First, we prove the objective function value $Y(\mathbf{\Theta}_r,\mathbf{w}_r)$ is monotonically increasing as $r$ increases.
According to Theorem~\ref{theorm: convergence_property}, it is known that the obtained solution in Algorithm~\ref{alg:PG_Path} is a local optimal point for $\mathcal{D}1$, and it must be a saddle-point for $\mathcal{P}2^{[k,j]}$. Together with the property of saddle-points, we have the following inequality~\cite{J_Liu09Saddle}
\begin{equation}\label{ineq:AM_conv_B2}
Y(\mathbf{\Theta}_{r},\mathbf{w}_{r+1})\geq 
Y(\mathbf{\Theta}_r,\mathbf{w}_r), r\in\mathbb{N}.
\end{equation}
On the other hand, the optimization over $\mathbf{\Theta}$ is independent on $\mathbf{w}$. Furthermore, according to Theorem~\ref{theorm: convergence_SCAmanifold}, the obtained 
point $\mathbf{\Theta}_{r+1}$ is a stationary point. Hence, we have the following inequality
\begin{equation}\label{ineq:Converge_AM_condi}
Y(\mathbf{\Theta}_{r+1},\mathbf{w}_{r+1})\geq Y(\mathbf{\Theta}_{r},\mathbf{w}_{r+1}), \forall r\in\mathbb{N}.
\end{equation}
Combining~\eqref{ineq:AM_conv_B2} and~\eqref{ineq:Converge_AM_condi}, we conclude that 
\begin{equation}\label{eq:mon_incre_pro1}
Y(\mathbf{\Theta}_{r+1},\mathbf{w}_{r+1})\geq Y(\mathbf{\Theta}_{r},\mathbf{w}_{r})\geq \cdots\geq  Y(\mathbf{\Theta}_{0},\mathbf{w}_{0}), r\in\mathbb{N},
\end{equation}
where $Y(\mathbf{\Theta}_{0},\mathbf{w}_{0})$ is any finite initial value of the objective function. 

Now we prove the boundedness of $\{\mathbf{\Theta}_{r},\mathbf{w}_{r}\}_{r\in\mathbb{N}}$. It is known that 
the constraint~\eqref{cons:beamform_power} can be rewritten as a norm constraint $\|\mathbf{w}\|^2\leq P_\mathrm{max}$. As a result, $\mathbf{w}$ is located in a closed set and the sequence of solutions $\{\mathbf{w}_{r}\}_{r\in \mathbb{N}}$ is bounded by constraint~\eqref{cons:beamform_power}. 
On the other hand, since the obtained solution of $\mathcal{Q}2$ is guaranteed to converge to a maximizer, every solution of $\mathcal{Q}2$ converges to a compact and connected set~\cite{J_Trendafil99OM}. As a result, the set of stationary points is a compact set. Therefore, the sequence of solutions $\{\mathbf{v}_{r}\}_{r\in \mathbb{N}}$ is bounded.
After linear transformation of a vector $\mathbf{v}$ to a matrix $\mathbf{\Theta}$, $\{\mathbf{\Theta}_{r}\}_{r\in \mathbb{N}}$ is also bounded.
Together with non-decreasing property of~\eqref{eq:mon_incre_pro1}, there must exist a limit point $(\mathbf{\Theta}^*,\mathbf{w}^*)$ based on Bolzano-Weierstrass theorem~\cite{B_BartleRobert11}. 
\end{proof}

To further investigate the property of limit point $(\mathbf{\Theta}^*,\mathbf{w}^*)$, we transform the constrained optimization problem $\mathcal{P}2^{[k,j]}$ into an unconstrained problem based on proximal alternating maximization method~\cite{J_Bolte14ConverAnalysis}. 
Specifically, to deal with constraint~\eqref{cons:modul_thetaP1}, an indicator function $\mathbb{I}_1\left(\mathbf{\Theta}\right)$ is introduced with $\mathbb{I}_1( \mathbf{\Theta})=0$ if $\mathbf{\Theta}$ satisfies~\eqref{cons:modul_thetaP1} and $\mathbb{I}_1(\mathbf{\Theta})=-\infty$ otherwise. Similarly, constraint~\eqref{cons:beamform_power} is characterized by an indicator function $\mathbb{I}_2\left(\mathbf{w}\right)$.
With the help of  $\mathbb{I}_1(\mathbf{\Theta})$ and $\mathbb{I}_2\left(\mathbf{w}\right)$, the constrained problem $\mathcal{P}2^{[k,j]}$ can be equivalently written as an unconstrained form: 
\begin{equation}\label{eq:non_constraint_OPt}
  \max_{\mathbf{\Theta},\mathbf{w}}~Y(\mathbf{\Theta},\mathbf{w})+ \mathbb{I}_1(\mathbf{\Theta})+   \mathbb{I}_2\left(\mathbf{w}\right). 
\end{equation}
Then, we can establish the following property based on~\eqref{eq:non_constraint_OPt}.
\begin{lemma}\label{lem:maximizer_prove_limitpoint}
The limit point $(\mathbf{\Theta}^*,\mathbf{w}^*)$ is the maximizer of problem~\eqref{eq:non_constraint_OPt} and satisfies the first-order optimality condition.
\end{lemma}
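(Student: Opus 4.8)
The plan is to leverage the monotonicity and boundedness already established in Lemma~\ref{lem:limitPoint_convergence}, and then to push the block-wise optimality conditions guaranteed by Theorems~\ref{theorm: convergence_property} and~\ref{theorm: convergence_SCAmanifold} to the limit point $(\mathbf{\Theta}^*,\mathbf{w}^*)$. First I would note that since $\{Y(\mathbf{\Theta}_r,\mathbf{w}_r)\}_{r\in\mathbb{N}}$ is monotonically non-decreasing by~\eqref{eq:mon_incre_pro1} and is bounded above (the numerator is a secrecy rate that stays finite under the power budget $\|\mathbf{w}\|^2\le P_\mathrm{max}$, while the denominator is bounded below by $P_a+KP_c+MP_s>0$), the monotone convergence theorem guarantees that the objective-value sequence converges to a finite limit $Y^*$. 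Extracting the convergent subsequence $\{\mathbf{\Theta}_{r_q},\mathbf{w}_{r_q}\}\to(\mathbf{\Theta}^*,\mathbf{w}^*)$ furnished by Lemma~\ref{lem:limitPoint_convergence} and invoking continuity of $Y$ then gives $Y(\mathbf{\Theta}^*,\mathbf{w}^*)=Y^*$.

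Next I would characterize the first-order behaviour of each block along the iterations and pass it to the limit. By Theorem~\ref{theorm: convergence_property}, for every $r$ the point $\mathbf{w}_{r+1}$ returned by Algorithm~\ref{alg:PG_Path} is a KKT point of $\mathcal{D}1$ with $\mathbf{\Theta}=\mathbf{\Theta}_r$ fixed, i.e. $\mathbf{0}\in\nabla_{\mathbf{w}}Y(\mathbf{\Theta}_r,\mathbf{w}_{r+1})+\partial\mathbb{I}_2(\mathbf{w}_{r+1})$, where $\partial\mathbb{I}_2$ is the normal cone to the power ball $\{\mathbf{w}:\|\mathbf{w}\|^2\le P_\mathrm{max}\}$. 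Similarly, by Theorem~\ref{theorm: convergence_SCAmanifold}, $\mathbf{\Theta}_{r+1}$ returned by Algorithm~\ref{alg:CG_obligueManifold} is a stationary point of $\mathcal{Q}2$ with $\mathbf{w}_{r+1}$ fixed, i.e. $\mathbf{0}\in\nabla_{\mathbf{\Theta}}Y(\mathbf{\Theta}_{r+1},\mathbf{w}_{r+1})+\partial\mathbb{I}_1(\mathbf{\Theta}_{r+1})$, where $\partial\mathbb{I}_1$ encodes the normal cone of the oblique manifold. To transfer these block relations to $(\mathbf{\Theta}^*,\mathbf{w}^*)$ I would invoke the sufficient-decrease property of the proximal alternating scheme~\cite{J_Bolte14ConverAnalysis} to conclude that successive iterates satisfy $\|\mathbf{w}_{r+1}-\mathbf{w}_r\|\to0$ and $\|\mathbf{\Theta}_{r+1}-\mathbf{\Theta}_r\|\to0$, so that both $\mathbf{w}_{r_q}$ and $\mathbf{w}_{r_q+1}$ converge to $\mathbf{w}^*$ and likewise for $\mathbf{\Theta}$. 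Passing to the limit using continuity of $\nabla_{\mathbf{w}}Y$, $\nabla_{\mathbf{\Theta}}Y$ together with graph-closedness of the normal-cone maps $\partial\mathbb{I}_1,\partial\mathbb{I}_2$ yields $\mathbf{0}\in\nabla_{\mathbf{w}}Y(\mathbf{\Theta}^*,\mathbf{w}^*)+\partial\mathbb{I}_2(\mathbf{w}^*)$ and $\mathbf{0}\in\nabla_{\mathbf{\Theta}}Y(\mathbf{\Theta}^*,\mathbf{w}^*)+\partial\mathbb{I}_1(\mathbf{\Theta}^*)$.

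Finally, because $\mathbb{I}_1$ depends only on $\mathbf{\Theta}$ and $\mathbb{I}_2$ only on $\mathbf{w}$ while $Y$ is continuously differentiable, the regularizer $\mathbb{I}_1(\mathbf{\Theta})+\mathbb{I}_2(\mathbf{w})$ is block separable, so the subdifferential sum rule applies and the two block conditions combine into the single stationarity statement $\mathbf{0}\in\nabla Y(\mathbf{\Theta}^*,\mathbf{w}^*)+\partial\mathbb{I}_1(\mathbf{\Theta}^*)\times\partial\mathbb{I}_2(\mathbf{w}^*)$, which is exactly the first-order optimality condition for the unconstrained problem~\eqref{eq:non_constraint_OPt}; the block-wise maximality guaranteed by the two sub-algorithms then identifies $(\mathbf{\Theta}^*,\mathbf{w}^*)$ as a (coordinate-wise) maximizer. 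The main obstacle I anticipate is the transfer of block stationarity across the limit in the presence of the nonsmooth indicator $\mathbb{I}_1$ and the manifold geometry: establishing $\|\mathbf{\Theta}_{r+1}-\mathbf{\Theta}_r\|\to0$ rigorously relies on the proximal term supplying a sufficient-decrease inequality, and one must verify outer semicontinuity of the Riemannian normal-cone map so that the limit of the stationarity inclusions is again a stationarity inclusion. Verifying these regularity properties, rather than the limiting argument itself, is the technically delicate part of the proof.
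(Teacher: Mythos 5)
Your strategy---pass the block-wise first-order \emph{inclusions} produced by Algorithms~\ref{alg:PG_Path} and~\ref{alg:CG_obligueManifold} to the limit point---is genuinely different from the paper's, and it breaks down at exactly the step you flag as delicate. You justify $\|\mathbf{w}_{r+1}-\mathbf{w}_r\|\to0$ and $\|\mathbf{\Theta}_{r+1}-\mathbf{\Theta}_r\|\to0$ by ``the sufficient-decrease property of the proximal alternating scheme''~\cite{J_Bolte14ConverAnalysis}. But the iteration being analyzed is plain alternating maximization: neither Algorithm~\ref{alg:PG_Path} nor Algorithm~\ref{alg:CG_obligueManifold} augments its subproblem with a proximal term penalizing the distance to the previous iterate; the paper invokes~\cite{J_Bolte14ConverAnalysis} only to motivate rewriting $\mathcal{P}2^{[k,j]}$ in the unconstrained indicator form~\eqref{eq:non_constraint_OPt}, not to modify the updates. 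Without a proximal term there is no inequality of the form $Y(\mathbf{\Theta}_{r+1},\mathbf{w}_{r+1})\geq Y(\mathbf{\Theta}_{r},\mathbf{w}_{r})+c\,\|(\mathbf{\Theta}_{r+1},\mathbf{w}_{r+1})-(\mathbf{\Theta}_{r},\mathbf{w}_{r})\|^2$, and a monotone bounded objective sequence by itself does not force successive iterates to coalesce (pure block-coordinate schemes can drift or cycle while the values converge). This matters because your $\mathbf{w}$-inclusion $\mathbf{0}\in\nabla_{\mathbf{w}}Y(\mathbf{\Theta}_r,\mathbf{w}_{r+1})+\partial_{\mathbf{w}}\mathbb{I}_2(\mathbf{w}_{r+1})$ lives at the mixed-index pair $(\mathbf{\Theta}_r,\mathbf{w}_{r+1})$: along a subsequence with $(\mathbf{\Theta}_{r_q},\mathbf{w}_{r_q})\to(\mathbf{\Theta}^*,\mathbf{w}^*)$ you have no guarantee that the companion iterates $\mathbf{\Theta}_{r_q-1}$ or $\mathbf{w}_{r_q+1}$ converge to the same limit, so graph-closedness of the subdifferential maps cannot be applied at $(\mathbf{\Theta}^*,\mathbf{w}^*)$. (Only the $\mathbf{\Theta}$-inclusion, which holds at the matching-index pair $(\mathbf{\Theta}_r,\mathbf{w}_r)$, survives the limit without this extra property.)

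The paper circumvents precisely this obstacle by passing \emph{function-value} inequalities, rather than gradient inclusions, to the limit. From block optimality of each update it writes~\eqref{ineq:conver_w} and~\eqref{ineq:conver_theta}, in which both sides are evaluated at the same iteration index $r+1$; letting $r\to\infty$ then requires only continuity of $Y$ and upper semicontinuity of the indicator functions, cf.~\eqref{ineq:sub_diff_theta}--\eqref{ineq:continu_Objec_AM}, and yields that $\mathbf{\Theta}^*$ maximizes $Y(\cdot,\mathbf{w}^*)+\mathbb{I}_1(\cdot)$ and $\mathbf{w}^*$ maximizes $Y(\mathbf{\Theta}^*,\cdot)+\mathbb{I}_2(\cdot)$, i.e.~\eqref{ineq:convergence_w}--\eqref{ineq:convergence_theta}. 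The first-order conditions~\eqref{eq:first_order_theta}--\eqref{eq:first_order_w} are then read off from this limiting maximality by Fermat's rule, with no need for vanishing successive differences or outer semicontinuity of normal-cone maps. To repair your argument you would have to either add explicit proximal terms to the two subproblems (thereby changing the algorithm whose convergence is being claimed) or replace the limit transfer for the $\mathbf{w}$-block by a value-inequality argument of the paper's type; as written, the proof does not go through.
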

\begin{proof}
Notice that the converged point $\mathbf{w}_{r+1}$ for the $r^{th}$ iteration is a local optimal point of $\mathcal{D}1$ according to Theorem~\ref{theorm: convergence_property}. On the other hand, since $\mathcal{Q}2$ is equivalent to $\mathcal{Q}1$ and the obtained solution of $\mathbf{v}$ is a stationary point of $\mathcal{Q}2$ according to Theorem~\ref{theorm: convergence_SCAmanifold}, the converged point $\mathbf{\Theta}_{r+1}$ for the $r^{th}$ iteration is a stationary point of $\mathcal{Q}1$.
Furthermore, since constraint sets of $\mathbf{w}$ and $\mathbf{\Theta}$ are independent and separable, functions $\mathbb{I}_1(\mathbf{\Theta})$ and $\mathbb{I}_2\left(\mathbf{w}\right)$ are independent. Hence, $(\mathbf{\Theta}_{r+1},\mathbf{w}_{r+1})$ is the maximizer of~\eqref{eq:non_constraint_OPt} for the $r^{th}$ iteration. As a result, we have 
\begin{equation}
Y(\mathbf{\Theta},\mathbf{w})+ \mathbb{I}_1(\mathbf{\Theta})+   \mathbb{I}_2\left(\mathbf{w}\right)\leq Y(\mathbf{\Theta}_{r+1},\mathbf{w}_{r+1})+ \mathbb{I}_1(\mathbf{\Theta}_{r+1})+   \mathbb{I}_2\left(\mathbf{w}_{r+1}\right).
\end{equation}
Accordingly, the following inequalities hold:
\begin{equation}\label{ineq:conver_w}
Y(\mathbf{\Theta},\mathbf{w}_{r+1})+ \mathbb{I}_1(\mathbf{\Theta})\leq Y(\mathbf{\Theta}_{r+1},\mathbf{w}_{r+1})+ \mathbb{I}_1(\mathbf{\Theta}_{r+1}),~\forall \mathbf{\Theta},
\end{equation}
\begin{equation}\label{ineq:conver_theta}
Y(\mathbf{\Theta}_{r+1},\mathbf{w})+   \mathbb{I}_2\left(\mathbf{w}\right)\leq 
Y(\mathbf{\Theta}_{r+1},\mathbf{w}_{r+1})+   \mathbb{I}_2\left(\mathbf{w}_{r+1}\right),~\forall \mathbf{w}.
\end{equation}

On the other hand, since constraints~\eqref{cons:modul_thetaP1} and~\eqref{cons:beamform_power} respectively represent separable compact constraint sets, $\mathbb{I}_1(\mathbf{\Theta})$ and $\mathbb{I}_2\left(\mathbf{w}\right)$ are upper semi-continuous functions~\cite{J_Semi_con01Kiwi}. As a result, we have 
\begin{equation}\label{ineq:sub_diff_theta}
\limsup\limits_{r\rightarrow \infty} ~\mathbb{I}_1(\mathbf{\Theta}_{r+1}) \leq \mathbb{I}(\mathbf{\Theta}^*),
\end{equation}
\begin{equation}\label{ineq:sub_diff_w}
\limsup\limits_{r\rightarrow \infty} ~\mathbb{I}_2(\mathbf{w}_{r+1}) \leq \mathbb{I}(\mathbf{w}^*).
\end{equation}
Furthermore, due to the continuity of $Y(\mathbf{\Theta},\mathbf{w})$ over both $\mathbf{\Theta}$ and $\mathbf{w}$, together with the non-decreasing property of~\eqref{eq:mon_incre_pro1}, we have 
\begin{equation}\label{ineq:continu_Objec_AM}
\lim\limits_{r\rightarrow \infty} Y(\mathbf{\Theta}_{r+1},\mathbf{w}_{r+1}) \leq Y(\mathbf{\Theta}^*,\mathbf{w}^*).
\end{equation}
Taking $r\rightarrow \infty$ on both sides of~\eqref{ineq:conver_w}-\eqref{ineq:conver_theta} and applying~\eqref{ineq:sub_diff_theta}-\eqref{ineq:continu_Objec_AM}, we have 
\begin{equation}\label{ineq:convergence_w}
Y(\mathbf{\Theta},\mathbf{w}^*)+ \mathbb{I}_1(\mathbf{\Theta})\leq Y(\mathbf{\Theta}^*,\mathbf{w}^*)+ \mathbb{I}_1(\mathbf{\Theta}^*),~ \forall \mathbf{\Theta},
\end{equation}
\begin{equation}\label{ineq:convergence_theta}
Y(\mathbf{\Theta}^*,\mathbf{w})+   \mathbb{I}_2\left(\mathbf{w}\right)\leq Y(\mathbf{\Theta}^*,\mathbf{w}^*)+ \mathbb{I}_2\left(\mathbf{w}^*\right),~ \forall \mathbf{w},
\end{equation}
which indicate that $\mathbf{\Theta}^*$ is the maximizer of $Y(\mathbf{\Theta},\mathbf{w}^*)+ \mathbb{I}_1(\mathbf{\Theta})$ and $\mathbf{w}^*$ is the maximizer of $Y(\mathbf{\Theta}^*,\mathbf{w})+  \mathbb{I}_2\left(\mathbf{w}\right)$. Hence, we obtain 
 \begin{equation}\label{eq:first_order_theta}
 \mathbf{0}\in \nabla_ {\mathbf{\Theta}}Y(\mathbf{\Theta},\mathbf{w}^*)|_{\mathbf{\Theta}=\mathbf{\Theta}^*}+\partial_\mathbf{\Theta} \mathbb{I}_1(\mathbf{\Theta})|_{\mathbf{\Theta}=\mathbf{\Theta}^*},
 \end{equation}
 \begin{equation}\label{eq:first_order_w}
 \mathbf{0}\in \nabla_ {\mathbf{w}}Y(\mathbf{\Theta}^*,\mathbf{w})|_{\mathbf{w}=\mathbf{w}^*}+\partial_\mathbf{w} \mathbb{I}_2(\mathbf{w})|_{\mathbf{w}=\mathbf{w}^*},
 \end{equation}
where $\partial_\mathbf{\Theta} \mathbb{I}_1(\mathbf{\Theta})$ and $\partial_\mathbf{w} \mathbb{I}_2(\mathbf{w})$ are the limiting subdifferential of the non-smooth functions $\mathbb{I}_1(\mathbf{\Theta})$ and $\mathbb{I}_2(\mathbf{w})$, respectively~\cite{B_MordukhovichBorisS12}. 
Since $\mathbb{I}_1(\mathbf{\Theta})$ does
not depend on $\mathbf{w}$, $\mathbf{0}\in \partial_\mathbf{w} \mathbb{I}_1(\mathbf{\Theta}^*)|_{\mathbf{w}=\mathbf{w}^*}$ holds.
Similarly, we have  $\mathbf{0}\in \partial_\mathbf{\Theta} \mathbb{I}_2(\mathbf{w}^*)|_{\mathbf{\Theta}=\mathbf{\Theta}^*}$.
Together with~\eqref{eq:first_order_theta}-\eqref{eq:first_order_w}, we have 
\begin{equation}
\mathbf{0}\in \nabla_ {\mathbf{\Theta}}Y(\mathbf{\Theta},\mathbf{w}^*)|_{\mathbf{\Theta}=\mathbf{\Theta}^*}+\partial_\mathbf{\Theta} \mathbb{I}_1(\mathbf{\Theta})|_{\mathbf{\Theta}=\mathbf{\Theta}^*}+\partial_\mathbf{\Theta} \mathbb{I}_2(\mathbf{w}^*)|_{\mathbf{\Theta}=\mathbf{\Theta}^*},
\end{equation}
\begin{equation}
\mathbf{0}\in \nabla_ {\mathbf{w}}Y(\mathbf{\Theta}^*,\mathbf{w})|_{\mathbf{w}=\mathbf{w}^*}+\partial_\mathbf{w} \mathbb{I}_1(\mathbf{\Theta}^*)|_{\mathbf{w}=\mathbf{w}^*}+\partial_\mathbf{w} \mathbb{I}_2(\mathbf{w})|_{\mathbf{w}=\mathbf{w}^*}.
\end{equation}
Hence, the limit point $(\mathbf{\Theta}^*,\mathbf{w}^*)$ is the maximizer of~\eqref{eq:non_constraint_OPt} and $(\mathbf{\Theta}^*,\mathbf{w}^*)$ satisfies the first-order optimality condition of~\eqref{eq:non_constraint_OPt} over $\mathbf{\Theta}$ and $\mathbf{w}$. 
\end{proof}
Based on Lemma~\ref{lem:maximizer_prove_limitpoint}, the limit point $(\mathbf{\Theta}^*,\mathbf{w}^*)$ is a critical point of~\eqref{eq:non_constraint_OPt}~\cite{J_Bolte14ConverAnalysis}. Since $\mathcal{P}2^{[k,j]}$ is equivalently written as~\eqref{eq:non_constraint_OPt}, $(\mathbf{\Theta}^*,\mathbf{w}^*)$ is a stationary point of $\mathcal{P}2^{[k,j]}$.

\bibliographystyle{IEEEtran}

\bibliography{paper}

\end{document}